\numberwithin{equation}{section}
\DeclareMathOperator{\tr}{Tr}
\DeclareMathOperator{\hc}{h.c.}
\newtheorem{theorem}{Theorem}[section]
\newtheorem{proposition}[theorem]{Proposition}
\newtheorem{lemma}[theorem]{Lemma}
\theoremstyle{definition}
\newtheorem{remark}[theorem]{Remark}
\newtheorem{assumption}[theorem]{Assumptions}
\newcommand{\dda}{\mathrm{d}}
\newcommand{\de}{\,\dda}
\newcommand{\cc}{\overline}
\renewcommand{\Im}{\textrm{Im}}
\newcommand{\rmd}{\mathrm{d}}
\renewcommand\rho\varrho
\renewcommand\epsilon\varepsilon
\definecolor{darkred}{rgb}{0.9,0,0.3}
\definecolor{darkblue}{rgb}{0,0.3,0.9}
\newcommand{\ii}{\mathrm{i}}
\newcommand{\dd}{\mathrm{d}}
\newcommand*{\deq}{\mathrel{\vcenter{\baselineskip0.65ex \lineskiplimit0pt \hbox{.}\hbox{.}}}=}
\definecolor{vdarkred}{rgb}{0.7,0,0.2}
\definecolor{vdarkblue}{rgb}{0,0.2,0.7}
\renewcommand{\cal}{\mathcal}
\def\fh{{\mathfrak h}}
\def\bR{\mathbb{R}}
\def\bN{\mathbb{N}}
\def\NN{\mathbb{N}}
\def\cU{\mathcal{U}}
\def\cQ{\mathcal{Q}}
\def\cD{\mathcal{D}}
\def\cR{\mathcal{R}}
\def\cW{\mathcal{W}}
\def\cV{\mathcal{V}}
\def\cF{\mathcal{F}}
\def\cG{\mathcal{G}}
\def\cL{\mathcal{L}}
\def\cI{\mathcal{I}}
\def\cN{\mathcal{N}}
\def\cE{\mathcal{E}}
\def\cH{\mathcal{H}}
\def\cS{\mathcal{S}}
\def\cT{\mathcal{T}}
\def\cB{\mathcal{B}}
\def\eps{\varepsilon}
\def\RRR{\mathbb{R}} 
\def\RR{\mathbb{R}}
\def\CC{\mathbb{C}}
\def\wt{\widetilde}
\def\wh#1{\widehat{#1}}
\def\dpr{\partial}
\def\dt{\dpr_t}
\def\id{\mathds{1}} 
\def\fock{\mathscr{F}}
\def \xx{\mathbf x}
\def \yy{\mathbf y}
\def \zz{\mathbf z}
\def \xy{(\xx\longleftrightarrow \yy)}
\def \pphi{\pmb{\phi}}
\def \Ux{U_{t,\xx}}
\def \Uy{U_{t,\yy}}
\def \Vx{\cc{V_{t,\xx}}}
\def \Vy{\cc{V_{t,\yy}}}
\def \Uxs{U_{s,\xx}}
\def \Uys{U_{s,\yy}}
\def \Vxs{\cc{V_{s,\xx}}}
\def \Vys{\cc{V_{s,\yy}}}
\def \xt{\xi_{t}}
\def \xat{\xi_{\h,t}}
\def \xas{\xi_{\h,s}}
\def \Ulx{U_{t, \ell, x}}
\def \Uly{U_{t, \ell, y}}
\def \Vlx{\cc{V_{t, \ell, x}}}
\def \Vly{\cc{V_{t, \ell, y}}}
\def \gpt{\g_t^{\phi_t}}
\def \apt{\a_t^{\phi_t}}
\def \Iplus#1{\cI_{N,t}^{(#1,\eta),>}}
\let\a=\alpha     \let\g=\gamma          
  \let\h=\eta     \let\th=\vartheta        
    \let\n=\nu      \let\x=\xi                \let\r=\rho
\let\s=\sigma             
   \let\o=\omega     
\let\G=\Gamma \let\D=\Delta   \let\Th=\Theta    \let\L=\Lambda    
\let\O=\Omega
\theoremstyle{definition}
\newcommand{\beq}{\begin{equation}}
\newcommand{\eeq}{\end{equation}}
\begin{document}

\title{Dynamics of mean-field bosons at positive temperature}

\author{Marco Caporaletti, Andreas Deuchert, Benjamin Schlein}

\date{\today}

\maketitle

\begin{abstract} 
	We study the time evolution of an initially trapped weakly interacting Bose gas at positive temperature, after the trapping potential has been switched off. It has been recently shown in \cite{DeuSei2021} that the one-particle density matrix of Gibbs states of the interacting trapped gas is given, to leading order in $N$, as $N \to \infty$, by the one of the ideal gas, with the condensate wave function replaced by the minimizer of the Hartree energy functional. We show that this structure is stable with respect to the many-body evolution in the following sense: the dynamics can be approximated in terms of the time-dependent Hartree equation for the condensate wave function and in terms of the free evolution for the thermally excited particles. The main technical novelty of our work is the use of the Hartree–Fock–Bogoliubov equations to define a  fluctuation dynamics. 
\end{abstract}

\setcounter{tocdepth}{2}
\tableofcontents

\section{Introduction and main results}

\subsection{Background and summary}
\label{sec:introduction}
The time evolution of bosonic many-particle systems in the mean-field limit has received a considerable amount of attention in recent years. The main theme in most works is the study of the dynamics of Bose gases that are initially prepared in a ground state (equilibrium state at zero temperature) of a trapped Hamiltonian after the trap has been switched off. It is well-known, see e.g. \cite{LewNamRou2014,Rou2015} and references therein, that under appropriate conditions on the interaction potential, ground states of interacting mean-field systems display complete Bose--Einstein condensation (BEC), that is, the largest eigenvalue $\lambda_N$ of the one-particle density matrix (1-pdm) of the ground state wave function satisfies $\lambda_N/N \to 1$ as $N \to \infty$. It turns out that this structure is stable w.r.t. the dynamics generated by the time-dependent many-particle Schr\"odinger equation. More precisely, it can be shown that the system displays complete BEC also at time $t>0$, where the condensate wave function (the eigenfunction related to the largest eigenvalue of the 1-pdm) is given by the solution to the time-dependent Hartree equation.

The first results in this direction were obtained in \cite{Hepp1974} and then in \cite{GinVel1979a,GinVel1979b,Spohn80}. The case of Coulomb interactions has been studied more recently in \cite{BaGolMau2000,ErdYau2001} and in \cite{ElgSchlein2007} with a relativistic dispersion relation. In \cite{FroKnoPizzo2001,FroKnoSchwarz2009} the convergence towards the Hartree equation has been interpreted as an Egorov type theorem, \cite{AmmaNier2009} focuses on the propagation of Wigner measures, and bounds on the rate of convergence towards the Hartree dynamics have been obtained in  \cite{RodSchlein2009,KnowlesPickl2010,ChenLeeSchl2011,AmmaFalPawi2016}. Once the convergence towards the time-dependent Hartree equation is established, it is natural to ask whether it is possible to make statements about the fluctuations around the Hartree dynamics. As shown in \cite{Hepp1974,GrillMacheMarg2010,GrillMacheMarg2011,BenKirkSchl2013,BuSafSchl2014,LewNamSchlein2015}, for mean-field systems it is possible to approximate the dynamics of particles outside the condensate by the time evolution generated by a Bogoliubov Hamiltonian. This, in particular, allows one to obtain a norm approximation for the wave function, and not only for the 1-pdm. A systematic expansion for correlation functions in terms of the solution of the Hartree equation and a Bogoliubov two-point function that approximates the original dynamics to arbitrary precision has been obtained in \cite{BossPavPicklSoff2020}.

Another interesting scaling limit for the Bose gas is the Gross--Pitaevskii (GP) limit, which can be used to describe modern experiments with cold alkali gases. In contrast to the mean-field limit, where particle collisions are frequent but weak, particles interact rarely but strongly in the GP limit. These strong collisions induce microscopic correlations between the particles that are the main reason why the GP limit is more challenging from a mathematical point of view than the mean-field limit. It has been shown in \cite{LiebSeiYng2000,LiebSei2002} that ground states of trapped gases display complete BEC also in the GP limit. Moreover, in the series of works \cite{ErdSchlYau2006,ErdSchlYau2007,ErdSchlYau2009,ErdSchlYau2010} it has been established that there is complete BEC also at later times and that the condensate wave function is given by the solution to the time-dependent GP equation. For other results in this direction, we refer to \cite{KlaiMach2008,KirpSchlStaff2011,CheHolm2016a,CheHolm2016b,CheHaiPavlSei2014,CheHaiPavlSei2015,Pickl2015,BenOlivSchl2015,JeblLeopPickl2019,BreSchl2019,GriMach2013,Kuz2017,GriMach2017,GriMach2019,NamNap2017a,NamNap2017b,BocCenaSchl2017,BreNamNapSchl2018,ChonZhao2020}. More references concerning the dynamics of BECs in the above scaling limits can be found in the lecture notes \cite{BenPorSchl2015} and in the review article \cite{Nap2021}. 

Here we are interested in the dynamics of mean-field bosons, initially prepared in an equilibrium state at positive temperature (Gibbs state). Just above the critical temperature for BEC, correlation functions of the many-body Gibbs state have been shown to converge, in the mean-field limit, to correlation functions of the invariant measure associated with the nonlinear Hartree equation, see \cite{LewNamRou2015,FroKnoSchleinSoh2017,LewNamRou2021,FroKnoSchleinSoh2021,FroKnoSchleinSoh2022}; observe that, in dimensions $d =2,3$, this requires an appropriate renormalization of the interaction. In \cite{FroKnoSchleinSoh2019}, also time-dependent correlations have been proven to converge. In the present paper, we are primarily interested in initial data prepared at equilibrium below the critical temperature, where the BEC coexists with a thermal cloud and both phases contain a macroscopic number of particles. In \cite{DeuSei2021}, it has been recently shown that at temperatures lower than (but comparable with) the critical temperature of the corresponding ideal (non-interacting) gas, the 1-pdm of the Gibbs state of a trapped mean-field system can be approximated by the 1-pdm of the ideal gas, where the condensate wave function has been replaced by the minimizer of the Hartree energy functional. Our goal here is to investigate the dynamics of such positive temperature states, after the trapping potential has been switched off. In our main theorem we show that the 1-pdm of the evolved state is given, to leading order, by the one of the ideal gas, propagated with the non-interacting time evolution, with the condensate wave function replaced by the solution of the time-dependent Hartree equation. In particular, the number of particles inside the condensate and outside the condensate is preserved by the time evolution, to leading order.

\subsection{Notation}
\label{sec:notation}
We denote by $\langle \cdot, \cdot \rangle$ the inner product in $L^2(\RR^d)$, and by $\mathcal{B}(L^2(\mathbb{R}^d))$ the space of bounded operators on $L^2(\mathbb{R}^d)$. The $L^2$ norm of a function and the norm of an operator in $\mathcal{B}(L^2(\mathbb{R}^d))$ are both denoted by $\Vert \cdot \Vert$. If $p\neq 2$, we denote by $\Vert \cdot \Vert_p$ the $L^p$ norms of a function, and $\Vert A \Vert_{\mathcal{L}^p} = (\tr[ (A^* A)^{p/2} ])^{1/p}$ is the $p$-th Schatten norm of the operator $A \in \mathcal{B}(L^2(\mathbb{R}^d))$, for $p \in [1 , \infty) $. The corresponding Banach spaces of compact operators with finite $\mathcal{L}^p$ norm are denoted by $\mathcal{L}^p$. Sobolev spaces are denoted by $W^{n,p}$ or by $H^n$ if $p=2$, with $\Vert \cdot \Vert_{W^{n,p}}$ and $\Vert \cdot \Vert_{H^n}$ denoting the corresponding Sobolev norms. For the Fourier transform in $\RR^d$ we use the notation, and the convention,
\[\cF f(k)=\hat f (k)=\frac 1 {(2\pi)^{d/2}}\int_{\mathbb{R}^d} e^{\mathrm{i} k \cdot x}f(x)\de x\,.\] For quantities $a,b$ depending on $N$ and/or $t$, we use the notation $a \lesssim b$ to say that there exists a constant $C>0$ with $a \leq C b$. If $a \lesssim b$ and $b \lesssim a$ we write $a \sim b$.
\subsection{The model}
\label{sec:model}
In this article we consider the time evolution of an initially trapped Bose gas, prepared in a positive temperature state (an approximate Gibbs state), after the trapping potential has been switched off. We start by introducing our set-up. 

We will be interested in quantum states with a fluctuating particle number. These states are naturally defined on the bosonic Fock space 
\begin{equation}
	\mathscr{F}(L^2(\mathbb{R}^3)) = \bigoplus_{n=0}^{\infty} L^2_{\mathrm{sym}}(\mathbb{R}^{3n}) \, ,
	\label{eq:bosonicFockspace}
\end{equation}  
where $L^2_{\mathrm{sym}}(\mathbb{R}^{3n})$ denotes the closed linear subspace of $L^2(\mathbb{R}^{3n})$ consisting of those functions $\Psi(x_1, ..., x_n)$ that are invariant under any permutation of the coordinates $x_1, ..., x_N \in \mathbb{R}^3$. As usual, we have $L^2_{\mathrm{sym}}(\mathbb{R}^{0}) = \mathbb{C}$. By $a_x^*$ and $a_x$ we denote the creation and annihilation operators (actually operator-valued distributions) on $\mathscr{F}$ that create and annihilate a particle at point $x \in \mathbb{R}^3$, respectively. They satisfy the canonical commutation relations (CCR) 
\begin{equation*}
	[a_x,a_y^*] = \delta(x-y), \quad \quad [a_x,a_y] = 0 = [a_x^*,a_y^*] \,,
\end{equation*}
where $\delta(x)$ denotes the delta distribution with unit mass at $x=0$. Quantum states on the bosonic Fock space with an expected number of particles equal to $N$ are elements of the set
\begin{equation*}
	\mathcal{S}_N = \{ \Gamma \in \mathcal{B}(\mathscr{F}) \ | \ \Gamma \geq 0, \tr \Gamma = 1, \tr[ \mathcal{N} \Gamma ] = N \} \, ,
\end{equation*}
where $\mathcal{N} = \int_{\mathbb{R}^3} a_x^* a_x \de x$ denotes the particle number operator on $\mathscr{F}$. The one-particle reduced density matrix (1-pdm) $\gamma$ of a state $\Gamma \in \mathcal{S}_N$ is defined via its integral kernel by
\begin{equation*}
	\gamma(x,y) = \tr[ a_y^* a_x \Gamma ]
\end{equation*}
and defines a nonnegative operator on $L^2(\mathbb{R}^3)$ with $\tr \gamma = N$.

The time evolution we are interested in is governed by the Heisenberg equation
\begin{equation}
	\mathrm{i} \partial_t \Gamma_t = [\mathcal{H}_N, \Gamma_t] \,, 
	\label{eq:time-dep-Schroedinger}
\end{equation}
where
\begin{equation}
	\mathcal{H}_N = \int_{\mathbb{R}^3} \nabla a_x^* \nabla a_x \de x + \frac{1}{2 N} \int_{\mathbb{R}^6} v(x-y) a^*_x a^*_y a_y a_x \de x \de y 
	\label{eq:Hamiltonianti}
\end{equation}
and $[A,B] = AB-BA$ denotes the commutator of two linear operators $A$ and $B$. The prefactor $N^{-1}$ in front of $v$ guarantees that, for particles in the condensate, the kinetic and the potential energy are comparable. Any solution to \eqref{eq:time-dep-Schroedinger} with initial condition $\Gamma_0 \in \mathcal{S}_N$ can be written in terms of the strongly continuous unitary group $e^{-\mathrm{i} \mathcal{H}_N t}$ as 
\begin{equation}
	\Gamma_t = e^{-\mathrm{i} \mathcal{H}_N t} \Gamma_0 e^{\mathrm{i} \mathcal{H}_N t} \,.
	\label{eq:solutionschroedingereq}
\end{equation}
The fact that $\mathcal{H}_N$ commutes with $\mathcal{N}$ guarantees $\tr[ \mathcal{N} \Gamma_t ] = \tr[ \mathcal{N} \Gamma_0 ]$. As explained in Section~\ref{sec:introduction}, we are interested in the solution \eqref{eq:solutionschroedingereq} for initial data describing equilibrium states of trapped gases at positive temperature. Let us now recall some known properties of such states.
\subsection{Equilibrium states of trapped Bose gases}
\label{sec:Equilibriumstates}
Equilibrium states of trapped Bose gases can be defined as minimizers, in $\mathcal{S}_N$, of the Gibbs free energy functional 
\begin{equation}
	\mathcal{F}(\Gamma) = \tr [ \mathcal{H}_N^{\mathrm{trap}} \Gamma ] - T S(\Gamma) \, , \quad \text{ with the von Neumann entropy } \quad S(\Gamma) = - \tr[\Gamma \ln (\Gamma)] \,.
	\label{eq:Gibbsfreenergyfunctional}
\end{equation}
Here
\begin{equation}
	\mathcal{H}_N^{\mathrm{trap}} = \mathcal{H}_N +  \int_{\mathbb{R}^3} w(x) a_x^* a_x \de x = \int_{\mathbb{R}^3} a_x^* \left( -\Delta + w(x) \right) a_x \de x + \frac{1}{2 N} \int_{\mathbb{R}^6} v(x-y) a^*_x a^*_y a_y a_x \de x \de y 
	\label{eq:Hamiltonian}
\end{equation}
denotes the Hamiltonian of the trapped system and $T > 0$ is the temperature. The trapping potential $w(x)$ satisfies $w(x) \to \infty$ for $|x| \to \infty$. The minimum of \eqref{eq:Gibbsfreenergyfunctional} is attained for the Gibbs state 
\begin{equation*}
	\Gamma_{\mathrm{G}} = \frac{ \exp(- (\mathcal{H}_N^{\mathrm{trap}} - \mu \mathcal{N})/T )}{\tr [ \exp(- (\mathcal{H}_N^{\mathrm{trap}} - \mu \mathcal{N})/T ) ]} \,,
\end{equation*}
where the chemical potential $\mu \in \mathbb{R}$ is chosen (depending on $N$) so that $\tr [ \mathcal{N} \Gamma ] = N$. This leads to the (grand canonical) free energy
\begin{equation}
	F(T,N) = \min_{\Gamma \in \mathcal{S}_N} \mathcal{F}(\Gamma) = - T \ln \tr [ \exp(- (\mathcal{H}_N^{\mathrm{trap}} - \mu \mathcal{N})/T ) ] + \mu N \,.
	\label{eq:freeenergy}
\end{equation}

Before we discuss what is known about approximate minimizers of the Gibbs free energy functional $\mathcal{F}$, let us briefly recall some well-known facts about the ideal (non-interacting) trapped Bose gas, that is, the system described by the Hamiltonian in \eqref{eq:Hamiltonian} with $v=0$. In the following we assume that the trapping potential $w$ satisfies 
\begin{equation}
	\lim_{|x| \to \infty} \frac{w(x)}{L |x|^s } = 1
	\label{eq:asymptw}
\end{equation}
for two constants $s > 0$ and $L > 0$. The one-particle density matrix $\gamma^{\mathrm{id}}$ of the Gibbs state $\Gamma_{\mathrm{G}}$ of the ideal gas equals
\begin{equation}
	\gamma^{\mathrm{id}} = \frac{1}{\exp((-\Delta + w -\mu_N)/T)-1} \, .
	\label{eq:idealgas1-pdm}
\end{equation}
The chemical potential $\mu_N$ satisfies $\mu_N < e$, where $e$ denotes the lowest eigenvalue of $-\Delta + w$, and is chosen s.t. $\tr \gamma^{\mathrm{id}} = N$ holds. By definition, the system displays BEC if and only if the largest eigenvalue $N_0(T,N)$ of $\gamma^{\mathrm{id}}$ is of order $N$ in the limit $N \to\ \infty$, that is, if and only if
\begin{equation}
	\lim_{N \to \infty} \frac{N_0(T,N)}{N} = g > 0 \, , \quad \text{ where } \quad N_0(T,N) = \frac{1}{\exp((e-\mu_N)/T)-1} \, .
	\label{eq:condfract}
\end{equation}
The trapped ideal Bose gas is well-known to exhibit a BEC phase transition with critical temperature given by
\begin{equation}
	T_{\mathrm{c}}(s) = \frac{N^{1/\alpha}}{(\kappa \alpha \Gamma(\alpha) \zeta(\alpha) )^{1/{\alpha}}} \, , \quad \text{ where } \quad \kappa = \frac{2 L^{-3/s}}{3 \pi} \int_{0}^1 (1-x)^{3/2} x^2 \de x \quad \text{ and } \quad \alpha = \frac{6+3s}{2s} \, .
	\label{eq:idealgascrittemp}
\end{equation}
Here $\Gamma(\alpha)$ denotes the Gamma function and $\zeta(\alpha)$ is the Riemann zeta function. More precisely, let us define $t_{\mathrm{c}}(s) = [(\kappa \alpha \Gamma(\alpha) \zeta(\alpha) )^{1/{\alpha}}]^{-1}$, that is, $t_{\mathrm{c}}(s) = N^{-1/\alpha} T_{\mathrm{c}}(s)$. If $T = \lambda N^{1/\alpha}$ with $\lambda \geq 0$, then we have $g = [1-(\lambda/t_{\mathrm{c}}(s))^{\alpha}]_+$ with $[x]_+ = \max\{ x,0 \}$. This statement follows from the definition of $N_0(T,N)$ and the eigenvalue asymptotics of $-\Delta + w$, see e.g. \cite[Theorem~XIII.81]{RS4}\footnote{The theorem in the reference is stated for $s>1$ but the proof applies for all $s>0$.}. 

An important property of the ideal gas at temperature $T = \lambda N^{1/\alpha}$ with fixed $ \lambda \in (0,t_{\mathrm c}(s)) $ is that its density $\gamma^{\mathrm{id}}(x,x)$ displays a two-scale structure. The condensed particles are described by the ground state of $-\Delta + w$. Accordingly, their density varies on a length scale of order one, independent of $N$. Excited particles (the thermal cloud) are characterized by the much larger length scale $R \gg 1$ because their energy per particle is of the order $T$. The length $R$ can be obtain by equating $T = R^s$, that is, $R \sim N^{2/(6+3s)}$. The density $\varrho_{\mathrm{th}}$ of the thermal cloud is proportional to $\varrho_{\mathrm{th}} \sim N/R^3 \sim T^{3/2} \sim N^{3s/(6+3s)} \ll N$, that is, the thermal cloud is much more dilute than the condensate. 

As recently shown in \cite{DeuSei2021}, this two scale structure is preserved if we add a weak interaction to the Hamiltonian, as in \eqref{eq:Hamiltonian}. To make this statement precise, we introduce for a function $\phi$ in the form domain of $-\Delta + w$ and $g \in [0,1]$ the Hartree energy functional 
\begin{equation}
	\mathcal{E}^{\mathrm{H}}(\phi) = \langle \phi, (-\Delta + w) \phi \rangle + \frac{g}{2} \int_{\mathbb{R}^6} |\phi(x)|^2 v(x-y) | \phi(y) |^2 \de x \de y \,.
	\label{eq:Hartreefunctional}
\end{equation}
In the following we assume that $v$ is such that the functional $\mathcal{E}^{\mathrm{H}}$ is bounded from below and admits a unique minimizer in the set of functions with $L^2(\mathbb{R}^3)$-norm equal to one. We denote the minimum and the minimizer of $\mathcal{E}^{\mathrm{H}}$ by $E^{\mathrm{H}}(g)$ and $\phi^{\mathrm{H}}$, respectively. Using Theorem~1.3 and Lemma~7.1 in \cite{DeuSei2021}, it can be shown\footnote{The result in the reference is stated for the special choice $w(x) = x^2$. However, the techniques that have been used to prove this result also apply to the more general trapping potentials we introduced in the discussion of the ideal gas.}  that for $0 \leq T \lesssim T_{\mathrm{c}}(s)$, with $T_{\mathrm{c}}(s)$ defined in \eqref{eq:idealgascrittemp} as the critical temperature of the ideal trapped gas, the free energy in \eqref{eq:freeenergy} satisfies
\begin{equation}
	\lim_{N \to \infty} N^{-1} \left| F(T,N) - F_0(T,N) - N_0(T,N) E^{\mathrm{H}}(g) \right| = 0 \,.
	\label{eq:freeenergyasymptotics}
\end{equation}
Here $N_0(N,T)$ and $g$ are chosen as in \eqref{eq:condfract}, and $F_0(T,N)$ denotes the free energy in \eqref{eq:freeenergy} with $v = 0$. The effect of the interaction can be observed, to leading order in $N$, only in the condensate because the thermal cloud is much more dilute. Another consequence of the results in \cite{DeuSei2021} is that for any approximately minimising sequence $\Gamma_N \in \mathcal{S}_N$ satisfying 
\begin{equation}
	\lim_{N \to \infty} N^{-1} \left| \mathcal{F}(\Gamma_N) - F_0(T,N) - N_0(T,N) E^{\mathrm{H}}(g) \right| = 0 \,,
	\label{eq:asymptoticsapproxmin}
\end{equation}
the corresponding 1-pdm $\gamma_N$ is such that 
\begin{equation}
	\lim_{N \to \infty} N^{-1} \left\Vert \gamma_N - \gamma^{\mathrm{id}} Q - N_0(T,N) | \phi^{\mathrm{H}} \rangle \langle \phi^{\mathrm{H}} | \ \right\Vert_{\mathcal{L}^1} = 0 
	\label{eq:asymptotics1pdm}
\end{equation}
holds, where $Q = 1 - | \psi_0 \rangle \langle \psi_0 |$ with the ground state $\psi_0$ of $-\Delta + w$. That is, to leading order in $N$, $\gamma_N$ is the 1-pdm of the ideal gas, where the eigenfunction of its largest eigenvalue (the condensate wave function) has been replaced by the minimizer of the Hartree energy functional. This, in particular, implies that the interacting systems displays a BEC phase transition with critical temperature given by the one of the ideal gas to leading order. It also indicates that the density of the Gibbs state $\Gamma_{\mathrm{G}}$ has the same two-scale structure as that of the ideal trapped gas. 
\subsection{Construction of the initial data} 
\label{sec:initialdata}
In this section we define our class of initial states. Our definition is motivated by the discussion in Section~\ref{sec:Equilibriumstates}.

Let $\phi \in L^2(\mathbb{R}^3)$ and let $\gamma \in \mathcal{B}(L^2(\mathbb{R}^3))$ be a positive trace class operator. We think of $\phi$ as a condensate wave function and of $\gamma$ as a 1-pdm describing excitations of the condensate. The particle number $n(\phi,\gamma)$ of the pair $(\phi,\gamma)$ is defined by
\begin{equation}
	n(\phi,\gamma) = \int_{\mathbb{R}^3} | \phi(x) |^2 \de x + \tr \gamma \,.
	\label{eq:particleconstraint}
\end{equation}
Let $G$ be the unique quasi-free state on $\mathscr{F}$ that satisfies $[G,\mathcal{N}] = 0$ and has $\gamma$ as 1-pdm. We also define the Weyl transformation
\begin{equation}
	W(\phi) = \exp( a^*(\phi) - a(\phi) ) \,.
	\label{eq:Weyltrafo}
\end{equation}
The above definitions allow us to associate to each pair $(\phi,\gamma)$ the state 
\begin{equation}
	\Gamma(\phi,\gamma) = W(\phi) G W(\phi)^*
	\label{eq:stateclean}
\end{equation}
on the bosonic Fock space. The kernel of its 1-pdm is given by
\begin{equation}
	\tr[ a_y^* a_x \Gamma(\phi,\gamma) ] = \phi(x) \overline{\phi(y)} + \gamma(x,y) \,,
	\label{eq:1pdmstateclean}
\end{equation}
which, in particular, implies $\tr[ \mathcal{N} \Gamma(\phi,\gamma) ] = n(\phi,\gamma)$. The state $\Gamma (\phi,\gamma)$ is still a quasi-free state, in the sense that higher order correlations can be computed through the Wick theorem, summing over all possible partitions in groups of one or two creation and annihilation operators. Choosing $\phi = \sqrt{ N_0 (T,N) } \phi^{\mathrm{H}}$ and $\gamma = \gamma^{\mathrm {id}} Q$, it is simple to check that $\Gamma (\phi, \gamma)$ provides a good approximation for the free energy of the system in the sense of \eqref{eq:asymptoticsapproxmin}. In our main theorem, we will describe the evolution in \eqref{eq:solutionschroedingereq} for initial data given by suitable perturbations of a quasi-free state of the form $\Gamma (\phi, \gamma)$, under appropriate assumptions on $(\phi,\gamma)$.


To this end, we first use the Araki--Woods representation, see \cite{ArakiWoods1963,DerGer2013}, to recast a mixed state $\Gamma \in \mathcal{S}_N$ as a vector $\Psi$ on the doubled Fock space $\mathscr{F} \otimes \mathscr{F}$. This will allow us to apply the formalism of Bogoliubov transformations and fluctuation dynamics. Note that a similar strategy was used in \cite{Fermions} to study the mean-field evolution of fermionic mixed states. The vector $\Psi$ is conveniently defined in terms of the spectral decomposition $\Gamma = \sum_{\alpha = 1}^{\infty} \lambda_{\alpha} | \psi_{\alpha} \rangle \langle \psi_{\alpha} |$ as
\begin{equation}
	\Psi = \sum_{\alpha=1}^{\infty} \lambda_{\alpha}^{1/2} \psi_{\alpha} \otimes \overline{\psi_{\alpha}}, 
	\label{eq:Araki-Wyss}
\end{equation}
where we define the complex conjugate of $\psi_{\alpha}$ in the position-space representation. It satisfies
\begin{equation}\label{eq:Araki-Wyss-main-property}
	\tr[A \Gamma] = \langle \Psi, \left( A \otimes \mathds{1} \right) \Psi \rangle
\end{equation}
for $A \in \mathcal{B}(\mathscr{F})$. It should be noted that this representation is far from being unique. When we replace the vectors $\psi_{\alpha}$ in the second tensor factor in \eqref{eq:Araki-Wyss} by elements of another orthonormal basis of $\mathscr{F}$, we obtain an equivalent representation of $\Gamma(\phi,\gamma)$ on $\mathscr{F} \otimes \mathscr{F}$ in the sense that \eqref{eq:Araki-Wyss-main-property} continues to hold. 

For the next step in our construction we need the unitary equivalence 
\begin{equation*}
	\mathcal{U} \mathscr{F}(L^2(\mathbb{R}^3)) \otimes \mathscr{F}(L^2(\mathbb{R}^3)) = \mathscr{F}(L^2(\mathbb{R}^3) \oplus L^2(\mathbb{R}^3))
\end{equation*}
defined by the relations $\mathcal{U} \Omega \otimes \Omega = \widetilde{\Omega}$, where $\widetilde{\Omega}$ denotes the vacuum vector of $\mathscr{F}(L^2(\mathbb{R}^3) \oplus L^2(\mathbb{R}^3))$, 
\begin{equation}
	\mathcal{U} \big( a(f) \otimes \mathds{1} \big) \cU^* = a( f \oplus 0 ) \eqqcolon a_{\mathrm{\ell}}(f), \quad \text{and} \quad \mathcal{U} \big( \mathds{1} \otimes a(f) \big) \cU^*= a( 0 \oplus f ) \eqqcolon a_{r}(f). 
	\label{eq:Utrafo}
\end{equation}

Thus, every state $\Gamma \in \mathcal{S}_N$ can be represented by a normalized vector $\widetilde{\Psi} = \cU \Psi \in \mathscr{F} (L^2(\mathbb{R}^3) \oplus L^2(\mathbb{R}^3))$ satisfying 
\begin{equation*}
	\tr [ A \Gamma ] = \langle \wt{\Psi} \cU (A \otimes \mathds{1} ) \cU^* \wt{\Psi} \rangle
\end{equation*}
If $A$ is expressed in terms of creation and annihilation operators $a^*, a$, then $\cU (A \otimes 1) \cU^*$ is obtained from $A$ by replacing all operators $a^*$ with $a^*_{\ell}$ and all operators $a$ with $a_{\ell}$.  Let us now consider, in particular, the state \eqref{eq:stateclean}. An important observation is that the quasi-free state $G$ can be described, on $\mathcal{F} (L^2(\mathbb{R}^3) \oplus L^2(\mathbb{R}^3))$, by the vector $\wt{\Psi}_G = \cT (\gamma) \wt \Omega$, where
\begin{equation}\label{eq:def_T(gamma)}
	\mathcal{T}(\gamma) = \exp\left( \int_{\mathbb{R}^6} k_{\gamma}(x,y) a^*_{\mathrm{\ell},x} a^*_{r,y} \de x \de y - \text{h.c.} \right) 
\end{equation} 
and $k_\gamma (x,y)$ denotes the integral kernel of the operator $k_\gamma = \mathrm{arcsinh} (\sqrt{\gamma})$. We observe that $\cT (\gamma)$ is a Bogoliubov transformation and satisfies the relations 
\begin{equation}
	\mathcal{T}^*(\gamma) a_{\ell}(f) \mathcal{T}(\gamma) = a_{\ell}(u f) + a_{r}^*(\overline{v f}) \,, \hspace{1.5cm} \mathcal{T}^*(\gamma) a_{r}(f) \mathcal{T}(\gamma) = a_{r}(u f) + a_{\ell}^*(\overline{v f}) \,,
	\label{eq:Bogtrafo}
\end{equation}
where $u = \sqrt{1+\gamma}$ and $v = \sqrt{\gamma}$. With \eqref{eq:Bogtrafo}, it is clear that $\wt{\Psi}_G$ is quasi-free (it satisfies the Wick theorem) and that it has the same 1-pdm as $G$, in the sense that 
\begin{equation*}
	\langle  \wt{\Psi}_G, a_{\ell,y}^* a_{\ell,x} \wt{\Psi}_G \rangle = \gamma(x,y) \,.
\end{equation*}
Recalling \eqref{eq:stateclean}, we find that the quasi-free state $\Gamma (\phi, \gamma)$, now with a condensate described by the wave function $\phi$, can be described, on $\mathscr{F} (L^2(\mathbb{R}^3) \oplus L^2(\mathbb{R}^3))$ by the vector $\wt{\Psi} (\phi, \gamma) = \cW (\phi) \cT (\gamma) \wt{\Omega}$ (sometimes called the purification of $\Gamma (\phi, \gamma)$), in the sense that 
\begin{equation}
	\tr[ A \Gamma(\phi,\gamma) ] = \langle \widetilde{\Omega}, \mathcal{T}^*(\gamma) \mathcal{W}^*(\phi) \mathcal{U} \left( A \otimes \mathds{1} \right) \mathcal{U}^* \mathcal{W}(\phi) \mathcal{T}(\gamma) \widetilde{\Omega} \rangle \,.
	\label{eq:Gamma}
\end{equation}
Here, we introduced, on $\mathscr{F} (L^2(\mathbb{R}^3) \oplus L^2(\mathbb{R}^3))$, the Weyl operator 
\begin{equation}
	\mathcal{W}(\phi) = \exp\left( a^*_{\mathrm{\ell}}(\phi) + a^*_{r}(\overline{\phi}) - \mathrm{h.c.} \right)
	\label{eq:doubleFPtrafos}
\end{equation}
satisfying the relations 
\begin{equation}
	\mathcal{W}^*(\phi) a_{\ell}(f) \mathcal{W}(\phi) = a_{\ell}(f) + \langle f, \phi \rangle \,, \hspace{1.5cm} \mathcal{W}^*(\phi) a_{r}(f) \mathcal{W}(\phi) = a_{r}(f) + \langle f, \overline{\phi} \rangle \,.
	\label{Weyltrafo}
\end{equation}
The validity of \eqref{eq:Gamma} can be checked similarly as the comparable equality for $G$, using \eqref{eq:Bogtrafo} and \eqref{Weyltrafo}.

In our main theorem, we will study the dynamics of initial states $\Gamma_{\xi}(\phi, \gamma)$, represented, on $\mathscr{F} (L^2(\mathbb{R}^3) \oplus L^2(\mathbb{R}^3))$, by the vector $\mathcal{W} (\phi) \mathcal{T} (\gamma) \xi$, for a normalized $\xi \in \mathscr{F} (L^2(\mathbb{R}^3) \oplus L^2(\mathbb{R}^3))$. That is, the state $\Gamma_\xi (\phi, \gamma)$ is defined so that, for every observable $A \in \mathcal{B} (\mathscr{F})$, 
\begin{equation}
	\tr[ A \Gamma_{\xi}(\phi,\gamma) ] = \langle \xi, \mathcal{T}^*(\gamma) \mathcal{W}^*(\phi) \mathcal{U} \left( A \otimes \mathds{1} \right) \mathcal{U}^* \mathcal{W}(\phi) \mathcal{T}(\gamma) \xi \rangle \,.
	\label{eq:Gammaxi}
\end{equation}
Our assumptions on $\phi, \gamma, \xi$ will make sure that $\Gamma_\xi (\phi , \gamma)$ is a perturbation of the quasi-free state $\Gamma (\phi, \gamma)$ (in particular, we will require the expectation of the number of particles operator and sufficiently many of its moments to stay bounded, in the state $\xi$, uniformly in $N$).
From \eqref{eq:solutionschroedingereq}, we conclude that the evolution $\Gamma_{\xi, t} (\phi, \gamma) = e^{- \mathrm{i} \mathcal{H}_N t} \Gamma_\xi (\phi, \gamma) e^{\mathrm{i} \mathcal{H}_N t}$ of the initial data $\Gamma_\xi (\phi, \gamma)$ is such that 
\begin{equation}
	\tr[ A \Gamma_{\xi,t}(\phi,\gamma) ] = \langle \xi, \mathcal{T}^*(\gamma) \mathcal{W}^*(\phi) \exp(\mathrm{i} \mathcal{L}_N t) \mathcal{U} \left( A \otimes \mathds{1} \right) \mathcal{U}^* \exp(-\mathrm{i} \mathcal{L}_N t) \mathcal{W}(\phi) \mathcal{T}(\gamma) \xi \rangle \,,
	\label{eq:Gammaxit}
\end{equation} 
with the Liouvillian
\begin{equation}
	\mathcal{L}_N = \mathcal{H}_{N,\mathrm{\ell}} - \mathcal{H}_{N,r} \,.
	\label{eq:Liouvillian}
\end{equation}
The operators $\mathcal{H}_{N,\mathrm{\ell}}$ and $\mathcal{H}_{N,r}$ are defined as the original Hamiltonian $\mathcal{H}_N$ in \eqref{eq:Hamiltonianti} with $a_x^*, a_x$ replaced by $a_{\mathrm{\ell},x}^*, a_{\mathrm{\ell},x}$ and $a_{r,x}^*, a_{r,x}$, respectively. In other words, on $\mathscr{F} (L^2(\mathbb{R}^3) \oplus L^2(\mathbb{R}^3))$, we are interested in the evolution $e^{-\mathrm{i} \ \mathcal{L}_N t} \mathcal{W} (\phi) \mathcal{T} (\gamma) \xi$ generated by $\mathcal{L}_N$ on the initial data $\mathcal{W} (\phi) \mathcal{T} (\gamma) \xi$. 

Motivated by the properties of approximate equilibrium states of trapped Bose gases with a trapping potential satisfying \eqref{eq:asymptw} for some $s > 0$, we will make the following assumptions on the pair $(\phi , \gamma)$.

\begin{assumption}
	\label{ass:scales}
	We assume that the pair $(\phi,\gamma)$ is s.t. $\phi \in H^3(\mathbb{R}^3)$ and $\tr[ (1-\Delta)^{3/2} \gamma (1-\Delta)^{3/2} ] < +\infty$. Moreover, $n(\phi,\gamma) = N$ and the following holds for $s>0$:

	\textbf{(A)} The condensate wave function $\phi$ can be written as $\phi = c(N) \wt{\phi}$, with a fixed normalized $\widetilde{\phi} \in L^2 (\bR^3)$. The $N$-dependent constant $c(N)$ determines the expected number of particles in the condensate. 
	
	\textbf{(B)} The 1-pdm $\gamma$ obeys 
	\begin{equation*}
		\int_{\mathbb{R}^6} | \hat{\gamma}(p,q) | \de p \de q \lesssim T_{\mathrm{c}}^{3/2}(s)
	\end{equation*}
	with $T_{\mathrm{c}}(s)$ as in \eqref{eq:idealgascrittemp}, and where $\hat{\gamma}(p,q)$ denotes the integral kernel of $\gamma$ in Fourier space. 
	
	\textbf{(C)} The operator norm of $\gamma$ satisfies
	\begin{equation*}
		\Vert \gamma \Vert \lesssim T_{\mathrm{c}}(s).
	\end{equation*}
\end{assumption}

Part \textbf{(A)} in the above assumption allows us to choose $\widetilde{\phi} = \phi^H$ minimising \eqref{eq:Hartreefunctional} with $g$ in \eqref{eq:condfract}, and $c(N) = \sqrt{N_0(N,T)}$, as defined in \eqref{eq:condfract}. Point \textbf{(B)} implies that the density satisfies $|\gamma (x,x)| \lesssim T_{\mathrm{c}}^{3/2}(s)$, uniformly in $x \in \mathbb{R}^3$. In fact, it also implies (and this will be important for us) a bound (uniform in $t$) for the density of the free evolution $e^{\mathrm{i} \Delta t} \gamma e^{-\mathrm{i} \Delta t}$ and more generally 
\begin{equation*}
	\sup_{x,y \in \mathbb{R}^3} | (e^{\mathrm{i} \Delta t} \gamma e^{- \mathrm{i} \Delta t})(x,y) | \leq \int_{\mathbb{R}^6} | \hat{\gamma}(p,q) | \de p \de q \lesssim T_{\mathrm{c}}^{3/2}(s) \,.
\end{equation*}
As we will see in Section~\ref{sec:assumptions}, point \textbf{(B)} is satisfied by $\gamma^{\mathrm{id}} Q$ with the 1-pdm $\gamma^{\mathrm{id}}$ of the ideal gas in \eqref{eq:idealgas1-pdm} with the choice $w(x) = |x|^s$ and with $Q = 1 - |\psi_0 \rangle \langle \psi_0 |$, where $\psi_0$ denotes the ground state of $-\Delta + |x|^s$. Point \textbf{(C)} in Assumption~\ref{ass:scales} is satisfied by $\gamma^{\mathrm{id}} Q$, too. 

We conclude this section with a brief discussion of the relevant scales related to our time evolution. In the whole discussion we assume that $T = \kappa T_{\mathrm{c}}(s)$ with $\kappa \in (0,1)$. This in particular implies that we have order $N$ particles in the condensate and order $N$ particles in the thermal cloud. We are interested in following the dynamics of the condensate, and therefore consider times of order one. This is motivated by the fact that the energy per particle in the condensate and the related length scale are both of order one, independent of $N$, see Section~\ref{sec:Equilibriumstates}. In contrast, the energy per particle in the thermal cloud is proportional to $T$ and the related velocity is of the order $T^{1/2} \sim N^{s/(6+3s)} \gg 1$. Thus, particles in the thermal cloud move much faster than particles in the condensate. For particles in the condensate, the $N^{-1}$ factor appearing in \eqref{eq:Hamiltonianti} in front of the interaction produces a mean-field potential and leads to the Hartree equation. In contrast, the typical length scale $R$ of the thermal cloud scales as $N^{2/(6+3s)} \gg 1$ (see Section~\ref{sec:Equilibriumstates}); hence the thermal cloud is much more dilute and we can expect that, to leading order in $N$, it moves according to the free evolution. Our main result confirms this heuristic picture. 
\subsection{Main result}
\label{sec:main}
Our main result is captured in the following theorem.

\begin{theorem}\label{thm:main_2}
	We assume that $v \in L^1(\mathbb{R}^3) \cap W^{1,p}(\mathbb{R}^3)$ with $p>3$ satisfies $v(x) = v(-x)$ for a.e. $x \in \mathbb{R}^3$ and that $\hat{v} \in L^1(\mathbb{R}^3)$, where $\hat{v}$ denotes the Fourier transform of v. Moreover, we assume that the pair $(\phi,\gamma)$ satisfies Assumption~\ref{ass:scales} with $0 < s \leq 3/2$ and that the fluctuation vector $\xi$ is s.t. $\langle \xi, (\mathcal{N}_{\ell} + \mathcal{N}_{r})^{44} \xi \rangle \lesssim 1$ holds with $\mathcal{N}_{\ell/r} = \int a^*_{\ell/r,x} a_{\ell/r,x} \de x$. Then there exists a constant $c>0$ s.t. the 1-pdm $\gamma_{\x,t}$ of the state $\Gamma_{\xi,t}(\phi,\gamma)$ defined in \eqref{eq:Gammaxit} satisfies
	\begin{equation}
		\left\Vert \gamma_{\x,t} - e^{\mathrm{i} \Delta t} \gamma e^{-\mathrm{i} \Delta t} - | \phi_t \rangle \langle \phi_t | \  \right\Vert_{\cL^1} \lesssim \sqrt{ N } T_{\mathrm{c}}^{3/4}(s) \exp(c \exp(c \exp(c t))),
		\label{eq:thmbound1pdm}
	\end{equation} 
	where $\Vert \cdot \Vert_{\cL^1}$ denotes the trace norm. The function $\phi_t$ is the solution to the time-dependent Hartree equation
	\begin{equation}
		\mathrm{i} \partial_t \phi_t(x) = \left( -\Delta + N^{-1} v \ast | \phi_t(x) |^2 \right) \phi_t(x) \quad \text{ with initial datum } \quad \phi_0(x) = \phi(x).
		\label{eq:timedepHartree}
	\end{equation}
\end{theorem}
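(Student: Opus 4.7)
My plan is to conjugate the Liouvillian dynamics $e^{-\mathrm{i}\mathcal{L}_N t}$ on $\mathscr{F}(L^2(\mathbb{R}^3) \oplus L^2(\mathbb{R}^3))$ by a time-dependent Weyl--Bogoliubov unitary whose parameters solve the Hartree--Fock--Bogoliubov (HFB) system. Let $(\phi_t^{\mathrm{HFB}}, \gamma_t^{\mathrm{HFB}}, \alpha_t^{\mathrm{HFB}})$ denote the HFB evolution with initial data $(\phi, \gamma, 0)$; even though the pair density $\alpha_t^{\mathrm{HFB}}$ vanishes at $t=0$, the interaction with the macroscopic condensate forces it to become nonzero for $t>0$, and it cannot be dropped in the fluctuation dynamics. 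Implement this triple via
\begin{equation*}
	\mathcal{V}(t) \coloneqq \mathcal{W}(\phi_t^{\mathrm{HFB}})\,\mathcal{T}(\gamma_t^{\mathrm{HFB}}, \alpha_t^{\mathrm{HFB}}),
\end{equation*}
where $\mathcal{T}(\cdot,\cdot)$ is the natural generalization of \eqref{eq:def_T(gamma)} accommodating the pairing, and define the fluctuation vector $\xi_t \coloneqq \mathcal{V}^*(t)\, e^{-\mathrm{i}\mathcal{L}_N t}\,\mathcal{V}(0)\,\xi$. The HFB equations are designed precisely so that in the generator $\mathcal{G}(t)$ of $\xi_t$, obtained from $\mathcal{V}^*(t)\mathcal{L}_N \mathcal{V}(t) - \mathrm{i}\mathcal{V}^*(t)\partial_t \mathcal{V}(t)$, the linear-in-$a^\#$ and purely quadratic-in-$a^\#$ parts drop out, leaving only cubic and quartic expressions of size $N^{-1/2}$ and $N^{-1}$ with coefficients built from $\phi_t^{\mathrm{HFB}}$ and the Bogoliubov kernels $u_t^{\mathrm{HFB}}, v_t^{\mathrm{HFB}}$.

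\textbf{Moment estimates.} I control the moments $M_k(t) \coloneqq \langle \xi_t, (\mathcal{N}_\ell + \mathcal{N}_r + 1)^k \xi_t \rangle$ by a Gronwall argument on $\partial_t M_k = \langle \xi_t, \mathrm{i}[\mathcal{G}(t), (\mathcal{N}_\ell + \mathcal{N}_r + 1)^k]\xi_t\rangle$. The commutator with the cubic part of $\mathcal{G}(t)$ produces terms coupling moments of neighboring orders, so closing the differential inequality requires a simultaneous bootstrap on a whole window of moments; this is exactly why the hypothesis $M_{44}(0) \lesssim 1$ is imposed. Controlling the coefficients of $\mathcal{G}(t)$ uses Assumption~\ref{ass:scales}: (A) provides $|\phi_t^{\mathrm{HFB}}| \lesssim \sqrt{N}$ and its Sobolev norms via standard Hartree/HFB well-posedness estimates, while (B) and (C) (preserved along the free-like evolution of $\gamma_t^{\mathrm{HFB}}$) yield the essential pointwise bound $\gamma_t^{\mathrm{HFB}}(x,x) \lesssim T_{\mathrm c}^{3/2}(s)$ together with $\|\gamma_t^{\mathrm{HFB}}\| \lesssim T_{\mathrm c}(s)$. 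The outcome is a bound of the form $M_2(t) \lesssim \exp(c\exp(c\exp(c t)))$, whose three nested exponentials reflect respectively the growth of the Sobolev norms of $\phi_t^{\mathrm{HFB}}$, the Gronwall closure of $\partial_t M_k$ at fixed $k$, and the inductive bootstrap propagating the estimate across the hierarchy of moments.

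\textbf{Passing to the 1-pdm.} To extract \eqref{eq:thmbound1pdm} I dualize: for any $J \in \mathcal{B}(L^2(\mathbb{R}^3))$ with $\|J\|_{\mathrm{op}} \leq 1$, I evaluate $\mathrm{tr}[J \gamma_{\xi,t}]$ from \eqref{eq:Gammaxit} with $A = \int J(x,y) a_y^* a_x \,\mathrm{d}x\mathrm{d}y$. Writing the expectation in terms of $\xi_t$ and conjugating $a^*_{\ell,y} a_{\ell,x}$ by $\mathcal{V}(t)$ through \eqref{eq:Bogtrafo}--\eqref{Weyltrafo} produces the main contribution $\langle \phi_t^{\mathrm{HFB}}, J \phi_t^{\mathrm{HFB}}\rangle + \mathrm{tr}[J \gamma_t^{\mathrm{HFB}}]$ together with error terms at most quadratic in $a_\ell^{\#}, a_r^{\#}$ whose coefficients involve $J$ sandwiched with $\phi_t^{\mathrm{HFB}}$ or with $v_t^{\mathrm{HFB}}$. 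Cauchy--Schwarz against $\xi_t$, using $|\phi_t^{\mathrm{HFB}}| \lesssim \sqrt{N}$ and the pointwise bound on $v_t^{\mathrm{HFB}}$ inherited from (B), produces
\begin{equation*}
	\bigl|\mathrm{tr}[J\gamma_{\xi,t}] - \langle \phi_t^{\mathrm{HFB}}, J \phi_t^{\mathrm{HFB}}\rangle - \mathrm{tr}[J \gamma_t^{\mathrm{HFB}}]\bigr| \lesssim \sqrt{N}\,T_{\mathrm c}^{3/4}(s)\,\sqrt{M_2(t)}.
\end{equation*}
Taking $\sup_{\|J\|_{\mathrm{op}}\leq 1}$ gives a trace-norm bound on $\gamma_{\xi,t} - \gamma_t^{\mathrm{HFB}} - |\phi_t^{\mathrm{HFB}}\rangle\langle \phi_t^{\mathrm{HFB}}|$. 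Finally, a separate stability analysis of the HFB system---exploiting that the thermal cloud is dilute, so its back-reaction on the condensate equation is of size $T_{\mathrm c}^{3/2}(s)/N$, and that $\alpha_t^{\mathrm{HFB}}$ remains small by the same token---shows that $\phi_t^{\mathrm{HFB}}$ is close to the Hartree solution $\phi_t$ of \eqref{eq:timedepHartree} and that $\gamma_t^{\mathrm{HFB}}$ is close, in trace norm, to the free evolution $e^{\mathrm{i}\Delta t}\gamma e^{-\mathrm{i}\Delta t}$, with errors absorbable into the right-hand side of \eqref{eq:thmbound1pdm}. A triangle inequality completes the proof.

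\textbf{Main difficulty.} The hardest step is the moment bootstrap of the second paragraph. The cubic part of $\mathcal{G}(t)$ combines $|\phi_t^{\mathrm{HFB}}|\sim\sqrt{N}$ with the macroscopically populated thermal cloud ($\mathrm{tr}\,\gamma_t^{\mathrm{HFB}} \sim N$, $\|\gamma_t^{\mathrm{HFB}}\| \sim T_{\mathrm c}(s)$), and a naive operator-norm estimate is too crude by powers of $T_{\mathrm c}(s)$. The gain that ultimately closes the argument comes from trading operator-norm bounds for the dilute pointwise density $\gamma_t^{\mathrm{HFB}}(x,x) \lesssim T_{\mathrm c}^{3/2}(s)$, at the price of extra factors of $\mathcal{N}_\ell + \mathcal{N}_r$ in the error---which is the structural reason behind the high-moment assumption on $\xi$. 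The very choice of the HFB (rather than Hartree-plus-free) background is essential here, because it is precisely the HFB equations that eliminate the otherwise surviving quadratic terms from $\mathcal{G}(t)$, without which no amount of moment control would suffice.
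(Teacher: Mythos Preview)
Your overall architecture matches the paper's: HFB-based fluctuation dynamics so that the linear and quadratic parts of the generator cancel, a moment bound on the number of excitations, duality to pass to the 1-pdm, and a separate comparison of HFB with Hartree-plus-free. However, your account of the moment estimate misses the key technical device and misdiagnoses why the high-moment hypothesis is needed.

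The obstruction to a direct Gr\"onwall on $M_k(t)=\langle\xi_t,\mathcal{N}^k\xi_t\rangle$ is not that the cubic terms couple neighbouring moments; it is that the Bogoliubov kernels satisfy $\|U_t\|,\|V_t\|\sim T_{\mathrm c}^{1/2}(s)$, which \emph{grows} with $N$. A term such as the quartic piece with four creation operators contributes, after Cauchy--Schwarz, a factor $N^{-1}\|U_t\|^2\|V_t\|^2\,\langle\xi_t,\mathcal{N}^{k+1}\xi_t\rangle$, and no finite hierarchy closes because each level demands the next. The paper resolves this by introducing an auxiliary fluctuation dynamics whose generator carries a spectral cutoff $\mathds{1}(\mathcal{N}\le N^{\eta})$ with $\eta\le 1/7$: on the truncated dynamics the dangerous extra $\mathcal{N}$ is replaced by $N^{\eta}$, and the combinations $N^{-1+\eta}T_{\mathrm c}^{2}(s)$ and $N^{-1/2}T_{\mathrm c}^{7/4}(s)$ are bounded precisely for $s\le 3/2$, so Gr\"onwall closes for each $k$ separately. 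The comparison between the truncated and full dynamics then uses $\mathds{1}(\mathcal{N}>N^{\eta})\le(\mathcal{N}/N^{\eta})^{k}$ together with crude a-priori bounds on $\mathcal{U}^{\mathrm{fluct}}$; it is \emph{this} step that consumes the 44 moments, not a moment-hierarchy bootstrap. Two further points: the propagation of diluteness $\sup_{x,y}|\gamma_t^{\mathrm{HFB}}(x,y)|\lesssim T_{\mathrm c}^{3/2}(s)$ along HFB is itself a nontrivial nonlinear Gr\"onwall (it produces the second exponential), not something inherited for free from Assumption~(B); and the factor $T_{\mathrm c}^{3/4}(s)$ in \eqref{eq:thmbound1pdm} arises from the HFB-to-Hartree+free comparison, whereas the fluctuation estimate via duality gives $\sqrt{NT_{\mathrm c}(s)}$.
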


We have the following remarks concerning Theorem~\ref{thm:main_2}.

\begin{remark}\label{rmk:restriction_s} 
	\begin{enumerate}
		\item The r.h.s. of \eqref{eq:thmbound1pdm} scales for $0<s \leq 3/2$ as $N^{(1+s)/(2+s)} \leq N^{5/7} \ll N$, which should be compared to $\tr \gamma_{\x,t} = N = \tr[ e^{\mathrm{i} \Delta t} \gamma e^{-\mathrm{i} \Delta t} + | \phi_t \rangle \langle \phi_t | ]$. That is, the 1-pdm $e^{\mathrm{i} \Delta t} \gamma e^{-\mathrm{i} \Delta t} + | \phi_t \rangle \langle \phi_t |$ yields a good approximation for $\gamma_{\x,t}$ for fixed $t>0$ and large $N$.
		\item The time-dependence of the r.h.s. of \eqref{eq:thmbound1pdm} can be replaced by $C_{\epsilon} \exp( c \exp(c(t+t^{3+\epsilon}))) $ for $\epsilon > 0$ provided we have either $v \geq 0$ or $\hat{v} \geq 0$. More details can be found in Remark~\ref{remark:HartreeH2} in Section~\ref{sec:apprximationHFB} below. 
		\item We can obtain a better approximation for $\gamma_{\x,t}$ by letting $\omega = |\phi \rangle \langle \phi| + \gamma$ and considering the solution to the time-dependent Hartree equation
		\begin{equation}\label{eq:Hartree_1pdm}
			\mathrm{i} \partial_t \omega_t = [ -\Delta + N^{-1} v * \rho_{\omega_t}(x)  , \omega_t ]
		\end{equation}  
		with $\omega_{t=0} = \omega$ and $\rho_{\omega_t} (x) = \omega_t (x,x)$. Then, proceeding as in our proof of Theorem~\ref{thm:main_2}, we could show that 
		\begin{equation*}
			\Vert \gamma_{\x,t} - \omega_t \Vert_{\mathcal{L}^1} \lesssim \sqrt{ N T_{\mathrm{c}}(s) } \exp(c \exp(c \exp(c t))) \,.
		\end{equation*} 
		In this case the dependence on $N$ on the r.h.s. is optimal, already for $t = 0$. In fact, taking $\xi = \mathcal{U} (1+a_{r}^*(\overline{\phi}/\Vert \phi \Vert)) \Omega/\sqrt{2}$, with $\mathcal U$ defined in \eqref{eq:Utrafo}, a short computation shows that 
		\begin{equation}
			\tr[ \mathcal{N} \Gamma_{\xi}(\phi,\g)] - \tr[\mathcal{N} \Gamma_{\Omega}(\phi,\g)] \geq \frac{\langle \phi, \sqrt{\gamma} \phi \rangle}{\Vert \phi \Vert} \,. 
			\label{eq:finalchanges1}
		\end{equation} 
		Assume $\Vert \gamma \Vert \sim T_c (s)$ and let $\psi$ denote the normalized eigenfunction of $\gamma$ associated with its largest eigenvalue. Choosing $\phi = c(N) \psi$ so that $\Vert \phi \Vert \sim \sqrt{N}$, the r.h.s. of \eqref{eq:finalchanges1} is of order $\sqrt{N T_c(s)}$.

		\item The assumption for $\xi$ appearing in Theorem is needed for $s=3/2$. For $s < 3/2$, less moments are sufficient. 
		\item The theorem holds, in particular, for the following choice of $(\phi,\gamma)$: Let $h = -\Delta + |x|^s$ with $0<s\leq3/2$. The 1-pdm $\gamma$ is defined by $\gamma = \gamma^{\mathrm{id}} Q$ with $\gamma^{\mathrm{id}}$ in \eqref{eq:idealgas1-pdm}, where the projection $Q$ removes the largest eigenvalue of $\gamma^{\mathrm{id}}$ (the condensate). Moreover, the temperature $T$ in the definition of $\gamma^{\mathrm{id}}$ obeys $T \lesssim T_{\mathrm{c}}(s)$. The condensate wave function $\phi$ is given by $\sqrt{N-\tr[\gamma^{\mathrm{id}} Q]}$ times a minimizer of the Hartree energy functional in \eqref{eq:Hartreefunctional}. In Section~\ref{sec:assumptions} we show that the pair $(\phi,\gamma)$ satisfies all assumptions of Theorem~\ref{thm:main_2}.
		\item The condition  $0 < s \leq 3/2$ is needed to consider initial data like those described in Remark~$5$, for $T = \kappa T_c (s)$, $\kappa \in (0,1)$. This restriction is required, because, if $s > 3/2$, the largest eigenvalue of $\gamma^{\mathrm{id}} Q$ grows so fast with $N$ that our analysis breaks down. It is possible to consider traps with $s > 3/2$, in particular the harmonic trap $w (x) = x^2$, if we restrict our attention to temperatures $T \lesssim T_c (3/2) \ll T_c (2)$. Moreover, our analysis also applies above the critical temperature, where $\phi = 0$ and $\| \gamma \| \sim 1$; in this case, we can take any $s \in (0;\infty)$ and we only need weaker assumptions on $\xi$ (and we get better bounds). 
	\end{enumerate}
\end{remark}
\subsection{Proof strategy and organization of the article}
\label{sec:proofstrategy}
For the convenience of the reader we give in this section a short summary of the main steps leading to a proof of Theorem~\ref{thm:main_2}. 




In Section~\ref{sec:assumptions} we show that the initial data in Remark~\ref{rmk:restriction_s}.5, consisting of the minimizer of the Hartree energy functional in \eqref{eq:Hartreefunctional} and the 1-pdm of the ideal gas trapped by the potential $w(x) = |x|^s$ with $s \in (0,3/2]$, satisfies Assumption~\ref{ass:scales}.

Our many-body analysis is based on the definition of fluctuation dynamics around the dynamics generated by the Hartree--Fock--Bogoliubov (HFB) equations, see e.g. \cite{BachBretaux,BenedikterSolovej_Bog_de_Gennes}. We do not give a derivation of these equations from quantum mechanics but rather use them as a technical tool. In Section~\ref{sec:properties_HFB} we introduce the HFB equations, we recall their well-posedness, as established in \cite{BachBretaux}, and we show how they can be approximated by the much simpler effective dynamics appearing in Theorem~\ref{thm:main_2}, if the initial data satisfies Assumption~\ref{ass:scales}. We also prove a bound for solutions of the HFB equations that guarantees the diluteness of the thermal cloud during the time evolution and is a crucial ingredient for our many-body analysis.


Section~\ref{sec:fluctdyn} is devoted to the construction of the fluctation dynamics around the HFB equations and to the proof of Theorem~\ref{thm:main_2}. We start by recalling some basic facts about Weyl and Bogoliubov transformations in Section~\ref{sec:weylandBogtrafos}, and afterwards define the fluctuation dynamics in Section~\ref{sec:constrcutionfluctdyn}. In Section~\ref{sec:Boundbyparticlenumber} we prove a bound for the trace-norm difference of $\gamma_{\xi,t}$ and the 1-pdm related to the solution to the HFB equations in terms of the expected number of excitations in the time-dependent fluctuation vector. A bound for this quantity is stated without proof in Proposition~\ref{mainprop} in Section~\ref{sec:proofoftheorem}. We end Section~\ref{sec:fluctdyn} by showing how Proposition~\ref{mainprop} implies Theorem~\ref{thm:main_2}. 

In Section~\ref{sec:studyfluctdyn} we give a proof of Proposition~\ref{mainprop} that is based on a Gr\"onwall argument for the expected number of particles in the fluctuation dynamics, and we start our analysis in Section~\ref{sec:generator} with the computation of its generator. In Section~\ref{sec4.1} we introduce a fluctuation dynamics with a cut-off in the particle number and prove the equivalent of Proposition~\ref{mainprop} for this dynamics. Section~\ref{sec:weakbounds} is devoted to the proof of weak a-priori bounds for the original fluctation dynamics without a cut-off, that are later used in Section~\ref{sec:comparisonfluctuationdynamics} to show that the two fluctuation dynamics are close in a suitable sense. Finally, in Section~\ref{sec:proofmainprop} we use the results of the previous sections to give a proof of Proposition~\ref{mainprop}, thus concluding the proof of Theorem~\ref{thm:main_2}.
\section{Properties of the 1-pdm of the ideal gas and of Hartree minimizers}
\label{sec:assumptions}

In this section we provide an example of a physically relevant initial pair $(\phi, \gamma)$ satisfying Assumption~\ref{ass:scales}. To this end, we choose $\gamma$ as the 1-pdm of the ideal gas described by the Schr\"odinger operator $h = -\Delta + |x|^s$ for some $s\in(0,2]$ and $\phi$ as the minimizer of the Hartree functional \eqref{eq:Hartreefunctional}, again with $h = -\Delta + |x|^s$ and with a sufficiently regular interaction potential $v$ (Theorem \ref{thm:main_2} describes the evolution of perturbations of the state $\Gamma(\phi,\gamma)$ in \eqref{eq:stateclean}, if $s \in (0;3/2]$). The next proposition shows the validity of Remark~\ref{rmk:restriction_s}.5. 

\begin{proposition} \label{prop:gaphi} 
	Let $h = -\Delta + |x|^s$, for some $s \in [0,2]$ and 
	\[ \gamma = \frac{1}{e^{\beta (h-\mu)} - 1} Q \]
	with a chemical potential $\mu \in \bR$ satisfying $h > \mu$ and with $Q = 1 - |\psi_0 \rangle \langle \psi_0|$ denoting the projection on the orthogonal complement of the ground state $\psi_0$ of $h$. Then, for every $\beta > 0$ there exists a constant $C < \infty$ (depending on $\beta$), with 
	\begin{equation}\label{eq:prop1} \tr \, (1-\Delta)^{3/2} \, \gamma (1-\Delta)^{3/2} \leq C \,. \end{equation} 
	Moreover, denoting by $\hat{\gamma} (p,q)$ the integral kernel of $\gamma$ in Fourier space, we find $\| \gamma \| \lesssim T_{\mathrm c} (s)$ and  
	\begin{equation}\label{eq:prop2} \int_{\bR^6} |\hat{\gamma} (p,q)| \, \de p \de q \lesssim T_{\mathrm c}^{3/2} (s) \end{equation} 
	for all $\beta > 0$ with $\beta^{-1} \lesssim T_{\mathrm c} (s)$. 
	
	Let $v \in W^{1,p} (\bR^3)$ for some $p > 3$ and let $\phi \in H^1 (\bR^3)$ with $\| \phi \| = 1$ solve the Hartree equation 
	\begin{equation}\label{eq:hartree} \Big( -\Delta + |x|^s + g (v * |\phi|^2) (x) \Big) \phi (x) = \mu^{\mathrm{H}} \phi (x) \end{equation} 
	with $g \in [0,1]$ in the sense of distributions, for some $\mu^H \in \bR$. Then we have $\| \phi \|_{H^3} \lesssim (1+ \Vert v \Vert_{\infty} + \mu^{\mathrm{H}})^{3/2}$. 
\end{proposition}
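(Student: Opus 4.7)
I would treat the three claims separately.

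\textbf{Trace estimate.} Since $h>\mu$, the spectrum of $h-\mu$ is bounded below by some $\delta>0$, so $(e^{\beta(h-\mu)}-1)^{-1}\leq C_\beta\, e^{-\beta(h-\mu)/2}$. Diagonalising $h = \sum_n\lambda_n|\psi_n\rangle\langle\psi_n|$ and using that $Q\psi_0 = 0$, the trace rewrites as $\sum_{n\geq 1}\|(1-\Delta)^{3/2}\psi_n\|^2/(e^{\beta(\lambda_n-\mu)}-1)$. For $h = -\Delta+|x|^s$ with $s\in[0,2]$, the eigenfunctions lie in the Schwartz space, and iterating the eigenvalue equation $-\Delta\psi_n = \lambda_n\psi_n - |x|^s\psi_n$ together with inductive control of the moments $\||x|^{ks}\psi_n\|$ via $\langle\psi_n, h^{2k}\psi_n\rangle = \lambda_n^{2k}$ gives $\|(1-\Delta)^{3/2}\psi_n\|\lesssim(1+\lambda_n)^K$ for some explicit $K=K(s)$. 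Combined with the Weyl asymptotics $\lambda_n\sim n^{2s/(3s+6)}$, the exponential factor $e^{-\beta\lambda_n/2}$ dominates and the series converges to a finite $\beta$-dependent constant.

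\textbf{Operator norm and $L^1$ Fourier bound.} The operator norm is direct: $\|\gamma\|=(e^{\beta(\lambda_1-\mu)}-1)^{-1}\leq\beta^{-1}(\lambda_1-\mu)^{-1}\lesssim T_{\mathrm{c}}(s)$, using that the spectral gap $\lambda_1-\mu\geq\lambda_1-\lambda_0>0$ is a positive constant and the hypothesis $\beta^{-1}\lesssim T_{\mathrm{c}}(s)$. For the $L^1$-Fourier bound, the crucial identity is $\iint\hat\gamma(p,q)\,\de p\,\de q=(2\pi)^3\gamma(0,0)$ from Fourier inversion at $x=y=0$; combined with the Feynman--Kac diagonal estimate $e^{-\beta h}(x,x)\lesssim\beta^{-3/2}$, this yields $\gamma(0,0)\lesssim T_{\mathrm{c}}^{3/2}(s)$. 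To replace the signed integral by the absolute-value integral $\int|\hat\gamma|$, I would pointwise dominate $|\hat\gamma(p,q)|$ by the double Fourier transform of a Gaussian majorant of the kernel of $e^{-\beta(h-\mu)/2}$ that captures simultaneously the short-range factor $(4\pi\beta)^{-3/2}e^{-|x-y|^2/4\beta}$ and a confinement factor $\chi_\beta(x)\chi_\beta(y)$ inherited from the trap $|x|^s$ in the Feynman--Kac representation; this majorant has a nonnegative Gaussian Fourier transform on $\bR^3\times\bR^3$ whose $L^1$-norm equals, again by Fourier inversion at the origin, a multiple of $\beta^{-3/2}\sim T_{\mathrm{c}}^{3/2}(s)$. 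The delicate point is that a naive Cauchy--Schwarz on the eigenexpansion $\hat\gamma(p,q)=\sum_n c_n\hat\psi_n(p)\overline{\hat\psi_n(-q)}$ would produce $s$-dependent factors much larger than the sharp $T_{\mathrm{c}}^{3/2}(s)$, so the Gaussian-majorant route via Feynman--Kac is essential.

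\textbf{Hartree regularity.} I rewrite the equation as $-\Delta\phi=(\mu^{\mathrm{H}}-|x|^s-g(v*|\phi|^2))\phi$. Sobolev embedding $W^{1,p}(\bR^3)\hookrightarrow L^\infty(\bR^3)$ for $p>3$ together with Young's inequality gives $\|v*|\phi|^2\|_{L^\infty}\leq\|v\|_\infty$. Testing against $\phi$ produces the $H^1$ bound $\|\nabla\phi\|^2+\int|x|^s|\phi|^2\lesssim\mu^{\mathrm{H}}+\|v\|_\infty$. Computing $\|h\phi\|^2=\|-\Delta\phi+|x|^s\phi\|^2$ and integrating the cross term by parts (the boundary correction $-\tfrac{s(s+1)}{2}\int|x|^{s-2}|\phi|^2$ being controlled via the Hardy inequality since $s\in[0,2]$), I obtain $\|\Delta\phi\|^2+\||x|^s\phi\|^2\lesssim(1+\mu^{\mathrm{H}}+\|v\|_\infty)^2$. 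Differentiating once more and applying the same computation to $\nabla\phi$ (using that $h$ commutes with $\nabla$ up to the multiplier $\nabla|x|^s=s|x|^{s-2}x$, again handled by Hardy-type bounds since $|x|^{s-1}\phi\in L^2$ from the preceding step; the new interaction term $\nabla v*|\phi|^2\cdot\phi$ is bounded by $\|\nabla v\|_{L^p}\|\phi\|_{H^1}^2$ via Young and $H^1\hookrightarrow L^{2p'}$ in three dimensions) then yields $\|\nabla\Delta\phi\|\lesssim(1+\mu^{\mathrm{H}}+\|v\|_\infty)^{3/2}$. Combined with the previously established lower-order bounds, this gives the claimed $H^3$-estimate; the exponent $3/2$ reflects the fact that each successive derivative contributes a factor of $(1+\mu^{\mathrm H}+\|v\|_\infty)^{1/2}$, consistent with the natural quantum-mechanical scaling. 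As indicated above, the main technical obstacle throughout is the sharp $L^1$-Fourier bound in the second claim.
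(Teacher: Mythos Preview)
Your trace estimate and operator-norm arguments are essentially what the paper does; the paper packages your iterated-Laplacian step into an operator inequality $(1-\Delta)^3 \lesssim (1+h)^3$, proved via $|\nabla|x|^s|^2 \lesssim 1+h$ and Hardy, which gives the sharp exponent $K=3$ rather than ``some $K$'', but any polynomial suffices against the exponential. Your Hartree bootstrap is likewise the content of that same operator-inequality lemma, applied with the effective potential $w = |x|^s + g(v*|\phi|^2)$; the paper simply writes $\|(1-\Delta)^{3/2}\phi\| \lesssim \|(\kappa+h_w)^{3/2}\phi\| = (\kappa+\mu^{\mathrm H})^{3/2}$ in one line.

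The genuine gap is in the $L^1$-Fourier bound. Your plan is to dominate the heat kernel $e^{-\beta h}(x,y)$ in \emph{position space} by a Gaussian $G(x,y)$ with nonnegative Fourier transform, and infer $|\hat\gamma(p,q)| \leq \hat G(p,q)$. But pointwise domination in position space does not transfer to Fourier space: if $0 \leq f \leq g$ there is no reason for $|\hat f| \leq \hat g$. So the majorant route, as stated, does not control $\int |\hat\gamma|$. (Indeed your candidate majorant, the free heat kernel $(4\pi\beta)^{-3/2}e^{-|x-y|^2/4\beta}$, has a Fourier transform supported on the diagonal $\{p=q\}$ as a distribution, not in $L^1(\bR^6)$.)

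The paper's approach contains a nontrivial idea you are missing: it proves directly that the Fourier transform $k_t(p,q)$ of the kernel of $e^{-th}$ is \emph{pointwise nonnegative}. Once positivity is known, $\int |k_t| = \int k_t$, and a Jensen-type argument bounds the latter by $(\pi/t)^{3/2}$. Positivity comes from Trotter: in $e^{-th} = \lim_n (e^{-t|x|^s/n} e^{t\Delta/n})^n$, the factor $e^{-t|x|^s/n}$ acts in Fourier space as convolution with the transition density $P_{t/n}$ of an $s$-stable L\'evy process (nonnegative for $s\in(0,2]$ by the L\'evy--Khintchine characterisation), while $e^{t\Delta/n}$ acts as multiplication by $e^{-t|p|^2/n} \geq 0$; hence each Trotter approximant, and therefore the limit, has nonnegative Fourier transform. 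One then expands $\gamma$ as a geometric series of heat kernels, splits at $M\sim\beta^{-1}$, handles the low-$\alpha$ part via this heat-kernel bound, and handles the high-$\alpha$ tail via the eigenfunction expansion and $\|\hat\psi_n\|_1 \lesssim \|\psi_n\|_{H^2}$ (which is essentially your Cauchy--Schwarz route, but only needed for the tail where the spectral gap provides exponential decay). The L\'evy positivity is the missing ingredient.
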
 

In order to prove  Prop. \ref{prop:gaphi}, we will make use of the following lemma. 
\begin{lemma}
	\label{lem:secreg1}
	Let $w \in W^{1,1}_\text{loc} (\bR^3)$ be real-valued, with $w (x) \geq - C$, for some $C \geq 0$. Let $h_w = -\Delta + w$, denote $\kappa = 1+C$, and assume the operator inequality   
	\begin{equation}\label{eq:nabw} |\nabla w|^2 \lesssim (\kappa + h_w) \,. \end{equation} 
	Then, we have 
	\begin{equation}\label{eq:lm1} (1 - \Delta)^2 , w^2 \lesssim (\kappa+h_w)^2, \qquad (1-\Delta)^3 \lesssim (\kappa+h_w)^3 \,.
	\end{equation} 
\end{lemma}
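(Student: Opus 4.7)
The plan is to prove the quadratic bound $(1-\Delta)^2, w^2 \lesssim (\kappa+h_w)^2$ first, and then bootstrap to $(1-\Delta)^3 \lesssim (\kappa+h_w)^3$ by a quadratic-form computation combined with the commutator identity $[\nabla, h_w] = \nabla w$. For the quadratic step, I would start by expanding, for $\psi$ in a dense subclass (e.g.\ Schwartz) and using integration by parts,
\begin{equation*}
\| (h_w+\kappa)\psi \|^2 = \|\Delta\psi\|^2 + \|(w+\kappa)\psi\|^2 + 2\int(w+\kappa)|\nabla\psi|^2 + 2\,\mathrm{Re}\int\overline{\psi}\,\nabla\psi\cdot\nabla w\,,
\end{equation*}
where the first three terms are nonnegative since $w+\kappa \geq 1$. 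The cross term is then controlled by Young's inequality, $2|\overline{\psi}\,\nabla\psi\cdot\nabla w| \leq \varepsilon|\nabla\psi|^2 + \varepsilon^{-1}|\psi|^2|\nabla w|^2$, combined with the hypothesis $|\nabla w|^2 \lesssim \kappa+h_w$, which bounds $\int|\psi|^2|\nabla w|^2$ by $\langle\psi,(\kappa+h_w)\psi\rangle = \|\nabla\psi\|^2 + \int(w+\kappa)|\psi|^2$. A further round of Young's inequality absorbs these remainders into $\|\Delta\psi\|^2$, $\|(w+\kappa)\psi\|^2$ and $\|\psi\|^2 \leq \|(h_w+\kappa)\psi\|^2$ (the latter because $h_w+\kappa\geq 1$), giving $\|\Delta\psi\|^2 + \|(w+\kappa)\psi\|^2 \lesssim \|(h_w+\kappa)\psi\|^2$ and hence $(1-\Delta)^2, w^2 \lesssim (\kappa+h_w)^2$.

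For the cubic inequality I would use the identity
\begin{equation*}
\|(h_w+\kappa)^{3/2}\psi\|^2 = \langle\eta,(h_w+\kappa)\eta\rangle = \|\nabla\eta\|^2 + \int(w+\kappa)|\eta|^2 \geq \|\eta\|_{H^1}^2, \qquad \eta = (h_w+\kappa)\psi,
\end{equation*}
which reduces the task to proving $\|(h_w+\kappa)\psi\|_{H^1}^2 \gtrsim \|\psi\|_{H^3}^2 \sim \|(1-\Delta)^{3/2}\psi\|^2$. The $L^2$ piece already satisfies $\|(h_w+\kappa)\psi\|^2 \gtrsim \|\psi\|_{H^2}^2$ by the previous step, so only $\|\nabla(h_w+\kappa)\psi\|^2 \gtrsim \|\nabla\Delta\psi\|^2$ remains (up to controllable errors). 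Expanding $\nabla(h_w+\kappa)\psi = -\nabla\Delta\psi + (w+\kappa)\nabla\psi + \psi\nabla w$ and applying $|a+b|^2 \geq \frac{1}{2}|a|^2 - |b|^2$ gives
\begin{equation*}
\|\nabla(h_w+\kappa)\psi\|^2 \geq \tfrac{1}{2}\|\nabla\Delta\psi\|^2 - 2\|(w+\kappa)\nabla\psi\|^2 - 2\|\psi\,\nabla w\|^2.
\end{equation*}
The last term is directly bounded by the hypothesis, while for $\|(w+\kappa)\nabla\psi\|^2$ I would apply the quadratic estimate to $\phi = \nabla\psi$, obtaining $\|(w+\kappa)\nabla\psi\|^2 \lesssim \|(h_w+\kappa)\nabla\psi\|^2$, and then use the commutator identity $(h_w+\kappa)\nabla\psi = \nabla[(h_w+\kappa)\psi] - (\nabla w)\psi$ to bound the right-hand side by $\|\nabla(h_w+\kappa)\psi\|^2 + \langle\psi,(\kappa+h_w)\psi\rangle$. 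Absorbing the resulting $\|\nabla(h_w+\kappa)\psi\|^2$ contribution into the left-hand side and using $\langle\psi,(\kappa+h_w)\psi\rangle \leq \|(h_w+\kappa)^{3/2}\psi\|^2$ (valid because $h_w+\kappa \geq 1$ implies $(\kappa+h_w) \leq (\kappa+h_w)^3$ by the functional calculus) yields the desired bound $\|(h_w+\kappa)^{3/2}\psi\|^2 \gtrsim \|\psi\|_{H^3}^2$, equivalent to $(1-\Delta)^3 \lesssim (\kappa+h_w)^3$.

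The main technical obstacle is that the cubic bound cannot be deduced from the quadratic one by any abstract operator-monotonicity argument: the Loewner--Heinz inequality fails for powers larger than one, and for non-commuting operators the sandwich-monotonicity $A \leq B \Rightarrow A^{1/2}CA^{1/2} \leq B^{1/2}CB^{1/2}$ also fails in general. One must therefore work at the level of quadratic forms and exploit the concrete structure of $h_w$, specifically the commutator $[\nabla, h_w] = \nabla w$, whose symbol is controlled by the hypothesis on $|\nabla w|^2$. A minor technical point is that all the above computations should first be justified on a dense subclass such as $\mathscr{S}(\mathbb{R}^3)$ and then extended to the full form domain by a standard density argument.
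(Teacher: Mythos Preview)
Your proof is correct. The quadratic step is essentially identical to the paper's: expand $\|(\kappa+h_w)\psi\|^2$, control the cross term $2\,\mathrm{Re}\langle\psi,(\nabla w)\cdot\nabla\psi\rangle$ via the hypothesis on $|\nabla w|^2$, and absorb the resulting first-order term $\langle\psi,(\kappa+h_w)\psi\rangle$ into $\|(\kappa+h_w)\psi\|^2$.

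For the cubic step, the paper takes a somewhat cleaner route. Instead of working with $\|(\kappa+h_w)^{3/2}\psi\|^2 \geq \|(\kappa+h_w)\psi\|_{H^1}^2$ and then expanding $\nabla[(\kappa+h_w)\psi]$ by hand, it starts from the operator identity
\[
(1-\Delta)^3 = (1-\Delta)^2 - \nabla\cdot(1-\Delta)^2\,\nabla
\]
and uses the legitimate sandwich monotonicity $A\leq B \Rightarrow \partial_j^* A\,\partial_j \leq \partial_j^* B\,\partial_j$ to pass from the quadratic bound $(1-\Delta)^2\lesssim(\kappa+h_w)^2$ directly to $-\nabla\cdot(1-\Delta)^2\nabla \lesssim -\nabla\cdot(\kappa+h_w)^2\nabla$. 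One commutator expansion then gives $-\nabla\cdot(\kappa+h_w)^2\nabla \lesssim (\kappa+h_w)(-\Delta)(\kappa+h_w) + |\nabla w|^2$, and both terms are bounded by $(\kappa+h_w)^3$ using $-\Delta\leq\kappa+h_w$ and the hypothesis. This avoids the detour in your argument where you apply the quadratic estimate to $\phi=\nabla\psi$, re-commute, and then have to reabsorb a copy of $\|\nabla(\kappa+h_w)\psi\|^2$ into the left-hand side. Both arguments use exactly the same ingredients (the quadratic bound, the commutator $[\nabla,h_w]=\nabla w$, and the hypothesis $|\nabla w|^2\lesssim\kappa+h_w$); the paper's organization is just a bit more direct and would generalize more easily to higher powers. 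Your remarks about the failure of operator monotonicity for powers greater than one and about working first on a dense core are both to the point.
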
 
\textit{Remark:} In applications, we will use Lemma \ref{lem:secreg1}, with $w (x) = |x|^s$ and with $w(x) = |x|^s + g (v * |\phi|^2) (x)$, for $s \in [0,2]$. 
\begin{proof} 
	Pick $u \in \mathcal{C}_{\mathrm{c}}^{\infty}(\mathbb{R}^3)$. A short computation shows
	\begin{equation}
		\langle u, (\kappa -\Delta + w)^2 u \rangle = \langle u, [ (1-\Delta)^2 + (C+w)^2 ] u \rangle + 2 \langle \nabla u, (C+w) \nabla u \rangle + 2 \text{Re} \langle u, (\nabla w) \nabla u \rangle + 2 \langle u, (C+w) u \rangle \, . 
		\label{eq:A00}
	\end{equation}  
	With (\ref{eq:nabw}), we find 
	\[ |\langle u, (\nabla w) \nabla u \rangle | \leq \| (\nabla w) u \| \,  \| \nabla u \| \lesssim \langle u, (\kappa + h_w ) u \rangle \,. \]
	Estimating $(\kappa+h_w) \leq \eps (\kappa+h_w)^2 + C_\eps$ and using $w \geq -C$ we conclude, from (\ref{eq:A00}), that 
	\begin{equation}\label{eq:first-bd} \langle u, (\kappa+h_w)^2 u \rangle \gtrsim  \langle u, \big[ (1-\Delta)^2 + (C+w)^2 \big] u \rangle \, . \end{equation} 	
	Since $h_w$ is essentially self-adjoint on $C^\infty_c (\RR^3)$ (see for instance \cite[Theorem X.28]{RS2}), this proves the first bound in (\ref{eq:lm1}). To show the second bound, consider
	\[ (1-\Delta)^3 = (1-\Delta)^2 - \nabla \cdot (1-\Delta)^2 \,  \nabla \, . \]
	From (\ref{eq:first-bd}), we have $(1-\Delta)^2 \lesssim (\kappa+h_w)^2$ and therefore that  
	\[ \begin{split} -\nabla \cdot (1-\Delta)^2 \, \nabla  &\lesssim - \nabla \cdot (\kappa+h_w)^2 \nabla = - ((h_w+\kappa)  \nabla + [\nabla, h_w ]) \cdot (\nabla (h_w + \kappa) -[\nabla , h_w]) \\
		&\lesssim (\kappa+h_w) (-\Delta) (\kappa+h_w) + | [ \nabla , h_w ]|^2 \,. \end{split} \]
	With $-\Delta \leq \kappa + h_w$, $[\nabla , h_w] = \nabla w$ and the assumption (\ref{eq:nabw}), we conclude that 
	\[ (1-\Delta)^3 \lesssim (\kappa + h_w)^3 \, . \] 
\end{proof}

We will also need the following estimate on the Fourier transform of the heat kernel associated with $h$. 
\begin{lemma}\label{lem:fractional_heat_kernel_bounds}
	Let $k_t(p,q)$ denote the Fourier transform of the kernel of the operator $\exp(-th)$ for $t>0$. The function $(p,q) \mapsto k_t(p,q)$ is nonnegative a.e. and satisfies
	\begin{equation}
		\int_{\mathbb{R}^6} k_t(p,q) \de p \de q \leq (\pi/t)^{3/2} \,. \label{Kernel_Bound_2}
	\end{equation}
\end{lemma}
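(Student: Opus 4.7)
The argument rests on two facts about $h = -\Delta + V$ with $V(x) = |x|^s$, $s \in (0,2]$: (a) since $V \geq 0$, the Feynman--Kac representation gives the pointwise comparison $e^{-th}(x,y) \leq p_t(x,y) := (4\pi t)^{-3/2} e^{-|x-y|^2/(4t)}$ with the free heat kernel; (b) Schoenberg's theorem ensures that, for every $\tau > 0$, the function $x \mapsto e^{-\tau|x|^s}$ is positive definite on $\RR^3$, i.e.~has a pointwise nonnegative Fourier transform. The integral estimate is then immediate: using the definition of the Fourier transform on $\RR^6$ one gets
\[
\int_{\RR^6} k_t(p,q)\, \de p\, \de q \;=\; (2\pi)^3\, e^{-th}(0,0),
\]
by inserting the defining integral of $k_t$ and using $\int e^{i\xi \cdot x}\, \de\xi = (2\pi)^3 \delta(x)$ separately in $p$ and in $q$, and then the bound in (a) gives $e^{-th}(0,0) \leq (4\pi t)^{-3/2}$, which yields $(\pi/t)^{3/2}$ upon substitution.

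For pointwise nonnegativity of $k_t$, I would use the Trotter product formula
\[
e^{-th} \;=\; \mathrm{s}\text{-}\lim_{n\to\infty}\bigl(e^{-(t/n)V}\, e^{(t/n)\Delta}\bigr)^n.
\]
A short calculation (doing the Gaussian integral in $y$ first) shows that the momentum-space matrix element of a single Trotter step is
\[
\langle p \,|\, e^{-\tau V} e^{\tau \Delta}\, | q \rangle \;=\; (2\pi)^{-3/2}\, \widehat{e^{-\tau|\cdot|^s}}(q-p)\, e^{-\tau|q|^2},
\]
which is nonnegative by (b). Since operators compose in momentum space via integration, the matrix element of the $n$-th Trotter iterate is an $(n-1)$-fold integral of products of these nonnegative factors, hence itself nonnegative. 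The paper's 2D Fourier transform $k_t(p,q)$ differs from $\langle p | e^{-th} | q\rangle$ only by a sign flip $q \mapsto -q$, which preserves nonnegativity by the reflection symmetry $V(-x) = V(x)$; thus, modulo the Trotter limit, $k_t(p,q) \geq 0$.

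The main technical obstacle is this last step. Trotter convergence holds in strong operator topology, while nonnegativity is a pointwise (a.e.) statement on Fourier transforms. Since $e^{-th}$ is trace class for $s > 0$, the Trotter kernels converge to $e^{-th}(x,y)$ in trace norm and in particular in $L^2(\RR^6)$; extracting an a.e.-convergent subsequence of their 2D Fourier transforms (which converge in $L^2(\RR^6)$ by Plancherel) preserves the sign and gives $k_t(p,q) \geq 0$ for almost every $(p,q) \in \RR^6$.
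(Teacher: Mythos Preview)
Your proof is correct in outline, and for nonnegativity it follows the same path as the paper: both arguments use the Trotter formula and the fact that $e^{-\tau|\cdot|^s}$ has nonnegative Fourier transform (you cite Schoenberg; the paper cites the equivalent statement from Reed--Simon), so that each Trotter approximant has a manifestly nonnegative momentum-space kernel.

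Where you genuinely diverge from the paper is in the integral bound. The paper stays entirely inside the Trotter picture: writing $K_n(p,q)$ as an integral over intermediate momenta $q_1,\dots,q_{n-1}$ with transition kernels $P_{t/n}$ and weights $e^{-(t/n)q_i^2}$, it applies Jensen to the convex function $e^{-x}$ to replace $\exp(-\frac{t}{n}\sum q_i^2)$ by $\frac{1}{n}\sum e^{-t q_i^2}$, and then uses the Markov property $\int P_{t_1}(\cdot-z)P_{t_2}(z-\cdot)\,dz = P_{t_1+t_2}$ to collapse the integrals. Your route, via Fourier inversion $\int k_t = (2\pi)^3\, e^{-th}(0,0)$ combined with the Feynman--Kac comparison $e^{-th}(0,0)\le (4\pi t)^{-3/2}$, is shorter and more conceptual; it trades the explicit manipulation of the Trotter iterate for two black boxes (Feynman--Kac and continuity of the Schr\"odinger heat kernel at the diagonal, which you need to turn the distributional identity $\int e^{i\xi x}\,d\xi = (2\pi)^3\delta(x)$ into a rigorous statement via an approximation argument with, say, Gaussian cutoffs and monotone convergence).

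The one genuine soft spot is your justification of the Trotter limit. You assert that the Trotter iterates converge to $e^{-th}$ in trace norm because $e^{-th}$ is trace class; this does not follow automatically, and proving it requires additional input (e.g.\ Araki--Lieb--Thirring type monotonicity for the symmetrized product). But you do not need anything that strong: strong operator convergence from Trotter immediately gives convergence of all matrix elements $\langle \mathcal{F}^{-1}\phi,\, A_n\, \mathcal{F}^{-1}\psi\rangle \to \langle \mathcal{F}^{-1}\phi,\, e^{-th}\, \mathcal{F}^{-1}\psi\rangle$, which is exactly $\int \overline{\phi}(p)\,\hat{A}_n(p,q)\,\psi(q)\,dp\,dq \to \int \overline{\phi}(p)\,\hat{K}_t(p,q)\,\psi(q)\,dp\,dq$ for all $\phi,\psi\in L^2$. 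Testing against nonnegative $\phi,\psi$ and using Lebesgue differentiation on product cubes then gives $\hat{K}_t\ge 0$ a.e., hence $k_t\ge 0$. This is precisely how the paper handles the limit, and it sidesteps the trace-norm issue entirely.
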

\begin{proof}
	Let us define $P_t(v)$ as $(2\pi)^{-3/2}$ times the Fourier transform of the function $\exp(-t |x|^s)$. We have $\int P_t(v) \de v = 1$, and from Theorem~XIII.52 and Example~2 on p.~220 in \cite{RS4} we know that it is nonnegative. Accordingly, $P_t(v-w)$ defines a probability transition kernel\footnote{The function $P_t(v-w)$ is the transition kernel of an $s$-stable Levy process, see e.g. \cite[Example~6.5]{Levy}.}. Moreover, we have the Markov property
	\begin{equation}
		\int_{\mathbb{R}^3} P_{t_1}(v-z) P_{t_2}(z-w) \de z = P_{t_1+t_2}(v-w)
		\label{eq:markov}
	\end{equation}
	for $t_1,t_2 \geq 0$. 
	
	From the Trotter product formula, see e.g. \cite[Theorem VII.31]{RS1}, we know that
	\begin{equation}
		\exp(-t h) = \lim_{n \to \infty} \left( \exp(-t |x|^s/n) \exp(t \Delta/n) \right)^n
		\label{eq:Trotter}
	\end{equation}
	holds in the strong operator topology. Let us denote by $K_{n}(v,w)$ the Fourier transform of the integral kernel of the $n$-dependent operator on the r.h.s. of the above equation. With the notation $q_0 = v$ and $q_n = w$, we can write
	\begin{align}
		0 \leq K_n(v,w) =& \int \left( \prod_{i=1}^n P_{t/n}(q_{i-1}-q_i) \right) \exp\left(-\sum_{i=1}^{n} \frac{t q_i^2}{n} \right) \de q_1\dots \de q_{n-1} \nonumber \\
		\leq& \int \left(\prod_{i=1}^n P_{t/n}(q_{i-1}-q_i)\right)\left( \frac{1}{n}\sum_{i=1}^{n} \exp\left( -t q_i^2 \right) \right)\de q_1\dots\de q_{n-1} \nonumber \\
		=& \frac{1}{n}\sum_{i=1}^{n}\int P_{it/n}(v-q) \exp\left( -t q^2 \right) P_{(n-i)t/n}(q-w)\de q \,. \label{eq:A21}
	\end{align}
	To come from the first to the second line, we applied Jensen's inequality. Let $\phi,\psi \in L^2(\RR^3) \cap L^{\infty}(\mathbb{R}^3)$ be two nonnegative functions. In combination, \eqref{eq:Trotter} and \eqref{eq:A21} imply the bound
	\begin{align}
		0 \leq \int_{\mathbb{R}^6} \phi(v) k_t(v,w) \psi(w) \de v \de w &= \lim_{n \to \infty} \int_{\mathbb{R}^6} \phi(v) K_n(v,w) \psi(w) \de v \de w \nonumber \\
		&\leq \liminf_{n \to \infty} \frac{1}{n} \sum_{i=1}^n \int_{\mathbb{R}^9} \phi(v)  P_{it/n}(v-q) \exp\left( -t q^2 \right) P_{(n-i)t/n}(q-w) \psi(w) \de v \de q \de w \nonumber \\
		&\leq \Vert \psi \Vert_{\infty} \ \Vert \phi \Vert_{\infty} \int_{\mathbb{R}^3} \exp\left(-t q^2\right) \de q = \Vert \psi \Vert_{\infty} \ \Vert \phi \Vert_{\infty} \ \left( \pi/t \right)^{3/2}.
		\label{eq:A22}
	\end{align}
	The first bound in \eqref{eq:A22} implies that $k_t(p,q) \geq 0$ holds for a.e. $(p,q) \in \mathbb{R}^6$. The bound in \eqref{Kernel_Bound_2} follows from \eqref{eq:A22}, when we take the supremum over all functions $\phi, \psi$ with $0 \leq \phi, \psi \leq 1$ on both sides. This proves Lemma~\ref{lem:fractional_heat_kernel_bounds}. 
\end{proof}

We are now ready to show Proposition~\ref{prop:gaphi}. 
\begin{proof}[Proof of Proposition~\ref{prop:gaphi}] 
	To show (\ref{eq:prop1}), we remark that, with the choice $w (x) = |x|^s$, for $s \in (0;2]$, we have $|\nabla w (x)| = s |x|^{s-1}$ and therefore 
	\[ |\nabla w (x)|^2 \lesssim \big[ w (x) + |x|^{-2} + 1 \big] \lesssim ( 1 + h) \]
	with $h = -\Delta + w (x)$. In the last step, we applied Hardy's inequality. Thus, we can apply Lemma \ref{lem:secreg1} to estimate 
	\[  \tr \, (1-\Delta)^{3/2} \gamma (1-\Delta)^{3/2} \leq  \Big\| (1-\Delta)^{3/2} \frac{1}{(1+h)^{3/2}} \Big\|^2 \,  \tr \, (1+h)^{3} \gamma \, . \]
	The claim follows by noticing (with $\{ e_j \}_{j =0}^\infty$ indicating the eigenvalues of $h$) that 
	\begin{equation}\label{eq:FK} \tr \, (1+h)^3 \gamma = \sum_{j=1}^\infty  \frac{1}{e^{\, \beta (e_j - \mu)} - 1}  (1+e_j)^3 \lesssim \tr e^{-ch} \lesssim \int_{\bR^3} e^{-cw(x)/2} dx < \infty \end{equation} 
	for some constant $c>0$, where we used $e_1 > e_0 > \mu $ in the first step, see \cite[Theorem~XIII.47]{RS4}, and, in the second step, the bound for the trace of the propagator of the heat equation below Eq.~(6.5) in \cite{BraLie1976}. 
	
	The bound $\| \gamma \| \lesssim T_c (s)$ follows from $(h-\mu)Q \geq e_1 - e_0 =: \Delta e > 0$, which implies that 
	\begin{equation*}
		\Vert \gamma \Vert \leq \frac{1}{\exp(\beta \Delta e)-1} \leq \frac{1}{\beta \Delta e} \lesssim T_{\mathrm{c}}(s) \,.
	\end{equation*}
	
	Next, we show (\ref{eq:prop2}). We use here the notation $ \psi_j $ for the eigenfunction of $h$ corresponding to the eigenvalue $ e_j $, for $j\in\NN$. The identity $(\exp(x)-1)^{-1} = \sum_{\alpha=1}^{\infty} \exp(-\alpha x)$ for $x > 0$ allows us to write 
	\begin{equation*}
		\gamma = \sum_{\alpha = 1}^{\infty} \left( \exp(-\beta  (h-\mu) \alpha) - \exp(-\beta (e_0 - \mu)\a) | \psi_0 \rangle \langle \psi_0 | \right) \,.
	\end{equation*}
	Using the above representation of $\gamma$ and the notation $k_t$ from Lemma \ref{lem:fractional_heat_kernel_bounds}, we estimate
	\begin{align}
		| \hat{\gamma}(p,q) | \leq& \sum_{\alpha=1}^M \exp(\beta \mu \alpha) k_{\beta \alpha}(p,q) + \sum_{\alpha=M+1}^{\infty} \sum_{i=1}^{\infty} \exp(-\beta \alpha (e_i - \mu)) | \hat{\psi}_i(p) | \ | \hat{\psi}_i(q) | \nonumber \\
		&+ \sum_{\alpha=1}^M \exp(-\beta(e_0-\mu)\alpha) |\hat{\psi}_0(p)|\ | \hat{\psi}_0(q)| \,, \label{eq:A23}
	\end{align}
	where $M \in \mathbb{N}$. An application of Lemma~\ref{lem:fractional_heat_kernel_bounds} shows that the $L^1(\mathbb{R}^6)$-norm of the first term on the r.h.s. is bounded by
	\begin{equation}
		\sum_{\alpha=1}^{M} \exp(\beta \mu \alpha)  \int_{\mathbb{R}^6} k_{\beta \alpha}(p,q) \de p \de q \leq \sum_{\alpha=1}^{M} \exp(\beta e_0 \alpha ) \left( \frac{\pi}{\beta \alpha} \right)^{3/2} \leq \left( \pi/\beta \right)^{3/2} \exp(\beta e_0 M ) \sum_{\alpha=1}^{\infty} \alpha^{-3/2}.
		\label{eq:A24}
	\end{equation}
	To obtain the first bound we used $e_0 - \mu > 0$, and for the second one $e_0 > 0$. The $L^1(\mathbb{R}^6)$-norm of the term in the last line is bounded by
	\begin{equation}
		\sum_{\alpha=1}^M \exp(-\beta(e_0-\mu)\alpha) \int_{\mathbb{R}^6} |\hat{\psi}_0(p)|\ | \hat{\psi}_0(q)| \leq M \Vert \hat{\psi}_0 \Vert_1^2 \,.
		\label{eq:A24b}
	\end{equation}
	The norm on the r.h.s. is bounded by the $H^2$-norm of $\psi_0$. Since $\psi_0$ is an eigenfunction of $h$, we conclude with Lemma~\ref{lem:secreg1} that it is finite. It remains to consider the second term on the r.h.s. of \eqref{eq:A23}. To that end, with $\Delta e = e_1 - e_0 > 0$ we estimate
	\begin{align} 
		\sum_{\alpha=M+1}^{\infty} \sum_{i=1}^{\infty} & \exp(-\beta \alpha (e_i - \mu)) | \hat{\psi}_i(p) | \ | \hat{\psi}_i(q) | \nonumber \\ 
		&= \sum_{i=1}^\infty \frac{e^{-\beta (M+1) (e_i - \mu)}}{1- e^{-\beta (e_i - \mu)}}  | \hat{\psi}_i(p) | \ | \hat{\psi}_i(q) | \leq \frac{1}{\beta \Delta e} 
		\sum_{i=1}^\infty e^{-\beta M (e_i - \mu)} | \hat{\psi}_i(p) | \ | \hat{\psi}_i(q) | \,.
		\label{eq:A25}
	\end{align} 
	Integrating over $p,q$, the r.h.s. is bounded by 
	\begin{align*} \frac{1}{\beta \Delta e} 
		\sum_{i=1}^\infty e^{-\beta M (e_i - \mu)} \left( \int | \hat{\psi}_i(p) | \de p \right)^2 \lesssim \frac{1}{\beta \Delta e}  \sum_{i=1}^{\infty} e^{-\beta M (e_i-\mu)} \, \langle \psi_i, (1-\Delta)^2 \psi_i \rangle \lesssim \frac{1}{\beta \Delta e} \sum_{i=1}^{\infty} e^{-\beta M (e_i - \mu)} (1+e_i)^2 \, .
	\end{align*}
	Choosing $M \in \bN$ with $M > \beta^{-1}$, (\ref{eq:prop2}) follows from the last equation (proceeding as in (\ref{eq:FK}) to bound the sum over $i$), together with (\ref{eq:A23}), (\ref{eq:A24}), and (\ref{eq:A24b}).
	
	Finally, let us show the bound on $\| \phi \|_{H^3}$, for  a normalized $\phi$ solving the Hartree equation (\ref{eq:hartree}). Let $w (x) = |x|^s + g ( v * |\phi|^2) (x)$ and $h_w = -\Delta + w$. Since $v \in L^{\infty}(\mathbb{R}^3)$ (by Sobolev embedding), we have $\| v * |\phi|^2 \|_\infty \leq \Vert v \Vert_{\infty}$. We also have (choosing $p' < 3/2$ such that $1/p + 1/p' = 1$) 
	\[ \| \nabla (v * |\phi|^2) \|_\infty  \leq \|\nabla v \|_p \| \phi \|_{2p'}^2 \lesssim  \|\nabla v \|_p \| \phi \|^2_{H_1} \,. \] 
	Therefore, we can apply Lemma \ref{lem:secreg1}, with $w = |x|^s + g (v *|\phi|^2) (x)$ and $\kappa = 1 + \| v \|_\infty $
	to show that 
	\[ \| (1- \Delta)^{3/2} \phi \| \lesssim \| (\kappa+h_w)^{3/2} \phi \| = (\kappa+\mu^{\mathrm{H}})^{3/2} \,,  \]
	which proves the claim.
\end{proof}

\section{The Hartree--Fock--Bogoliubov equations}
\label{sec:properties_HFB}

As explained in Section~\ref{sec:proofstrategy}, we use the HFB equations to define our fluctuation dynamics in Section~\ref{sec:constrcutionfluctdyn}. Here we introduce the equations and collect some of their properties that we need for the many-body analysis. 

For the triple $(\phi_t,\gamma_t,\alpha_t)$ consisting of a condensate wave function $\phi_t \in L^2(\mathbb{R}^3)$, a positive trace class operator $\gamma_t$ (a 1-pdm), and a pairing function $\alpha_t \in L^2(\mathbb{R}^6)$, the HFB equations take the form 
\begin{align}
	\mathrm{i}\dpr_t\phi_t =& h(\gamma_t)\phi_t+k(\a_t^{\phi_t})\cc{\phi_t} \nonumber \\
	\mathrm{i}\dpr_t\gamma_t =&\; [h(\gamma_t^{\phi_t}), \gamma_t] + k(\a_t^{\phi_t})\a_t^*-\a_tk(\a_t^{\phi_t})^* \nonumber \\
	\mathrm{i}\dpr_t\a_t =&\; [h(\gamma_t^{\phi_t}), \a_t]_+ + [k(\a_t^{\phi_t}), \gamma_t]_++k(\a_t^{\phi_t}), \label{eq:HFB}
\end{align}
where $[A,B]_+=A\cc{B}^*+B\cc A^*$, $\gamma^{\phi}=\gamma+|\phi\rangle\langle\phi|$, and $\a^{\phi}=\a+|\phi\rangle\langle\cc\phi|$. By $\overline{A}$ we denote the operator with the complex conjugate integral kernel in position space. Moreover, we use the notations
\begin{equation} \label{eq:h(g)_k(g)_def}
	h(\gamma)=-\D+b(\gamma),\qquad k(\a) = \frac 1 N v\sharp\a\,,
\end{equation} where \[b(\gamma)=\frac 1 N v\ast \rho_\g+\frac 1 N v\sharp\gamma\,,\] $v\sharp \s$ denotes the operator with kernel $v(x-y)\s(x,y)$, and the density associated with the 1-pdm $\gamma$ is given by $\rho_\g(x)=\gamma(x,x)$. Here $v$ is an interaction potential satisfying suitable assumptions to be specified later. The mean-field scaling we are interested in is reflected in the factor $N^{-1}$ multiplying $v$ in the equations.

The HFB equations arise naturally as a quasi-free approximation of the many-body dynamics (which does not preserve quasi-freeness), for quasi-free initial data. For more details concerning this argument and more information about the HFB equations we refer the reader to \cite{BachBretaux, BenedikterSolovej_Bog_de_Gennes}.


\subsection{Well-posedness}
\label{sec:wellposedness}
In this section we state conditions that guarantee the well-posedness of the HFB equations. We define $M=(1-\D)^{1/2}$, and for $j\ge 0$ we introduce the linear spaces
\begin{equation*}
	\cH^{j,1}=M^{-j}\cL^1M^{-j} \quad \text{ and } \quad \cH^{j,2}=\left\{\a\in\cL^2\,| \, M^j\a,\a M^j\in\cL^2 \right\}
\end{equation*}
with $\mathcal{L}^j$ defined in Section~\ref{sec:notation}, which are Banach spaces when endowed with their natural norms 
\begin{equation*}
	\| \gamma \|_{\cH^{j,1}}=\|M^j\gamma M^j\|_{\cL^1} \quad \text{ and } \quad \|\a\|_{\cH^{j,2}}=\|M^j\a\|_{\cL^2}+\|\a M^j \|_{\cL^2} \,,
\end{equation*}
respectively. 
We also define the Banach space $(X^j, \Vert \cdot \Vert_{X^j})$, where 
\begin{equation*}
	X^j=H^j\times\cH^{j,1}\times\cH^{j,2} \, \quad \text{ and } \quad \|(\phi,\gamma,\a)\|_{X^j}=\|\phi\|_{H^{j}} + \|\gamma\|_{\cH^{j,1}} + \|\a\|_{\cH^{j,2}} \,.
\end{equation*} The following holds true, see \cite[Theorem 5.1]{BachBretaux}.
\begin{theorem}\label{thm:well-posedness-HFB}
	Let $(\phi_0,\gamma_0,\a_0)\in X^3$ and suppose that
	\begin{equation*}
		\G_0^{(1)} =\begin{pmatrix}
			\gamma_0& \a_0 \\
			\cc\a_0 & 1+\cc\gamma_0
		\end{pmatrix} 
	\end{equation*}
	is positive, i.e. $\G_0^{(1)} \geq 0$ as an operator on $L^2(\RR^3)\otimes L^2(\RR^3)$ (this is always the case if $\g_0$, $\a_0$ are the reduced 1-pdm and pairing density of a state $\G$ on $\fock(L^2(\RR^3))$). Assume also that the interaction potential $v$ satisfies $v(x) = v(-x)$ and $v\in W^{1,p}(\RR^3)$ for some $p>3$. Then there exists a unique triple $(\phi_t,\gamma_t,\a_t)\in C^1([0,\infty),X^1)\cap C^0([0,\infty),X^3)$ satisfying \eqref{eq:HFB} with initial datum $(\phi_0,\gamma_0,\a_0)$ in the classical sense in $X^1$. Moreover, the number of particles
	\begin{equation}\label{eq:HFB_number}
		n(\phi_t,\gamma_t) = \int_{\RRR^3}|\phi_t(x)|^2\de x + \tr\gamma_t
	\end{equation} and the energy 
	\begin{align}\nonumber
			\cE(\phi_t,\gamma_t,\alpha_t) = & \; \tr[h(\gamma_t^{\phi_t})\g_t]+\frac 1 N \tr[(v\ast|\phi_t|^2+v\sharp|\phi_t\rangle\langle\phi_t|)\gamma_t]\\
			& \; + \frac1{2N}\tr[(v\ast \rho_{\g_t}+v\sharp\gamma_t)\gamma_t]+\frac 1 N \int_{\RRR^6}v(x-y)|\a_t(x,y)+\phi_t(x)\phi_t(y)|^2\de x \de y \label{eq:HFB_energy}
	\end{align} are conserved along the evolution, i.e. $n(\phi_t,\gamma_t)=n(\phi_0,\gamma_0)$, $\cE(\phi_t,\gamma_t,\alpha_t) =\cE(\phi_0,\gamma_0,\alpha_0)$, for every $t>0$. Finally, the dynamics is positivity-preserving, that is, 
	\begin{equation} \label{eq:positivity_preserving}
		\G_t^{(1)} =\begin{pmatrix}
			\gamma_t & \a_t \\
			\cc\a_t & 1+\cc\gamma_t
		\end{pmatrix}\ge 0
	\end{equation}
	for every $t>0$.
\end{theorem}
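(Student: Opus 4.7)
The plan is the standard three-step strategy for semilinear evolution equations: local well-posedness in $X^1$ by a Picard/contraction argument applied to the Duhamel form of \eqref{eq:HFB}, propagation of the higher regularity $X^3$, and globalization via the conservation laws \eqref{eq:HFB_number}--\eqref{eq:HFB_energy}; positivity preservation is then a structural consequence of rewriting the HFB flow as a time-dependent Bogoliubov transformation. First I would recast \eqref{eq:HFB} in integral form, with $e^{\mathrm{i} t\Delta}$ acting on $\phi_t$ and conjugation by $e^{\mathrm{i} t\Delta}$ acting on $\gamma_t, \alpha_t$ as the free propagators, and with the interaction operators $b(\gamma^{\phi})$, $k(\alpha^{\phi})$ (and their products with $\gamma_t$, $\alpha_t$, $\phi_t$) collected into a polynomial nonlinearity on $X^j$.

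The key analytic input is a set of product estimates showing that this nonlinearity is locally Lipschitz on $X^j$ for $j=1,3$. The hypothesis $v\in W^{1,p}(\RR^3)$ with $p>3$ gives $v\in L^\infty(\RR^3)$ by Sobolev embedding (and $\nabla v\in L^p$), which suffices to control the operator norm of $v\sharp\gamma$ by $\|\gamma\|_{\cL^1}$-type quantities, the $L^\infty$ norm of $v\ast\rho_\gamma$ by $\tr\gamma$, and the $\cL^2$ norm of $v\sharp\alpha$ by $\|\alpha\|_{\cL^2}$. Combined with Kato--Rellich-type commutator bounds in the spirit of Lemma~\ref{lem:secreg1}, these estimates lift to the $X^3$ setting after commuting with $M^3=(1-\Delta)^{3/2}$, and a contraction on a small ball of $C([0,T];X^j)$ then yields a unique maximal solution.

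Conservation of \eqref{eq:HFB_number} and \eqref{eq:HFB_energy} follows by direct differentiation along the flow: the cyclicity of the trace together with the symmetry $v(x)=v(-x)$ forces all nonlinear contributions to $\partial_t n$ and $\partial_t\cE$ to cancel pairwise. Since $v\in L^\infty$, the energy then controls the $X^1$ norm of $(\phi_t,\gamma_t,\alpha_t)$ uniformly in $t$, yielding global existence in $X^1$, and the same a-priori bound propagates the $X^3$ regularity globally. For the positivity statement \eqref{eq:positivity_preserving}, I would observe that \eqref{eq:HFB} can be rewritten on $L^2(\RR^3)\oplus L^2(\RR^3)$ as
\[
\mathrm{i}\,\partial_t \G_t^{(1)} = \cA_t\,\G_t^{(1)} - \G_t^{(1)}\,\cA_t^{*_J}, \qquad J=\mathrm{diag}(1,-1),
\]
with a block operator $\cA_t$ built from $h(\gamma_t^{\phi_t})$ and $k(\alpha_t^{\phi_t})$, where $*_J$ denotes the adjoint in the indefinite form induced by $J$; the resulting flow is a $J$-unitary conjugation (a time-dependent Bogoliubov transformation on the generalized one-particle space) and therefore preserves $\G_t^{(1)}\ge 0$.

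The main obstacle I expect is the propagation of $X^3$ regularity: commutators like $[M^3, v\sharp\alpha]$ and $[M^3, b(\gamma^{\phi})]$ a priori lose three derivatives, and closing the estimate requires distributing derivatives carefully between $v$ and the factors $\gamma,\alpha$ and exploiting the $\cL^1$ or $\cL^2$ gain available for them. The condition $p>3$ on $v$ is exactly what makes this bookkeeping close.
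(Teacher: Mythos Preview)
The paper does not give its own proof of this theorem: it is quoted verbatim from \cite[Theorem~5.1]{BachBretaux}, and the paper only uses the statement. So there is nothing in the paper to compare your argument against at the level of details. That said, your outline is exactly the strategy of \cite{BachBretaux}: Duhamel form with free propagators, local Lipschitz estimates for the nonlinearity on $X^j$, conservation laws for globalization, and positivity via the symplectic/Bogoliubov rewriting of the flow on $\fh\oplus\fh$ (the paper itself records this last point around \eqref{eq:diagallt}--\eqref{eq:Cauchy_problem_symplecto_Ut}).

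Two places in your sketch would need tightening before it becomes a proof. First, your globalization step asserts that ``the energy then controls the $X^1$ norm of $(\phi_t,\gamma_t,\alpha_t)$ uniformly in $t$''. Energy plus particle number conservation control $\|\phi_t\|_{H^1}$ and $\|\gamma_t\|_{\cH^{1,1}}$, and positivity of $\Gamma_t^{(1)}$ gives $\|\alpha_t\|_{\cL^2}^2\le \tr\gamma_t(1+\tr\gamma_t)$, but none of these directly bounds $\|\alpha_t\|_{\cH^{1,2}}$. One closes this by feeding the a~priori control of $\phi_t,\gamma_t$ and $\|\alpha_t\|_{\cL^2}$ back into the $\alpha$-equation and running a Gr\"onwall argument for $\|M\alpha_t\|_{\cL^2}+\|\alpha_t M\|_{\cL^2}$; it is not automatic from the conserved quantities alone. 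Second, your reference to ``Kato--Rellich-type commutator bounds in the spirit of Lemma~\ref{lem:secreg1}'' is off target: that lemma compares $(1-\Delta)^j$ with $(\kappa+h_w)^j$ and says nothing about commutators of $M^3$ with $v\sharp\alpha$ or $b(\gamma^\phi)$. The $X^3$ propagation in \cite{BachBretaux} is handled by direct product/commutator estimates for the specific bilinear maps $v\ast\rho_\gamma$, $v\sharp\gamma$, $v\sharp\alpha$ acting between the spaces $\cH^{j,1}$, $\cH^{j,2}$, using $v\in W^{1,p}$ with $p>3$; you have correctly identified this as the crux, but the mechanism is not the one you cite.
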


\subsection{Approximation of the HFB dynamics} \label{sec:apprximationHFB}
Although we use the HFB equations to define our fluctuation dynamics in Section~\ref{sec:constrcutionfluctdyn}, we ultimately want to show convergence of the full many-body dynamics to a simpler effective dynamics, with $\phi_t$ solving the time-dependent Hartree equation \eqref{eq:timedepHartree}, $\gamma_t = e^{\mathrm{i}\Delta t} \gamma_0 e^{-\mathrm{i}\Delta t}$ evolving freely, and $\alpha_t = 0$. To reach this goal, we will use the next proposition. 
\begin{proposition}\label{lem:closeness_dynamics}
	Let the pair $(\phi_0,\gamma_0)$ satisfy Assumption~\ref{ass:scales} with $s \in (0,\infty)$, and let the interaction potential $v$ satisfy the assumptions of Theorem~\ref{thm:main_2}. Moreover, let $(\phi_t,\gamma_t,\a_t)$ denote the solution to the HFB equations \eqref{eq:HFB} with initial datum $(\phi_0,\g_0,0)$. Then there exists a constant $c>0$ independent of $N$ and $t$ such that 
	\begin{align}
		\left\Vert \gamma_t - \gamma_t^{\mathrm F} \right\Vert_{\cL^1} \lesssim N^{1/2} T_{\mathrm{c}}^{3/4}(s) t \exp(ct) \,,  &&	\left\Vert \phi_{t} - \phi_t^{\mathrm H}  \right\Vert \lesssim T_{\mathrm{c}}^{3/4}(s) t \exp(ct) \,, \label{eq:closeness_dynamics}
	\end{align}
	where $\phi_t^{\mathrm H} $ is the solution to the time-dependent Hartree equation in \eqref{eq:timedepHartree} with initial condition $\phi_0$ and $\gamma_{t}^{\mathrm F} =e^{\mathrm{i} \D t} \gamma_0 e^{-\mathrm{i} \D t}$. 
\end{proposition}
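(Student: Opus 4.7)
The plan is a direct Duhamel and Gr\"onwall comparison of the HFB triple $(\phi_t,\gamma_t,\alpha_t)$ with the reference triple $(\phi_t^{\mathrm H},\gamma_t^{\mathrm F},0)$, exploiting the explicit $N^{-1}$ prefactor in \eqref{eq:HFB} together with the a priori bounds supplied by Assumption~\ref{ass:scales}. Setting $u_t:=\phi_t-\phi_t^{\mathrm H}$ and $d_t:=\gamma_t-\gamma_t^{\mathrm F}$, and regarding $\alpha_t$ itself as the deviation from the zero reference pairing, I subtract \eqref{eq:timedepHartree} and the free Liouville equation $\mathrm{i}\partial_t\gamma_t^{\mathrm F}=[-\Delta,\gamma_t^{\mathrm F}]$ from the first two lines of \eqref{eq:HFB} to obtain
\begin{align*}
\mathrm{i}\partial_t u_t &= -\Delta u_t + N^{-1}\bigl[(v*|\phi_t|^2)\phi_t-(v*|\phi_t^{\mathrm H}|^2)\phi_t^{\mathrm H}\bigr] + R_t^\phi,\\
\mathrm{i}\partial_t d_t &= [-\Delta,d_t] + R_t^\gamma,\\
\mathrm{i}\partial_t \alpha_t &= [-\Delta,\alpha_t]_+ + R_t^\alpha,
\end{align*}
where the remainders $R_t^\phi,R_t^\gamma,R_t^\alpha$ collect all the genuine HFB contributions built from $b(\gamma_t^{\phi_t})$ and $k(\alpha_t^{\phi_t})$ via \eqref{eq:h(g)_k(g)_def} and therefore each carries an explicit $N^{-1}$. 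Conjugating by the corresponding free propagators (which preserve $\|\cdot\|$, $\|\cdot\|_{\cL^1}$, and $\|\cdot\|_{\mathrm{HS}}$) removes the $-\Delta$-commutators, and Duhamel gives three scalar integral inequalities for $\|u_t\|$, $\|d_t\|_{\cL^1}$, and $\|\alpha_t\|_{\mathrm{HS}}$.

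The remainder estimates rely on three inputs: (i)~conservation of the particle number from Theorem~\ref{thm:well-posedness-HFB}, giving $\|\phi_t\|\lesssim\sqrt N$ and $\tr\gamma_t\leq N$; (ii)~the Fourier $L^1$-bound of Assumption~\ref{ass:scales}(B), preserved under the free flow and hence yielding $\|\rho_{\gamma_t^{\mathrm F}}\|_\infty\lesssim T_{\mathrm c}^{3/2}(s)$ uniformly in $t$; and (iii)~the operator bound $\|\gamma_t^{\mathrm F}\|\lesssim T_{\mathrm c}(s)$ from Assumption~\ref{ass:scales}(C). Each term in $R_t^\phi,R_t^\gamma,R_t^\alpha$ is then estimated as a product of the $N^{-1}$ prefactor, powers of $\|\phi_t\|\lesssim\sqrt N$, and one of the bounds above, producing a sum of expressions polynomial in $\|u_s\|$, $\|d_s\|_{\cL^1}$, $\|\alpha_s\|_{\mathrm{HS}}$ plus a driving term independent of the unknowns. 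The Hartree nonlinearity in the $u_t$-equation is treated by the factorization $(v*|\phi_s|^2)\phi_s - (v*|\phi_s^{\mathrm H}|^2)\phi_s^{\mathrm H} = (v*|\phi_s|^2)u_s+\bigl(v*(|\phi_s|^2-|\phi_s^{\mathrm H}|^2)\bigr)\phi_s^{\mathrm H}$, which, once divided by $N$, produces the linear $\|v\|_\infty\|u_s\|$ term responsible (through Gr\"onwall) for the $\exp(ct)$ factor. The source in $R_t^\alpha$ contains $N^{-1}v\sharp|\phi_t\rangle\langle\overline{\phi_t}|$, of Hilbert--Schmidt norm $\lesssim\|v\|_\infty$, so $\|\alpha_t\|_{\mathrm{HS}}\lesssim t\,e^{ct}$; closing the resulting coupled Gr\"onwall system for $(\|u_t\|,\|d_t\|_{\cL^1},\|\alpha_t\|_{\mathrm{HS}})$ delivers the claimed bounds.

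The main obstacle is the commutator $[b(\gamma_s^{\phi_s}),\gamma_s]$ in $R_t^\gamma$: the naive estimate $\|b(\gamma_s^{\phi_s})\|\tr\gamma_s=O(N)$ is a factor $\sqrt N/T_{\mathrm c}^{3/4}(s)$ too weak. It is overcome by splitting $\gamma_s=\gamma_s^{\mathrm F}+d_s$: the $d_s$-piece is linear in $\|d_s\|_{\cL^1}$ and is absorbed into the Gr\"onwall loop, while the $\gamma_s^{\mathrm F}$-piece is controlled via the Hilbert--Schmidt interpolation $\|\gamma_s^{\mathrm F}\|_{\mathrm{HS}}\leq(\|\gamma_s^{\mathrm F}\|\tr\gamma_s^{\mathrm F})^{1/2}\lesssim\sqrt{N\,T_{\mathrm c}(s)}$ combined with matching Hilbert--Schmidt bounds on $b(\gamma_s^{\phi_s})$ that cash in the $N^{-1}$ prefactor; the exponent $T_{\mathrm c}^{3/4}(s)$ then arises as the geometric mean between $T_{\mathrm c}^{1/2}(s)$ and $T_{\mathrm c}(s)$ produced by these two estimates. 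In addition, a parallel bootstrap/continuity argument in $t$ is required to propagate the $L^\infty$-density bound $\|\rho_{\gamma_t}\|_\infty\lesssim T_{\mathrm c}^{3/2}(s)$ from $\gamma_t^{\mathrm F}$ to $\gamma_t$ along the HFB flow, which is safely closed on time intervals where the Gr\"onwall estimate has not yet saturated, producing the final $t\exp(ct)$ time dependence in \eqref{eq:closeness_dynamics}.
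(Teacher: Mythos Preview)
Your overall strategy---Duhamel comparison plus a coupled Gr\"onwall for $(\|u_t\|,\|d_t\|_{\cL^1},\|\alpha_t\|_{\mathrm{HS}})$---matches the paper's, and you correctly single out $[b(\gamma_s^{\phi_s}),\gamma_s]$ as the decisive term. However, your proposed treatment of this term has a concrete gap. The direct part of $b(\gamma_s^{\phi_s})$ is the multiplication operator $N^{-1}v*\rho_{\gamma_s^{\phi_s}}$, which is \emph{never} Hilbert--Schmidt, so there are no ``matching Hilbert--Schmidt bounds on $b(\gamma_s^{\phi_s})$'' to pair with your interpolation $\|\gamma_s^{\mathrm F}\|_{\mathrm{HS}}\lesssim\sqrt{NT_{\mathrm c}}$. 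The dominant contribution is $N^{-1}\|(v*|\phi_s|^2)\gamma_s^{\mathrm F}\|_{\cL^1}$, and the way to extract $N^{1/2}T_{\mathrm c}^{3/4}$ is to factor $\gamma_s^{\mathrm F}=(\gamma_s^{\mathrm F})^{1/2}(\gamma_s^{\mathrm F})^{1/2}$, use $\|AB\|_{\cL^1}\le\|A\|_{\cL^2}\|B\|_{\cL^2}$, and then invoke your input~(ii) in the form
\[
\bigl\|(v*|\phi_s|^2)(\gamma_s^{\mathrm F})^{1/2}\bigr\|_{\cL^2}^2=\int |(v*|\phi_s|^2)(x)|^2\,\gamma_s^{\mathrm F}(x,x)\,\de x\lesssim T_{\mathrm c}^{3/2}\,\|v*|\phi_s|^2\|_{L^2}^2\lesssim T_{\mathrm c}^{3/2}N^2\|v\|_{L^2}^2,
\]
combined with $\|(\gamma_s^{\mathrm F})^{1/2}\|_{\cL^2}=(\tr\gamma_s^{\mathrm F})^{1/2}\le N^{1/2}$. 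Thus $T_{\mathrm c}^{3/4}$ is the square root of the density bound from Assumption~\ref{ass:scales}(B), not a geometric mean involving~(C); the operator-norm bound from~(C) enters only in the subleading exchange term, via $N^{-1}\|(v\sharp|\phi_s\rangle\langle\phi_s|)\gamma_s^{\mathrm F}\|_{\cL^1}\le N^{-1}\|v\sharp|\phi_s\rangle\langle\phi_s|\|_{\cL^1}\|\gamma_s^{\mathrm F}\|\lesssim T_{\mathrm c}$.

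Second, the bootstrap you propose---propagating $\|\rho_{\gamma_t}\|_\infty\lesssim T_{\mathrm c}^{3/2}$ along the HFB flow---is not needed here. Every density that appears can be routed through $\gamma_s^{\mathrm F}$ (for which~(ii) holds uniformly in $t$) after writing $\gamma_s=\gamma_s^{\mathrm F}+d_s$; the $d_s$-piece is absorbed into the Gr\"onwall loop by the crude bound $\|v*\rho_{d_s}\|_\infty\le\|v\|_\infty\|d_s\|_{\cL^1}$. Propagation of the density bound to $\gamma_t$ itself is a separate, harder statement (Proposition~\ref{lem:diluteness_HFB}), proved by a different Gr\"onwall in Fourier-$L^1$, and is not an input to the present proposition.
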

\begin{remark}\label{rmk:comparison_dynamics}
	\begin{enumerate}
		\item Our assumptions on $(\phi_0,\gamma_0)$ and $v$ guarantee that the HFB equations are well-posed, see Theorem~\ref{thm:well-posedness-HFB}. It is well-known that they also guarantee the global well-posedness of the time-dependent Hartree equation in $H^1(\RR^3)$, and that the $L^2$-mass of the solution is preserved along the Hartree flow.
		\item As explained in Section \ref{sec:Equilibriumstates}, the scaling of the critical temperature with the number of particles is given by $T_c(s)\sim N^{1/\a}=N^{\frac{2s}{6+3s}}$. In particular $T_{\mathrm c}^{3/4} (s)\ll N^{1/2}$ for every $s>0$, showing that the bounds given in \eqref{eq:closeness_dynamics} are non-trivial. 
		\item It is possible to get better rates than those in \eqref{eq:closeness_dynamics} if one compares to a slightly more complicated evolution for the 1-pdm. Indeed, let $\o_0=|\phi_0\rangle\langle\phi_0|+\g_0$, and define $\o_t$ to be the solution of the Hartree equation \eqref{eq:Hartree_1pdm} with initial datum $\o_{t=0}=\o_0$. Then we have the improved bound
		\begin{equation} 
			\label{eq:Hartree_1pdm_bound} \left\| \ |\phi_t\rangle\langle\phi_t|+\g_t - \o_t \right\|_{\cL^1} \lesssim N^{1/2} T_c^{1/2}(s) t\exp(ct)\,, 
		\end{equation} for some constant $c>0$ independent of $t$ and $N$. As a consequence, one gets a better (indeed, optimal) rate of convergence in \eqref{eq:thmbound1pdm} by replacing $|\phi_t^{\mathrm H} \rangle\langle\phi_t^{\mathrm H} |+e^{i\D t}\g_0 e^{-i\D t}$ with $\o_t$ in the statement, see Remark \ref{rmk:restriction_s}.3. 
		The estimate \eqref{eq:Hartree_1pdm_bound} is proven by a suitable adaptation of the Gr\"onwall-type argument given below, but since we do not need it for our main result, we omit the details. We only highlight that the gain from using \eqref{eq:Hartree_1pdm} is that it allows one to replace the direct term $[v\ast|\phi_t|^2,\g_t]$ appearing on the r.h.s. of \eqref{eq:Duhamel_omega}, which we can only bound by $N^{1/2}T_{\mathrm c}^{3/4}(s)$ (see \eqref{eq:Duh_gamma_bad_bound}), with some simpler terms that can be shown to be of order $N^{1/2}T_{\mathrm c}^{1/2}(s)$.
	\end{enumerate}
\end{remark}
\begin{proof}
	We introduce the notation $\theta_t=(\phi_t,\gamma_t,\a_t)$, and we write the HFB equations as 
	
	\begin{equation}\label{eq:HFB_in_X_space}
		\mathrm{i}\dt\theta_t=A\theta_t+f(\theta_t)\,,
	\end{equation}
	where 
	\begin{equation}
		A\theta=\left(-\D\phi,~[-\D,\gamma],~[-\D,\a]_+\right)
		\label{eq:linpartHFB}
	\end{equation}
	is a linear operator on $X^0$ with domain $\cD(A)=X^2$, and $f\in C(X^0)$ is defined by
	\[
	\begin{split}
		f(\theta)=\Big(b(\gamma)\phi+k(\a^{\phi})\cc\phi,~[b(\gamma^\phi),\gamma] + (k(\a^\phi)\a^*-\a k(\a^\phi)^*)\gamma,~[b(\gamma^\phi),\a]_++[k(\a^\phi),\a]_+ + k(\a^\phi)\Big)\,.
	\end{split}
	\] 
	Combining \eqref{eq:HFB_in_X_space} and \eqref{eq:timedepHartree} we get
	\begin{align}
		\phi_t-\phi_t^{\mathrm H}=&-\mathrm{i}\int_0^te^{\mathrm{i}(t-s)\D}\left(\left( \frac1Nv\ast|\phi_s|^2\phi_s - \frac1Nv\ast|\phi_s^{\mathrm H}|^2\phi_s^{\mathrm H}  \right)+ b(\gamma_s)\phi_s+k(\a_s)\cc{\phi_s}\right)\de s \, , \label{eq:Duhamel_phi}\\
		\gamma_t-\gamma_t^{\mathrm F}=&-\mathrm{i}\int_0^te^{\mathrm{i}(t-s)\D}\Big([b(\gamma_s^{\phi_s}),\gamma_s]+k(\a_s^{\phi_s})\a_s^*-\a_s k(\a_s^{\phi_s})^*   \Big)e^{-\mathrm{i}(t-s)\D}\de s \, ,\label{eq:Duhamel_omega}\\
		\a_t=&-\mathrm{i}\int_0^te^{\mathrm{i}(t-s)(\D_x+\D_y)}\Big([b(\gamma_s^{\phi_s}),\a_s]_++[k(\a_s^{\phi_s}),\gamma_s]_++k(\a_s^{\phi_s})\Big)\de s\,.\label{eq:Duhamel_alpha}
	\end{align}
	Here, we identify $\alpha_t$ with its kernel $\alpha_t (x,y)$ in $L^2 (\RR^6)$ and we denote by 
	$\Delta_x + \Delta_y$ the Laplacian acting on this space. 
	
	Let us start by deriving a bound for the r.h.s. of \eqref{eq:Duhamel_phi}. We use the identity 
	\begin{equation}\label{eq:identity_hartree}
		\frac1N v\ast|\phi_s|^2\phi_s - \frac1N v\ast|\phi_s^{\mathrm H}|^2\phi_s^{\mathrm H} = \left(\frac1N v\ast(|\phi_s|^2-|\phi_s^{\mathrm H}|^2)\right)\phi_s^{\mathrm H} + \left(\frac1N v\ast|\phi_s|^2\right)(\phi_s-\phi_s^{\mathrm H})\,,
	\end{equation}  Young's inequality, and the conservation of the particle number for the Hartree and HFB equations, to see that 
	\[
	\begin{split}
		\left\|  \frac1N v\ast|\phi_s|^2\phi_s - \frac1N v\ast|\phi_s^{\mathrm H}|^2\phi_s^{\mathrm H}  \right\|\lesssim &N^{-1}\|v\|_\infty\left( \|\phi_s\|^2+\| \phi_s^{\mathrm H} \|^2 \right)\| \phi_s-\phi_s^{\mathrm H} \|\\
		\lesssim & N^{-1}\|v\|_\infty\left( \|\phi_0\|^2+\| \g_0\|_{\cL^1} \right) \|\phi_s-\phi_s^{\mathrm H} \|\lesssim\|\phi_s-\phi_s^{\mathrm H} \| \,.
	\end{split}
	\] In the second term on the r.h.s. of \eqref{eq:Duhamel_phi} we write 
	\[
	b(\gamma_s)\phi_s=b(\gamma_s^{\mathrm F})\phi_s+b(\gamma_s-\gamma_s^{\mathrm F})\phi_s\,,
	\] and we estimate the two terms separately. First, we bound the operator norm of $b(\gamma_s^{\mathrm F})=N^{-1}v\ast \rho_{\gamma_s^{\mathrm F}}+N^{-1}v\sharp\gamma_s^{\mathrm F}$. Using Assumption~\ref{ass:scales}~(\textbf{B}), we see that 
	\begin{equation}\label{eq:bound_d_omega_operator}
		\|N^{-1}v\ast \rho_{\gamma_s^{\mathrm F}}\|= N^{-1}\|v\ast \rho_{\gamma_s^{\mathrm F}}\|_\infty\le  N^{-1}\|v\|_1 \sup_{x\in\RR^3}|\gamma_s^{\mathrm F}(x,x)|\lesssim N^{-1} T_{\mathrm{c}}^{3/2}(s)\,.
	\end{equation} Moreover, for any $\psi\in L^2(\RR^3)$ we can estimate
	\[
	\begin{split}
		N^{-1}|(v\sharp \gamma_s^{\mathrm F})\psi(x)| \le&N^{-1}\int_{\mathbb{R}^3} |v(x-y)|\ |\gamma_s^{\mathrm F}(x,y)|\ |\psi(y)|\de y\\
		\le & N^{-1}\sup_{x,y \in \mathbb{R}^3}|\gamma_s^{\mathrm F}(x,y)|\int_{\mathbb{R}^3} |v(x-y)|\ |\psi(y)|\de y \lesssim N^{-1} T_{\mathrm{c}}^{3/2}(s) (|v|\ast|\psi|)(x)\,,
	\end{split}
	\] which implies 
	\begin{equation}\label{eq:bound_sharp_omega_operator}
		N^{-1}\|(v\sharp \gamma_s^{\mathrm F})\psi\|\lesssim N^{-1} T_{\mathrm{c}}^{3/2}(s) \|\ |v|\ast|\psi|\ \|\lesssim N^{-1} T_{\mathrm{c}}^{3/2}(s) \|\psi\|\,.
	\end{equation} For the operator norm of $b(\gamma_s-\gamma_s^{\mathrm F})$ we have the bound
	\begin{align}
		N^{-1}\|v\ast (\r_{\gamma_s}-\r_{\gamma_s^{\mathrm F}})\|= &N^{-1}\|v\ast (\r_{\gamma_s}-\r_{\gamma_s^{\mathrm F}})\|_\infty \nonumber \\
		\le & N^{-1}\|v\|_\infty\int_{\mathbb{R}^3}|\gamma_s(x,x)-\gamma_s^{\mathrm F}(x,x)|\de x\lesssim  N^{-1}\|\gamma_s-\gamma_s^{\mathrm F}\|_{\cL^1}\,,
		\label{eq:bound_d_omega_diff_operator}
	\end{align} 
	and
	\begin{align}
		N^{-1}\|v\sharp(\gamma_s-\gamma_s^{\mathrm F})\|\le & N^{-1}\left(\int_{\mathbb{R}^6} v(x-y)^2|\gamma_s(x,y)-\gamma_s^{\mathrm F}(x,y)|^2\de x \de y\right)^{1/2} \nonumber \\
		\le & N^{-1}\|v\|_\infty\|\gamma_s-\gamma_s^{\mathrm F}\|_{\cL^2}\lesssim N^{-1}\|\gamma_s-\gamma_s^{\mathrm F}\|_{\cL^1}\,, \label{eq:bound_sharp_omega_diff_operator}
	\end{align} 
	where in the last step we used the fact that the Hilbert-Schmidt norm of a trace-class operator is bounded by its trace norm. 
	Collecting \eqref{eq:bound_d_omega_operator}--\eqref{eq:bound_sharp_omega_diff_operator} we deduce
	\begin{align}
		\|b(\gamma_s^{\mathrm F})\|\lesssim & N^{-1} T_{\mathrm{c}}^{3/2}(s) \nonumber \\
		\|b(\gamma_s-\gamma_s^{\mathrm F})\| \lesssim & N^{-1}\|\gamma_s-\gamma_s^{\mathrm F}\|_{\cL^1}\,. \label{eq:b_l2_bounds}
	\end{align} The above bounds and the conservation of the particle number imply
	\[
	\begin{split}
		\|b(\gamma_s)\phi_s\|\lesssim & N^{-1/2} \left( \|\phi_0\|^2+\|\g_0\|_{\cL^1}\right)^{1/2}\left(N^{-1/2} T_{\mathrm{c}}^{3/2}(s)\|v\|_1 + N^{-1/2}\|v\|_\infty\|\gamma_s-\gamma_s^{\mathrm F}\|_{\cL^1}\right)\\
		\lesssim & N^{-1/2} T_{\mathrm{c}}^{3/2}(s) \|v\|_1 + N^{-1/2}\|v\|_\infty\|\gamma_s-\gamma_s^{\mathrm F}\|_{\cL^1}  \,.
	\end{split}
	\] Similarly, we estimate
	\[
	\|k(\a_s)\phi_s\|\le N^{-1}\|v\|_\infty\|\a_s\|_{\cL^2}\|\phi_s\|\lesssim N^{-1/2}\|v\|_\infty\|\a_s\|_{\cL^2}\,.
	\] We have thus shown that 
	\begin{equation}\label{eq:Duhamel_phi_final_bound}
		\|\phi_t-\phi_t^{\mathrm H}\|\lesssim N^{-1/2} T_{\mathrm{c}}^{3/2}(s) t + \int_0^t\left(\|\phi_s-\phi_s^{\mathrm F}\|+N^{-1/2}\|\gamma_s-\gamma_s^{\mathrm F}\|_{\cL^1}+N^{-1/2}\|\a_s\|_{\cL^2}\right)\de s\,.
	\end{equation}
	
	We now turn to the r.h.s. of \eqref{eq:Duhamel_omega}.  Using again the bounds in \eqref{eq:b_l2_bounds}, we find
	\begin{align}
		\| [b(\gamma_s^{\phi_s}),\gamma_s] \|_{\cL^1}\le & \| b(\gamma_s^{\phi_s})\gamma_s \|_{\cL^1} + \|\gamma_s b(\gamma_s^{\phi_s}) \|_{\cL^1} \nonumber \\
		\le & 2\Big(\|\gamma_s\|_{\cL^1}\left(\| b(\gamma_s-\gamma_s^{\mathrm F})\| +\| b(\gamma_s^{\mathrm F})\|\right) + \| b(|\phi_s\rangle\langle\phi_s|)\gamma_s^{\mathrm F}\|_{\cL^1}+\|b(|\phi_s\rangle\langle\phi_s|)\| \ \|\gamma_s-\gamma_s^{\mathrm F}\|_{\cL^1}\Big) \nonumber \\
		\lesssim & \|v\|_\infty\|\gamma_s-\gamma_s^{\mathrm F}\|_{\cL^1} +T_{\mathrm{c}}^{3/2}(s)\|v\|_1 \nonumber \\
		& + N^{-1}\| (v\ast |\phi_s|^2)\gamma_s^{\mathrm F}\|_{\cL^1}  + N^{-1}\| (v\sharp |\phi_s\rangle\langle\phi_s|)\gamma_s^{\mathrm F}\|_{\cL^1} + \|b(|\phi_s\rangle\langle\phi_s|)\| \ \|\gamma_s-\gamma_s^{\mathrm F}\|_{\cL^1}\,. \label{eq:bound_duhamel_omega_1}
	\end{align} The first term in the last line is bounded by
	\begin{align}\nonumber
			N^{-1}\| (v\ast |\phi_s|^2)\gamma_s^{\mathrm F}\|_{\cL^1}\le & N^{-1}\|\gamma_s^{\mathrm F}\|_{\cL^1}^{1/2}\|(v\ast |\phi_s|^2)(\gamma_s^{\mathrm F})^{1/2}\|_{\cL^2}\\ \nonumber
			\lesssim & N^{-1/2} \left(\int_{\mathbb{R}^3} |(v\ast |\phi_s|^2)(x)|^2|\gamma_s^{\mathrm F}(x,x)|\de x\right)^{1/2}\\
			\lesssim & N^{-1/2} T_{\mathrm{c}}^{3/4}(s)\|v\ast |\phi_s|^2\|\lesssim N^{1/2} T_{\mathrm{c}}^{3/4}(s)\|v\|\lesssim N^{1/2} T_{\mathrm{c}}^{3/4}(s)\,. \label{eq:Duh_gamma_bad_bound}
	\end{align} The exchange term can be estimated by
	\[ \| v \sharp |\phi_s \rangle \langle \phi_s| \|_{\cL_1} \leq \int \hat{v} (p) \| \ |e^{ip \cdot . } \phi_s \rangle \langle e^{ip\cdot .} \phi_s |\ \|_{\cL_1}\de p \leq \| \hat{v} \|_1 \| \phi_s \|^2, \]
	thus, 
	\[ N^{-1} \|  v\sharp |\phi_s \rangle \langle \phi_s| \gamma_s^{\mathrm F} \|_{\cL_1} \lesssim \| \gamma_s^{\mathrm F} \| \lesssim T_c(s) \,.\] As for the last term on the last line of \eqref{eq:bound_duhamel_omega_1}, we have 
	\[
	\begin{split}
		\|b(|\phi_s\rangle\langle\phi_s|)\| \ \|\gamma_s-\gamma_s^{\mathrm F}\|_{\cL^1} \le N^{-1}\left(\| v\ast |\phi_s|^2\|_{\infty}+\| v\sharp |\phi_s\rangle\langle\phi_s| \ \|\right)\|\gamma_s-\gamma_s^{\mathrm F}\|_{\cL^1}\lesssim \|v\|_\infty\|\gamma_s-\gamma_s^{\mathrm F}\|_{\cL^1}\,.
	\end{split}
	\] Going back to \eqref{eq:bound_duhamel_omega_1} we find 
	\[
	\| [b(\gamma_s^{\phi_s}),\gamma_s] \|_{\cL^1}\lesssim \|\gamma_s-\gamma_s^{\mathrm F}\|_{\cL^1} +N^{1/2}T_{\mathrm{c}}^{3/4}(s)+T_{\mathrm{c}}^{3/2}(s)+T_{\mathrm{c}}(s) \,.
	\] To bound the last two terms in the integral on the r.h.s. of \eqref{eq:Duhamel_omega}, we use the elementary inequality $\|\a_s\|_{\cL^2}^2\le (1+\tr\gamma_s)\tr\gamma_s$, which follows from \eqref{eq:positivity_preserving}. We find 
	\begin{equation}\label{eq:bound_alpha_duhamel_omega}
		\|k(\a_s^{\phi_s})\a_s^*\|_{\cL^1}\le \|k(\a_s^{\phi_s})\|_{\cL^2}\|\a_s\|_{\cL^2}\le N^{-1}\|v\|_\infty\left(\|\phi_s\|^2+(1+\|\gamma_s\|_{\cL^1}^2)^{1/2}\right)\|\a_s\|_{\cL^2} \lesssim \|\a_s\|_{\cL^2}\,,
	\end{equation}and similarly 
	\[
	\|\a_s k(\a_s^{\phi_s})^*\|_{\cL^1}\lesssim \|\a_s\|_{\cL^2}\,.
	\] We conclude that 
	\begin{equation}\label{eq:Duhamel_omega_final_bound}
		\|\gamma_t-\gamma_t^{\mathrm F}\|_{\cL^1}\lesssim \left(N^{1/2} T_{\mathrm{c}}^{3/4}(s)+T_{\mathrm{c}}^{3/2}(s)+T_{\mathrm{c}}(s)\right) t + \int_0^t\left(\|\gamma_s-\gamma_s^{\mathrm F}\|_{\cL^1} + \|\a\|_{\cL^2}\right) \de s \,.
	\end{equation}
	
	We also need to control the operator norm of \eqref{eq:Duhamel_omega}. Thus, we observe that
	\begin{align}
		\|b(\gamma_s^{\phi_s})\|\le & N^{-1}\|v\ast (\r_{\gamma_s}+|\phi_s|^2)\|_{\infty}+ N^{-1}\|v\sharp(\gamma_s + |\phi_s\rangle\langle\phi_s|)\|_{\cL^2} \nonumber \\
		\lesssim & N^{-1}\|v\|_\infty\left(\|\gamma_s\|_{\cL^1}+\|\phi_s\|^2\right)\lesssim \|v\|_\infty\,, \label{eq:comparison_intermediate_bound_b_operator}
	\end{align}
	which implies
	\[
	\| [b(\gamma_s^{\phi_s}),\gamma_s] \|\le 2\|b(\gamma_s^{\phi_s})\| \ \|\gamma_s \|\lesssim \left(\|\gamma_s-\gamma_s^{\mathrm F}\|+T_{\mathrm{c}}(s) \right)\,.
	\] Similarly, we find
	\[
	\|k(\a_s^{\phi_s})\a_s^*\|\lesssim \|k(\a_s^{\phi_s})\a_s^*\|_{\cL^1}\lesssim \|\a_s\|_{\cL^2}\,, 
	\] where we used \eqref{eq:bound_alpha_duhamel_omega} in the second step. We conclude that
	\begin{equation}\label{eq:Duhamel_omega_final_operator}
		\|\gamma_t-\gamma_t^{\mathrm F}\|\lesssim T_{\mathrm{c}}(s) t + \int_0^t\left(\|\gamma_s-\gamma_s^{\mathrm F}\|+\|\a_s\|_{\cL^2}\right)\de s\,.
	\end{equation}
	
	Finally, we combine \eqref{eq:Duhamel_alpha}, \eqref{eq:comparison_intermediate_bound_b_operator}, $\|k(\a_s^{\phi_s})\|\le\|k(\a_s^{\phi_s})\|_{\cL^2}\lesssim \|v\|_\infty$, and $\|AB\|_{\cL^2}\le\|A\| \ \|B\|_{\cL^2}$ to see that
	\begin{align}
		\|\a_t\|_{\cL^2}\le & \int_0^t\left\| [b(\gamma_s^{\phi_s}),\a_s]_++[k(\a_s^{\phi_s}),\gamma_s]_+ +k(\a_s^{\phi_s})\right\|_{\cL^2}\de s \nonumber \\
		\lesssim &  T_{\mathrm{c}}(s) t + \int_0^t\left(\|\gamma_s-\gamma_s^{\mathrm F}\|+\|\a_s\|_{\cL^2}\right)\de s \label{eq:Duhamel_alpha_final_bound}
	\end{align} 
	holds.
	
	Let us define the norm
	\[
	\| \theta \|_*=T_{\mathrm{c}}^{-3/4}(s)\|\phi\|+N^{-1/2} T_{\mathrm{c}}^{-3/4}(s)\|\gamma\|_{\cL^1}+T_{\mathrm{c}}^{-1}(s)(\|\gamma\|+\|\a\|_{\cL^2})\,.
	\] 
	In combination, \eqref{eq:Duhamel_phi_final_bound},\eqref{eq:Duhamel_omega_final_bound}, \eqref{eq:Duhamel_omega_final_operator}, and \eqref{eq:Duhamel_alpha_final_bound} imply
	\[
	\|\theta_t-\theta_{t,0}\|_*\le Ct+K\int_0^t\|\theta_s-\theta_{s,0}\|_*\,,
	\] for constants $C,K>0$ independent of $t$ and $N$. Here, $\theta_t=(\phi_t,\gamma_t,\a_t)$, while $\theta_{t,0}=(\phi_t^{\mathrm H},\gamma_t^{\mathrm F},0)$. With an application of Gr\"onwall's Lemma we conclude that 
	\[
	\|\theta_t-\theta_{t,0}\|_*\le Ct e^{Kt}\,,
	\] which proves the claim.
\end{proof}
Assumption~\ref{ass:scales} guarantees the diluteness of the thermal cloud when propagated with the free time evolution at all times. We now prove that this property continues to hold for the 1-pdm evolved with the HFB equations up to a double exponential growth in time. The precise statement is captured in the following proposition.
\begin{proposition}
	\label{lem:diluteness_HFB}
	Let the pair $(\phi_0,\gamma_0)$ satisfy Assumption~\ref{ass:scales} with $s \in (0,2]$, and let $v$ be as in Theorem \ref{thm:main_2}. Moreover, let $(\phi_t,\gamma_t,\a_t)$ denote the solution to the HFB equations in \eqref{eq:HFB}, with initial datum $(\phi_0,\g_0,0)$. Then there exist constants $c,C>0$ s.t. 
	\begin{equation}
		\sup_{x,y \in\RRR^3 } |\gamma_t(x,y)| \lesssim T_{\mathrm{c}}^{{3/2}}(s) \exp(c \exp(ct)) 
	\end{equation}
	holds for $t>0$ and $N \geq C \exp(c \exp(ct))$.
\end{proposition}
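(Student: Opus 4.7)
My plan is to propagate along the HFB flow the Fourier-side $L^1$ norm of the kernel,
\[ G(t) \deq \int_{\RR^6} |\hat{\gamma}_t(p,q)| \de p \de q, \]
which dominates the sup-of-kernel via $\sup_{x,y}|\gamma_t(x,y)| \leq (2\pi)^{-3}G(t)$. Two features make $G$ the natural quantity: by Assumption~\ref{ass:scales}(B) we have $G(0) \lesssim T_{\mathrm c}^{3/2}(s)$, and $G$ is invariant under the free Schr\"odinger flow, because conjugation by $e^{\mathrm i\Delta t}$ multiplies $\hat\gamma_t(p,q)$ by the unimodular phase $e^{\mathrm i(p^2 - q^2) t}$. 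Duhamel-expanding the HFB equation around the free flow and taking pointwise modulus in Fourier variables produces
\[ G(t) \leq G(0) + \int_0^t \bigl\| \widehat{R_s}\bigr\|_{L^1(\RR^6)} \de s, \quad R_s \deq [b(\gamma_s^{\phi_s}),\gamma_s] + k(\alpha_s^{\phi_s})\alpha_s^* - \alpha_s k(\alpha_s^{\phi_s})^*, \]
and the task reduces to bounding $\|\widehat{R_s}\|_{L^1}$.

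To estimate the integrand I would use that position multiplication corresponds to Fourier convolution: the direct part of $b(\gamma_s^{\phi_s})$ is multiplication by $N^{-1}v\ast\rho_{\gamma_s^{\phi_s}}$, whose Fourier transform has $L^1$-norm $\lesssim N^{-1}\|\hat v\|_1\|\rho_{\gamma_s^{\phi_s}}\|_{L^1} \lesssim \|\hat v\|_1$ by particle-number conservation, so its commutator with $\gamma_s$ contributes $\lesssim G(s)$. The exchange part is analysed through $\widehat{v\sharp\gamma}(p,q) = (2\pi)^{-3/2}\int \hat v(k)\hat\gamma(p-k,q-k)\de k$; after splitting $\gamma_s^{\phi_s} = \gamma_s + |\phi_s\rangle\langle\phi_s|$, the $\gamma_s$-piece of the commutator with $\gamma_s$ can only be bounded by $\lesssim N^{-1}G(s)^2$ (operator products do not factorize in the Fourier-$L^1$ norm, and only a quadratic marginal-integral bound is available), whereas the $|\phi_s\rangle\langle\phi_s|$-piece gives $\lesssim N^{-1}\|\phi_s\|_{H^2}^2 G(s)$ via $\|\hat\phi_s\|_1 \lesssim \|\phi_s\|_{H^2}$. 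The $\alpha_s$-terms use $k(\alpha^\phi)=N^{-1}v\sharp\alpha^\phi$ together with the Hilbert--Schmidt bound $\|\alpha_s\|_{\mathcal L^2}\lesssim T_{\mathrm c}(s)\,e^{cs}$ extracted from \eqref{eq:Duhamel_alpha_final_bound}; they yield only subleading driving terms.

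Putting this together gives
\[ G(t) \leq T_{\mathrm c}^{3/2}(s) + \int_0^t \Bigl(C_1 + C_2 N^{-1} \|\phi_s\|_{H^2}^2\Bigr)G(s)\de s + N^{-1}\int_0^t G(s)^2\de s + \int_0^t D(s)\de s, \]
with $D(s)\lesssim e^{cs}$ by Proposition~\ref{lem:closeness_dynamics}. Standard Hartree regularity for \eqref{eq:timedepHartree} (using $v\in W^{1,p}$, $p>3$, and energy conservation) propagates $\|\phi_s\|_{H^2}^2 \lesssim N e^{cs}$, so the coefficient of $G(s)$ grows exponentially in $s$; a bootstrap $G(s)\leq N$ on $[0,t]$ linearizes the quadratic term to $N^{-1}G(s)^2\leq G(s)$, and Grönwall's lemma applied to the resulting linear inequality with exponentially-growing coefficient then produces the double-exponential bound $G(t)\lesssim T_{\mathrm c}^{3/2}(s)\exp(c\exp(ct))$, which closes the bootstrap precisely under the stated hypothesis $N\geq C\exp(c\exp(ct))$. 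The main obstacle is that the Fourier-$L^1$ norm does not factorize under operator composition, so the exchange commutator $[v\sharp\gamma_s,\gamma_s]/N$ is genuinely quadratic in $G$; the bootstrap tied to the $N$-largeness assumption is the device converting this quadratic feedback into a linear one, and the double-exponential rate itself originates from the exponential-in-time growth of $\|\phi_s\|_{H^2}^2/N$ along the Hartree flow entering multiplicatively in the Grönwall coefficient.
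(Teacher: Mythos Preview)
Your overall plan is the right one and matches the paper: propagate the Fourier--$L^1$ norm $G(t)=\int|\hat\gamma_t(p,q)|\,\de p\de q$ using Duhamel around the free flow, obtain an integral inequality with a polynomial nonlinearity coming from exchange-type terms, and close via a Gr\"onwall argument whose blow-up time is pushed past $t$ by the hypothesis $N\geq C\exp(c\exp(ct))$. Two steps, however, do not go through as written.

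\emph{First}, the $\alpha$-terms cannot be controlled in the Fourier--$L^1$ norm using only the Hilbert--Schmidt bound $\|\alpha_s\|_{\cL^2}\lesssim T_{\mathrm c}\,e^{cs}$. What you need is $\|\widehat{k(\alpha_s^{\phi_s})\alpha_s^*}\|_{L^1(\RR^6)}$, and since $\widehat{AB}(p,q)=c\int\hat A(p,r)\hat B(r,q)\,\de r$, the only bound available is $\|AB\|_{\mathrm d}\lesssim\bigl(\sup_r\int|\hat A(p,r)|\de p\bigr)\|B\|_{\mathrm d}$; in particular the $\alpha_s^*$ factor forces you to control $\|\alpha_s\|_{\mathrm d}=\int|\hat\alpha_s|$, not $\|\alpha_s\|_{\cL^2}$. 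The HS norm does not dominate the Fourier--$L^1$ norm of a kernel, and there is no Cauchy--Schwarz trick that turns the product into one. The paper therefore tracks $\|\alpha_t\|_{\mathrm d}$ alongside $G(t)$.

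\emph{Second}, the $\phi_s$ entering $b(\gamma_s^{\phi_s})$ is the HFB condensate, not the Hartree solution $\phi_s^{\mathrm H}$, so ``standard Hartree regularity for \eqref{eq:timedepHartree}'' does not give you $\|\phi_s\|_{H^2}^2\lesssim N e^{cs}$. The HFB well-posedness theorem places $\phi_t$ in $H^3$ but without $N$-uniform quantitative bounds. The paper's remedy is to split $\phi_s=(\phi_s-\phi_s^{\mathrm H})+\phi_s^{\mathrm H}$, use Lemma~\ref{lem:diluteness_hartree} to get $\|\hat\phi_s^{\mathrm H}\|_1\lesssim\sqrt{N}\,e^{cs}$, and propagate $\|\phi_t-\phi_t^{\mathrm H}\|_{\mathrm d}$ as a third unknown. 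The upshot is a \emph{coupled} integral inequality for $u(t)=T_{\mathrm c}^{-3/2}\|\gamma_t\|_{\mathrm d}+T_{\mathrm c}^{-3/2}\|\alpha_t\|_{\mathrm d}+\|\phi_t-\phi_t^{\mathrm H}\|_{\mathrm d}$, with a cubic nonlinearity $N^{-1}\int_0^t u^3$, closed by the nonlinear Gr\"onwall Lemma~\ref{lem:GronwallQuad} (your bootstrap $G\leq N$ is morally equivalent, but must be run on the full vector $u$, not on $G$ alone).
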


In order to prove Proposition~\ref{lem:diluteness_HFB} we need two lemmas. The first one concerns the growth of the $L^1$-norm of the Fourier transform of the solution to the time-dependent Hartree equation in time.
\begin{lemma}\label{lem:diluteness_hartree}
	Assume that $\phi_t$ is a solution to the time-dependent Hartree equation in \eqref{eq:timedepHartree} with initial condition $\phi_0 \in H^2(\mathbb{R}^3)$ and $v$ satisfies the same assumptions as in Theorem \ref{thm:main_2}. There exists a constant $c>0$ s.t.
	\begin{equation*}
		\Vert \widehat{\phi}_t \Vert_1 \lesssim \Vert \widehat{\phi}_0 \Vert_1 \exp(ct)\,,
	\end{equation*} where $\widehat{\phi}_t$ denotes the Fourier transform of $\phi_t$.
\end{lemma}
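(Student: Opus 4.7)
The plan is to pass to Fourier space, use Duhamel's formula around the free Schr\"odinger propagator (which is pointwise unitary in the momentum variable $p$), and close a Gr\"onwall estimate directly on the quantity $\|\hat{\phi}_t\|_1$. The essential reason this should work is the interplay between convolution--multiplication duality under the Fourier transform and Young's inequality.

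Writing $W_t(x) = N^{-1}(v * |\phi_t|^2)(x)$ and Fourier-transforming \eqref{eq:timedepHartree}, the Duhamel representation becomes
\[
	\hat{\phi}_t(p) = e^{-\mathrm{i} p^2 t}\hat{\phi}_0(p) - \mathrm{i} \int_0^t e^{-\mathrm{i} p^2 (t-s)} \widehat{W_s\phi_s}(p)\de s.
\]
Since $|e^{-\mathrm{i} p^2 (t-s)}|=1$, taking $L^1(\mathbb{R}^3_p)$ norms yields
\[
	\|\hat{\phi}_t\|_1 \le \|\hat{\phi}_0\|_1 + \int_0^t \|\widehat{W_s\phi_s}\|_1 \de s.
\]
The pointwise product $W_s\phi_s$ becomes (up to a constant) a convolution $\hat{W}_s \ast \hat{\phi}_s$, so Young's convolution inequality gives $\|\widehat{W_s\phi_s}\|_1 \lesssim \|\hat{W}_s\|_1 \|\hat{\phi}_s\|_1$. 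Dually, $W_s$ is itself a convolution in position space, so $\hat{W}_s$ is (up to a constant) proportional to $N^{-1}\hat{v}\cdot\widehat{|\phi_s|^2}$. Using H\"older, the trivial bound $\|\widehat{|\phi_s|^2}\|_\infty \lesssim \|\phi_s\|_2^2$, the $L^2$-mass conservation for the Hartree flow, and $\|\phi_0\|_2^2 \le n(\phi_0,\gamma_0) = N$ from Assumption~\ref{ass:scales}, we obtain
\[
	\|\hat{W}_s\|_1 \lesssim N^{-1}\|\hat{v}\|_1 \|\phi_s\|_2^2 \lesssim \|\hat{v}\|_1,
\]
uniformly in $s$ and $N$ (here the mean-field $N^{-1}$ prefactor exactly cancels the $O(N)$ mass of the condensate).

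Combining these ingredients gives $\|\hat{\phi}_t\|_1 \le \|\hat{\phi}_0\|_1 + C\int_0^t \|\hat{\phi}_s\|_1 \de s$, and Gr\"onwall's lemma yields the claim with $c$ independent of $N$. The hypothesis $\phi_0 \in H^2(\mathbb{R}^3)$ ensures $\|\hat{\phi}_0\|_1<\infty$ by Cauchy--Schwarz against $(1+|p|^2)^{-1} \in L^2(\mathbb{R}^3)$. There is no real obstacle: the only point worth highlighting is that the uniformity in $N$ rests entirely on the mean-field scaling and the conservation of mass along the Hartree flow; had the $N^{-1}$ factor been absent, the constant in the exponent would degenerate.
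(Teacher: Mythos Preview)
Your proof is correct and follows essentially the same approach as the paper's own proof: Duhamel around the free propagator, pass to $L^1$ in Fourier space, use Young's inequality on the convolution $\hat W_s \ast \hat\phi_s$, bound $\|\hat W_s\|_1$ by $N^{-1}\|\hat v\|_1\|\phi_s\|_2^2 \lesssim \|\hat v\|_1$ via mass conservation, and close with Gr\"onwall. The only addition you make is the explicit remark that $\phi_0\in H^2$ guarantees $\|\hat\phi_0\|_1<\infty$, which the paper leaves implicit.
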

\begin{proof}
	We use Duhamel's formula to write $\phi_t$ as
	\[
	\phi_t = e^{\mathrm{i}\D t}\phi_0 -\mathrm{i} \int_0^t e^{\mathrm{i}(t-s)\D}\left[\left( N^{-1} v\ast|\phi_s|^2\right)\phi_s\right]\de s\,.
	\] 
	When take the $L^1$-norm in Fourier space on both sides, we find 
	\[
	\begin{split}
		\|\wh{\phi}_t\|_1\le & \|\wh{\phi}_0\|_1 + N^{-1}\int_0^t\|\hat v \ \wh{|\phi_s|^2}\|_1\|\wh{\phi}_s\|_1\de s\\
		\le & \|\wh{\phi}_0 \|_1 + N^{-1}\| \hat v\|_1\int_0^t\| \ \wh{|\phi_s|^2}\|_\infty\|\wh{\phi}_s\|_1\de s \\ 
		\le &  \|\wh{\phi}_0\|_1 + N^{-1}\|\hat v\|_1\|\phi_0\|^2\int_0^t\|\wh{\phi}_s\|_1\de s\lesssim \|\wh{\phi}_0\|_1 + \int_0^t\|\wh{\phi}_s\|_1\de s \,,
	\end{split}
	\] 
	where we used the conservation of the $L^2$-mass in the last step. An application of Gr\"onwall's Lemma yields
	\begin{equation*}
		\|\wh{\phi}_s \|_1 \lesssim \| \wh {\phi}_0 \|_1 \exp(ct)\,,
	\end{equation*} 
	which proves the claim. 
\end{proof}
\begin{remark}
	\label{remark:HartreeH2}
	If $v \geq 0$ or $\hat{v} \geq 0$ it can be shown that there exists a constant $C>0$ depending only on $\Vert \phi_0 \Vert_2$, $\mathcal{E}^{\mathrm{H}}(\phi)$, with $\mathcal{E}^{\mathrm{H}}$ defined in \eqref{eq:Hartreefunctional}, and $\epsilon>0$ s.t.
	\begin{equation}
		\Vert \phi_t \Vert_{H^2(\mathbb{R}^3)} \leq C (1+|t|^{2+\epsilon}) \Vert \phi_0  \Vert_{H^2(\mathbb{R}^3)} \,.
		\label{eq:polynomialtimedep}
	\end{equation}
	This result follows from the analysis in \cite[Section~5]{Soh2011}. It allows us to replace the exponential time dependence in Lemma~\ref{lem:diluteness_hartree} by the polynomial time-dependence on the r.h.s. of \eqref{eq:polynomialtimedep} (because the $H^2(\mathbb{R}^3)$-norm of a function dominates the $L^1(\mathbb{R}^3)$-norm of its Fourier transform), the double exponential time-dependence in Proposition~\ref{lem:diluteness_HFB} by $\exp(c(t+t^{3+\epsilon}))$, and the triple exponential time-dependence in Theorem~\ref{thm:main_2} by $\exp(c \exp( c (t+t^{3+\epsilon})))$.
\end{remark} The second lemma is a generalization of Gr\"onwall's inequality which follows from \cite[Corollary 3.2]{GronwallQuad}.
\begin{lemma}\label{lem:GronwallQuad}
	Let $p\in\NN\setminus\{0,1\}$, $T>0$ and let $a,b,c$ be continuous, nonnegative and nondecreasing functions on $[0,T]$ such that $e^{a(t)(p-1)T}< 1+b(t)/(a(t)^{p-1}c(t))$, for every $t\in[0,T]$. If $u\in L^\infty[0,T]$ satisfies $u(t)\ge 0$ and
	\[
	u(t)\le a(t)+b(t)\int_0^tu(s)\de s+c(t)\int_0^tu^p(s)\de s
	\] for a.e. $t\in[0,T]$, then, for a.e. $t\in[0,T]$, we have 
	\begin{equation*}
		u(t)\le a(t)\left[1-\frac{a(t)^{p-1}c(t)}{b(t)}\left(e^{(p-1)b(t)t}-1\right)\right]^{-\frac 1{p-1}}e^{b(t)t}\,.
	\end{equation*}
\end{lemma}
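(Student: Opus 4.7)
The plan is to derive the bound by a comparison argument with the scalar Bernoulli ODE $W' = b W + c W^p$, which can be solved in closed form via the substitution $Z = W^{1-p}$. The same substitution can be applied directly to the integral inequality satisfied by $u$. The statement is claimed to follow from \cite[Corollary 3.2]{GronwallQuad}; the sketch below reproduces the key computation.

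First I would freeze the nondecreasing coefficients. Fix $t \in (0,T]$. Since $a,b,c$ are nondecreasing and $u \ge 0$, replacing $a(\tau), b(\tau), c(\tau)$ by $a(t), b(t), c(t)$ for all $\tau \in [0,t]$ preserves the inequality, so the auxiliary function
\[
V(\tau) = a(t) + b(t) \int_0^\tau u(s)\de s + c(t) \int_0^\tau u(s)^p \de s, \qquad \tau \in [0,t],
\]
satisfies $u(\tau) \le V(\tau)$ and $V(0) = a(t)$. Because $u \in L^\infty[0,T]$, $V$ is absolutely continuous, and using $u \le V$ together with the monotonicity of $y \mapsto b(t) y + c(t) y^p$ on $[0,\infty)$ one obtains the Bernoulli-type differential inequality
\[
V'(\tau) \le b(t)\, V(\tau) + c(t)\, V(\tau)^p \qquad \text{a.e. on } [0,t].
\]
The degenerate case $a(t) = 0$ can be handled by replacing $a(t)$ with $a(t) + \varepsilon$, running the argument, and sending $\varepsilon \to 0^+$ at the end.

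Next I would linearise via the substitution $Y(\tau) = V(\tau)^{1-p}$. Since $p \ge 2$ and $V > 0$, $Y$ is well-defined and absolutely continuous, and the estimate on $V'$ becomes
\[
Y'(\tau) + (p-1)\, b(t)\, Y(\tau) \ge -(p-1)\, c(t) \qquad \text{a.e.},
\]
the sign flipping because the exponent $1-p$ is negative. Multiplying by the integrating factor $e^{(p-1) b(t) \tau}$, integrating from $0$ to $\tau$, and using $Y(0) = a(t)^{1-p}$ yields
\[
Y(\tau) \ge e^{-(p-1) b(t) \tau}\, a(t)^{1-p} \left[ 1 - \frac{c(t)\, a(t)^{p-1}}{b(t)} \bigl( e^{(p-1) b(t) \tau} - 1 \bigr) \right].
\]

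Finally I would invoke the smallness hypothesis. Reading the assumption as $e^{(p-1) b(t) T} < 1 + b(t)/(a(t)^{p-1} c(t))$ (which is consistent with the closed form appearing in the conclusion; the $a(t)$ in the exponent of the statement as written appears to be a typographical slip for $b(t)$), the bracketed factor is strictly positive for all $\tau \in [0,t] \subset [0,T]$. Hence $Y(\tau) > 0$, and one may raise to the power $-1/(p-1)$, which reverses the inequality since this map is decreasing on $(0,\infty)$. Evaluating at $\tau = t$ and using $u(t) \le V(t) = Y(t)^{-1/(p-1)}$ produces exactly the displayed bound. The only real obstacle is the regularity bookkeeping (absolute continuity of $V$ and $Y$, legitimacy of the integrating-factor step for a.e.-inequalities) and the handling of the edge case $a(t) = 0$; once those are dispatched, the argument reduces to the explicit solution of a first-order linear ODE.
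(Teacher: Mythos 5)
Your derivation is correct and self-contained, which is more than the paper itself supplies: the paper states Lemma~\ref{lem:GronwallQuad} as a direct consequence of \cite[Corollary~3.2]{GronwallQuad} and gives no argument. Your route is the classical Bernoulli one: freeze the nondecreasing coefficients at the endpoint $t$, majorise $u$ on $[0,t]$ by the absolutely continuous envelope $V$, pass to the differential inequality $V' \le b(t)V + c(t)V^p$, substitute $Y = V^{1-p}$ (correctly flipping the inequality since $1-p<0$), and integrate against the factor $e^{(p-1)b(t)\tau}$. The steps all check out; in particular $Y$ is absolutely continuous because $V \ge a(t) > 0$ keeps $y\mapsto y^{1-p}$ Lipschitz on the relevant range, and the monotonicity of $y \mapsto b(t)y + c(t)y^p$ on $[0,\infty)$ is exactly what lets you replace $u$ by $V$ on the right-hand side. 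Your reading of the hypothesis as $e^{(p-1)b(t)T} < 1 + b(t)/(a(t)^{p-1}c(t))$, with $b(t)$ rather than $a(t)$ in the exponent, is indeed what the positivity of the bracket requires and is also the condition implicit in the remark immediately following the lemma, where the blow-up time satisfies $(p-1)b(T^*)T^* = \ln\bigl(1 + b(T^*)/(a(T^*)^{p-1}c(T^*))\bigr)$; the $a(t)$ in the printed hypothesis is a typographical slip. One minor point beyond the $a(t)=0$ edge case you already flag: the displayed bound presumes $b(t)>0$ because of the $1/b(t)$ factor; the degenerate case $b(t)=0$ is recovered in the limit, giving $u(t) \le a(t)\bigl[1-(p-1)a(t)^{p-1}c(t)\,t\bigr]^{-1/(p-1)}$, and your $\varepsilon$-regularisation strategy handles it the same way.
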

\begin{remark}
	As a consequence of the power-law nonlinearity in the integral inequality in Lemma~\ref{lem:GronwallQuad}, the bound for $u(t)$ blows up as $t$ approaches the smallest time $T^*$ such that
	\begin{equation*}
		T^*=b(T^*)^{-1}(p-1)^{-1}\ln(1+b(T^*)/(a(T^*)^{p-1}c(T^*)))\,.
	\end{equation*}
	For times satisfying $0 < t \leq T^*/2$, the nonlinearity is irrelevant and functions obeying the integral inequality satisfy the standard Gr\"onwall estimate 
	\begin{equation*}
		u(t) \lesssim a(t) \exp(b(t)t)\,.
	\end{equation*}
\end{remark} 

We are now prepared to give the proof of Proposition~\ref{lem:diluteness_HFB}.

\begin{proof}[Proof of Proposition~\ref{lem:diluteness_HFB}]
	For a Hilbert-Schmidt operator $\gamma$ we use the notation $\|\gamma\|_d=\int|\hat \gamma(p,q)|\de p\de q$, where $\hat\gamma(p,q)$ denotes the Fourier transform of the integral kernel of $\gamma$. For $\phi\in L^2(\RR^3)$ we also define $\|\phi\|_\rmd=\|\wh\phi\|_1$. We will prove that there exists a constant $c > 0$, independent of $t$ and $N$, such that 
	\begin{align*}
		\|\gamma_t\|_\rmd \lesssim& T_{\mathrm{c}}^{3/2}(s) \exp(c  \exp(c t))\,, \\
		\|\a_t\|_\rmd \lesssim& T_{\mathrm{c}}^{3/2}(s) \exp(c  \exp(c t))\,, \\
		\|\phi_t-\phi_t^{\mathrm H}\|_\rmd \lesssim&  \exp(c \exp(c t))\,,
	\end{align*} 
	holds for $t>0$, and $N$ large enough depending on $t$, which implies the claim. To this end, we start by noticing that the operator $e^{-\mathrm{i}A(t-s)}$, with $A$ as in \eqref{eq:linpartHFB}, acts in Fourier space by multiplication with a complex phase, and therefore preserves the norm $\|\cdot\|_\rmd$ for $\phi$, $\gamma$, and $\alpha$, for every $t>s>0$. Thus \eqref{eq:Duhamel_omega} implies, for $t>0$,
	\begin{equation}\label{eq:Duhamel_gamma_1}
		\|\gamma_t\|_\rmd\le \|\gamma_t^{\mathrm F}\|_{\mathrm d} + \int_0^t\left\|b(\gamma_s^{\phi_s})\gamma_s+k(\a_s^{\phi_s})\a_s^*-\a_s k(\a_s^{\phi_s})^*\right\|_{\mathrm d}\de s \,,
	\end{equation}
	with $\gamma^{\mathrm F}_t$ defined below \eqref{eq:closeness_dynamics}. Young's convolution inequality implies 
	\begin{equation*}
		\left\|(N^{-1}v\ast \r_{\gamma_s^{\phi_s}})\gamma_s\right\|_\rmd = N^{-1}\left\|\left(\hat v\wh{\r_{\gamma_s^{\phi_s}}}\right)\ast_1\hat{\gamma}_s(\cdot,\cdot)\right\|_1 
		\le N^{-1}\| \hat v \|_{1}  \left( \|\phi_0\|^2+\|\g_0\|_1\right) \|\gamma_s\|_\rmd\lesssim \|\gamma_s\|_\rmd\,,
	\end{equation*}	
	where $f\ast_i\hat\gamma_s(\cdot,\cdot)$ denotes the convolution of $f$ with the $i$-th variable of the kernel $\hat\gamma_s(p,q)$. A straightforward computation shows that, for any Hilbert-Schmidt operator $\gamma$, we have 
	\begin{equation} \label{eq:exchange_fourier}
		\wh{(v\sharp\gamma)}(p,q)=(2\pi)^{-3}\int_{\mathbb{R}^3} \hat v(\x)\hat\gamma(q-\x,p-\x)\de\x\,.
	\end{equation} It follows that
	\[
	\begin{split}
		\|N^{-1}\left(v\sharp\gamma_s^{\phi_s}\right)\gamma_s\|_\rmd = & N^{-1}\int_{\mathbb{R}^6} \left|\int_{\mathbb{R}^3} \hat v(\x)\left(\hat \gamma_s(z-\x,p-\x)+\cc{\wh{\phi}_s(z-\x)}\wh{\phi}_s(p-\x)\right)\hat\gamma_s(z,q)\de\x\de z\right|\de p\de q \\
		\le & N^{-1}  \|\gamma_s\|_\rmd\cdot\left(\sup_{z\in\RRR^3}\int_{\mathbb{R}^6}|\hat v(\x)|\left(|\hat \gamma_s(z-\x,p-\x) |+|\wh{\phi}_s(z-\x)\wh{\phi}_s(p-\x)|  \right)\de\x\de p\right)\,. 
	\end{split}
	\] 
	Bounding $|\hat v(\x)|\le \|v\|_1$ and integrating first in $p$ and then in $\x$, we find 
	\[
	\begin{split}
		\|N^{-1}\left(v\sharp\gamma_s^{\phi_s}\right)\gamma_s\|_\rmd \le & N^{-1}\|v\|_1\left(\|\gamma_s\|_\rmd+ \|\phi_s\|_\rmd^2 \right) \|\gamma_s\|_\rmd \\
		\lesssim & N^{-1}\|v\|_1\left(\|\gamma_s\|_\rmd+ \|\phi_s-\phi_s^{\mathrm H}\|_\rmd^2+\|\phi_s^{\mathrm H}\|_\rmd^2 \right) \|\gamma_s\|_\rmd\\
		\lesssim & N^{-1}\|\gamma_s\|_\rmd^2+N^{-1}\|\gamma_s\|_\rmd\|\phi_s-\phi_s^{\mathrm H}\|_\rmd^2 + \exp(ct) \|\gamma_s\|_\rmd\,.
	\end{split}
	\]
	In the last step we used Lemma~\ref{lem:diluteness_hartree}. Similarly, we see that
	\[
	\begin{split}
		\left\|k(\a^{\phi_s})\a_s^*-\a_sk(\a^{\phi_s})^*\right\|_\rmd\le & N^{-1}\|v\|_1\left(\|\a_s\|_\rmd+ \|\phi_s\|_\rmd^2 \right) \|\a_s\|_\rmd \\
		\le & N^{-1}\|v\|_1\left(\|\a_s\|_\rmd+ \|\phi_s-\phi_s^{\mathrm H}\|_\rmd^2+\|\phi_s^{\mathrm H}\|_\rmd^2 \right) \|\a_s\|_\rmd\\
		\lesssim & N^{-1}\|\a_s\|_\rmd^2+N^{-1}\|\a_s\|_\rmd\|\phi_s-\phi_s^{\mathrm H}\|_\rmd^2 + \exp(ct) \|\a_s\|_\rmd\,.
	\end{split}
	\] In combination, \eqref{eq:Duhamel_gamma_1} and the above bounds imply
	\begin{align}
		\|\gamma_t\|_\rmd\le & \|\gamma_t^{\mathrm F}\|_{\mathrm d} + C \exp(ct) \int_0^t \left( \|\gamma_s\|_\rmd+\|\a_s\|_\rmd \right) \de s \nonumber \\
		& +  CN^{-1}\int_0^t \left[ \|\gamma_s\|_\rmd^2+\|\a_s\|_\rmd^2+\|\phi_s-\phi_s^{\mathrm H}\|_\rmd^2\left(\|\gamma_s\|_\rmd+\|\a_s\|_\rmd\right) \right] \de s \,.\label{eq:Duhamel_final_bound_omega_1}
	\end{align}
	
	Next, we consider the r.h.s. of \eqref{eq:Duhamel_alpha}. Estimates similars to the ones above show
	\begin{align*}
		\left\| [b(\gamma_s^{\phi_s}),\a_s]_++[k(\a_s^{\phi_s}),\gamma_s]_+ \right\|_\rmd\lesssim & N^{-1}\|v\|_1\Big( \left( \|\phi_s-\phi_s^{\mathrm H}\|_\rmd^2+\|\phi_s^{\mathrm H}\|_\rmd^2 \right)\left(\|\gamma_s\|_\rmd + \|\a_s\|_\rmd \right) +  \|\gamma_s\|_\rmd \|\a_s\|_\rmd \Big)\\
		\lesssim & N^{-1}\left(\|\gamma_s\|_\rmd+\|\a_s\|_\rmd\right)\|\phi_s-\phi_s^{\mathrm H}\|_\rmd^2 + \exp(ct) \left(\|\gamma_s\|_\rmd+\|\a_s\|_\rmd\right)+N^{-1}\|\gamma_s\|_\rmd\|\a_s\|_\rmd
	\end{align*}
	and 
	\[
	\|k(\a_s^{\phi_s})\|_\rmd\le N^{-1}\|\hat v\|_1\left( \|\a_s\|_\rmd+\|\phi_s-\phi_s^{\mathrm H}\|_\rmd^2+\|\phi_s^{\mathrm H}\|_\rmd^2 \right)\lesssim N^{-1}\|\a_s\|_\rmd+N^{-1}\|\phi_s-\phi_s^{\mathrm H}\|_\rmd^2+\exp(ct)\,,
	\] which implies 
	\begin{align}
		\|\a_t\|_\rmd\lesssim & \exp(ct)+\exp(ct) \int_0^t \left( \|\gamma_s\|_\rmd+\|\a_s\|_\rmd \right) \de s \nonumber \\
		& +  N^{-1}\int_0^t \left[ \|\gamma_s\|_\rmd^2+\|\a_s\|_\rmd^2+\|\phi_s-\phi_s^{\mathrm H}\|_\rmd^2\left(1+\|\gamma_s\|_\rmd+\|\a_s\|_\rmd\right) \right] \de s \,. \label{eq:Duhamel_final_bound_alpha_1}
	\end{align}
	
	Finally, we consider \eqref{eq:Duhamel_phi}. We use the identity 
	\[
	|\phi_s|^2-|\phi_s^{\mathrm H}|^2=\phi_s(\cc{\phi_s-\phi_s^{\mathrm H}})+(\phi_s-\phi_s^{\mathrm H})\cc{\phi_s^{\mathrm H}}\,,
	\] Lemma~\ref{lem:diluteness_hartree}, and Young's inequality to see that
	\[
	\begin{split}
		\left\|\left( N^{-1} v\ast(|\phi_s|^2-|\phi_s^{\mathrm H}|^2)\right)\phi_s^{\mathrm H} \right\|_\rmd \le & N^{-1} \left\| \hat v\left(\left(\wh{\cc{\phi_s}}-\wh{\cc{\phi_s^{\mathrm H}}}\right)\ast\wh{\phi_s}+\left(\phi_s-\phi_s^{\mathrm H}\right)\wh{\cc{\phi_s^{\mathrm H}}}\right) \right\|_1\|\phi_s^{\mathrm H}\|_\rmd\\
		\lesssim & N^{-1/2} \exp(ct) \|v\|_2\|\phi_s-\phi_s^{\mathrm H}\|_\rmd\left(\|\phi_s\|+\|\phi_s^{\mathrm H}\|\right)\lesssim \exp(ct) \|\phi_s-\phi_s^{\mathrm H}\|_\rmd\, ,
	\end{split}
	\] as well as
	\[
	\begin{split}
		\left\|\left(N^{-1} v\ast|\phi_s|^2\right)(\phi_s-\phi_s^{\mathrm H})\right\|_\rmd\le & N^{-1} \|\hat v  \wh{|\phi_s|^2}\|_1\|\phi_s-\phi_s^{\mathrm H}\|_\rmd\lesssim\|\phi_s-\phi_s^{\mathrm H}\|_\rmd\,.
	\end{split}
	\] Out of the remaining terms on the r.h.s. of \eqref{eq:Duhamel_phi}, the direct term is bounded by
	\[
	\begin{split}
		\|N^{-1}v\ast \r_{\g_s}\phi_s\|_\rmd = & N^{-1} \|(\hat v\wh{\r_{\g_s}})\ast\wh\phi_s\|_1 \le N^{-1}  \|\hat v\wh{\r_{\g_s}}\|_1\|\wh\phi_s\|_1 \,.
	\end{split}
	\] We use the identity $\wh{\r_{\g_s}}(\x)=(2\pi)^{-3/2}\int\hat\gamma_s(p,p-\x)\de p$ to see that
	\[
	\|\wh{\r_{\g_s}}\|_1\le(2\pi)^{-3/2}\int_{\mathbb{R}^6}|\hat\gamma_s(p,p-\x)|\de p \de \x\lesssim \|\gamma_s\|_\rmd\,,
	\] which implies
	\begin{equation*}
		\|N^{-1}v\ast \r_{\g_s}\phi_s\|_\rmd \lesssim  N^{-1} \|v\|_1\| \gamma_s \|_\rmd\left(\|\phi_s-\phi_s^{\mathrm H}\|_\rmd+\|\phi_s^{\mathrm H}\|_\rmd\right) 
		\lesssim  N^{-1}\|\gamma_s \|_\rmd\|\phi_s-\phi_s^{\mathrm H}\|_\rmd + N^{-1/2} \exp(ct) \|\gamma_s\|_\rmd\,.
	\end{equation*}
	Using \eqref{eq:exchange_fourier} another time, we can bound the remaining term on the r.h.s. of \eqref{eq:Duhamel_phi} by
	\[
	\begin{split}
		\|N^{-1}v\sharp(\gamma_s+\a_s)\phi_s\|_\rmd \le & N^{-1} \int_{\mathbb{R}^3}\left|\int_{\mathbb{R}^6}\hat v(\x)\Big(\hat\gamma(q-\x,p-\x)+\hat\a(q-\x,p-\x)\Big)\wh\phi_s(q)\de\x\de q\right|\de p\\
		\le & N^{-1} \|v\|_1\left(\|\gamma_s\|_\rmd+\|\a_s\|_\rmd\right)\left(\|\phi_s-\phi_s^{\mathrm H}\|_\rmd+\|\phi_s^{\mathrm H}\|_\rmd\right)\\
		\lesssim & N^{-1}\left(\|\gamma_s\|_\rmd+\|\a_s\|_\rmd\right)\|\phi_s-\phi_s^{\mathrm H}\|_\rmd + N^{-1/2} \exp(ct) \left(\|\gamma_s\|_\rmd + \|\a_s\|_\rmd\right)\,.
	\end{split}	
	\] In combination, \eqref{eq:Duhamel_phi} and the above estimates imply 
	\begin{equation}\label{eq:Duhamel_phi_final_1}
		\|\phi_t-\phi_t^{\mathrm H}\|_\rmd\lesssim \exp(ct) \int_0^t \left[ \|\phi_s-\phi_s^{\mathrm H}\|_\rmd+N^{-1/2}\left(\|\gamma_s\|_\rmd+\|\a_s\|_\rmd\right) \right] \de s + N^{-1}\int_0^t\|\phi_s-\phi_s^{\mathrm H}\|_\rmd\left(\|\gamma_s\|_\rmd+\|\a_s\|_\rmd\right)\de s\,.
	\end{equation}
	
	Let us define $u(t)=T_{\mathrm{c}}^{-3/2}(s)\|\gamma_t\|_\rmd+T_{\mathrm{c}}^{-3/2}(s)\|\a_t\|_\rmd+\|\phi_t-\phi_t^{\mathrm H}\|_\rmd$. We collect the bounds in \eqref{eq:Duhamel_final_bound_omega_1}, \eqref{eq:Duhamel_final_bound_alpha_1}, and \eqref{eq:Duhamel_phi_final_1}, and use the assumption $s \leq 2$ to find 
	\[
	\begin{split}
		u(t)\lesssim & 1 + \exp(ct) T_{\mathrm{c}}^{-3/2}(s) + \exp(ct) \int_0^t u(z) \de z+N^{-1} T_{\mathrm{c}}^{3/2}(s)\int_0^t u(z)^2\de z +N^{-1}\int_0^t u(z)^3\de z\\
		\lesssim &  \exp(ct)+ \exp(ct) \int_0^t u(z)\de z+ N^{-1}\int_0^t u(z)^3\de z\,,
	\end{split}
	\] 
	for every $t>0$ and some $c>0$ independent of $t$ and $N$. For any $t>0$, the assumptions of Lemma~\ref{lem:GronwallQuad} are satisfied with $T=t$, as long as $N \geq C \exp(c\exp(ct))$,
	and we conclude that, under this condition,
	\[
	u(t) \lesssim \exp(c  \exp(c t))\,,
	\] which proves the claim.
\end{proof}
\section{Fluctuation dynamics}
\label{sec:fluctdyn}
As explained in Section~\ref{sec:proofstrategy}, the proof of Theorem~\ref{thm:main_2} is based on the construction of a suitable fluctuation dynamics on the double Fock space $\mathscr{F}(\fh \oplus \fh)$ with $\fh = L^2(\mathbb{R}^3)$. In this section we construct the fluctuation dynamics and discuss some of its properties. We start by recalling a few well known facts about Weyl operators and Bogoliubov transformations. 
\subsection{Weyl operators and Bogoliubov transformations}
\label{sec:weylandBogtrafos}
We recall the definition of the Weyl operator $W(\phi)$ acting on $\mathscr{F}(\fh)$ in \eqref{eq:Weyltrafo} as well as the Weyl operator $\mathcal{W}(\phi)$ acting on $\mathscr{F}(\fh \oplus \fh)$ in \eqref{eq:doubleFPtrafos}. Both operators act as shifts on creation and annihilation operators, i.e.
\begin{equation}\label{eq:Weyl_shift}
	W(\phi)^* a_x W(\phi)= a_x+\phi(x)\,, \qquad W(\phi)^* a_x^* W(\phi)= a_x^*+\cc{\phi(x)}\,,
\end{equation} 
and
\begin{align}
	\cW(\phi)^* a_{\ell,x} \cW(\phi)&= a_{\ell,x}+\phi(x)\,, \qquad \mathcal{W}(\phi)^* a_{\ell,x}^* \mathcal{W}(\phi)= a_{\ell,x}^*+\cc{\phi(x)}\,, \nonumber\\
	\cW(\phi)^* a_{r,x} \cW(\phi)&= a_{r,x}+\cc{\phi(x)}\,, \qquad \mathcal{W}(\phi)^* a_{r,x}^* \mathcal{W}(\phi)= a_{r,x}^*+\phi(x)\,. \label{eq:Weyl_shift_double}
\end{align}
To express \eqref{eq:Weyl_shift_double} in a more compact form, we denote $\xx = ( \sigma, x ) \in \{ \ell, r \} \times \RR^3 $ and for a given $\phi\in\fh$ we define $\pphi:\{\ell,r\} \times \RR^3\to \CC$ by 
\[
\pphi(\xx) = \pphi(\sigma,x) = \begin{cases}
	\phi(x), & \sigma = \ell\,, \\
	\cc{\phi(x)}, & \sigma = r \,.
\end{cases}
\] 
This allows us to write \eqref{eq:Weyl_shift_double} as
\begin{equation}\label{eq:Weyl_shift_compact}
	\cW(\phi)^*a_\xx\cW(\phi)=a_\xx+\pphi(\xx)\,,\qquad \cW(\phi)^*a^*_\xx\cW(\phi)=a^*_\xx+\cc{\pphi(\xx)}\,.
\end{equation}

Next we recall some well known facts about Bogoliubov transformations. We are interested in the case where the one-particle Hilbert space is given by $\fh$ or by $\fh \oplus \fh = \fh^2$. The complex conjugate $\cc A$ of an operator $A\in\cB( \fh^n )$ with $n\in \{ 1,2 \}$ is defined as the operator whose integral kernel, in position space, is the complex conjugate of the kernel of $A$. For $n=1,2$ a bounded linear map $\n:(\fh^n \oplus \fh^n )\to( \fh^n \oplus \fh^n)$ of the form
\begin{equation}
	\n=\begin{pmatrix} 
		U & \cc V\\
		V & \cc U
	\end{pmatrix}
	\label{eq:nu}
\end{equation}
is called a symplectomorphism if
\begin{equation}\label{eq:relations_bog}
	\n^* \mathcal{S} \nu = \mathcal{S} \quad \text{ and } \quad \n \mathcal{S} \nu^* = \mathcal{S} \quad \text{ with } \quad \mathcal{S} = \begin{pmatrix}
		\mathds{1} & 0 \\ 0 & - \mathds{1} 
	\end{pmatrix}\,.
\end{equation} 
Here $\mathds{1}$ denotes the identity operator on $\fh^n$. If $V$ is a Hilbert--Schmidt operator, $\nu$ is implementable on $\mathscr{F}(\fh^n)$, see \cite{BachBretaux,Solovej}, i.e. there exists a unitary $T_\n$ on $\fock(\fh^n)$ such that 
\begin{align}
	T_\n^*a(f)T_\n &= a(Uf) + a^*(\cc{V} \ \cc{f})\,, \nonumber \\
	T_\n^*a^*(f)T_\n &= a^*(Uf) + a(\cc{V} \ \cc{f})\,, \label{eq:action_bogolubov}
\end{align}
hold for $f\in\fh^n$. The operator $T_\nu$ is called the Bogoliubov transformation corresponding to $\nu$. 

For Bogoliubov transformations on $\mathscr{F}(\fh^2)$ it is convenient to introduce a compact notation, similar to \eqref{eq:Weyl_shift_compact}. For a symplectomorphism $\n:(\fh^2 \oplus \fh^2 )\to( \fh^2 \oplus \fh^2)$ of the form \eqref{eq:nu}, with $U,V \in \mathcal{B}(\fh^2)$, we denote by $U(\xx,\yy)$, $V(\xx,\yy)$ the kernels\footnote{Since $V \in \mathcal{L}^2(\fh^2)$ its kernel is a function in $L^2(\mathbb{R}^3) \oplus L^2(\mathbb{R}^3)$. The kernel of $U$ is a distribution.} of $U,V$, that is, 
\begin{equation*}
	U(\xx,\yy) = U_{\sigma,\sigma'}(x,y) \,,
\end{equation*}
where $\xx = (\sigma,x)$ and $\yy = (\sigma',y)$, and the same for $V$. We also define $U_\xx(\yy)= U(\yy,\xx)$, $V_\xx(\yy)=V(\yy,\xx)$. With this notation, we can write 
\begin{align}
	T_\nu^*a_\xx T_\nu &= a(U_{\xx})+a^*(\cc{V_{\xx}}) \,, \nonumber \\
	T_\n^*a_\xx^* T_\n &= a^*(U_{\xx}) + a(\cc{V_{\xx}})\,. \label{eq:Tt_action}
\end{align} 
\subsection{Construction of the fluctuation dynamics}
\label{sec:constrcutionfluctdyn}
In this section we use the Hartree--Fock--Bogoliubov (HFB) equations in \eqref{eq:HFB} to define our fluctuation dynamics. As explained in Section~\ref{sec:proofstrategy}, we do not intend to derive the HFB equations, but rather use them as an intermediate step between the many-body evolution and the simple effective dynamics appearing in Theorem~\ref{thm:main_2} (or in Remark~\ref{rmk:restriction_s}.3).

First of all, we discuss how our reference state, associated with the solution of the HFB equations, can be conveniently expressed as a vector in $\mathscr{F} (\fh^2 )$. Let $(\phi, \gamma)$ satisfy Assumption~\ref{ass:scales} and let $(\phi_t, \gamma_t, \alpha_t)$ be the solution of the HFB equations \eqref{eq:HFB}, with initial datum $(\phi, \gamma,0)$. We denote by $G_t$ the density matrix of the quasi-free state on $\mathscr{F} (\frak{h})$, with 1-pdm $\g_t$ and pairing density $\a_t$, and we define its generalized 1-pdm by
\begin{equation}
	\Gamma_t^{(1)} = \begin{pmatrix}
		\gamma_t & \alpha_t \\ \overline{\alpha}_t & 1+\overline{\gamma}_t \end{pmatrix}
	\label{eq:gen1pdmt}
\end{equation} 
(the condensate will be added later on, through conjugation with a Weyl operator). As proven in \cite[Proposition~3.9]{BachBretaux}, we can write 
\begin{equation}
	\Gamma_t^{(1)} = \cU_t^* \Gamma_0^{(1)} \cU_t 
	\label{eq:diagallt}
\end{equation}
with 
\begin{equation*}
	\Gamma_0^{(1)} = \begin{pmatrix}
		\gamma & 0 \\ 0 & 1+\overline{\gamma} \end{pmatrix} 
\end{equation*}
and an implementable symplectomorphism
\begin{equation}\label{eq:def_symplecto_Ut}
	\cU_t = \begin{pmatrix}
		\pi_t & \cc{\th}_t\\
		\th_t & \cc{\pi}_t
	\end{pmatrix} 
\end{equation}
satisfying the equation
\begin{equation}\label{eq:Cauchy_problem_symplecto_Ut}
	\mathrm{i} \partial_t \cU^*_t = \cS \Lambda (\Gamma_t^{(1)}) \cU_t^* \quad \text{ with } \quad \L(\G_t^{(1)})=\begin{pmatrix}
		h(\gamma_t^{\phi_t}) & k(\a_t^{\phi_t})\\
		\cc{k}(\a_t^{\phi_t}) & \cc{h}(\gamma_t^{\phi_t})
	\end{pmatrix} \,,
\end{equation} 
with $h(\gamma), k(\alpha)$ defined in \eqref{eq:h(g)_k(g)_def} and the initial condition $\cU_0 = \mathds{1}$ (with $\mathcal{S}$ defined in \eqref{eq:relations_bog}). Raising $\cU_t$ to $\fh^2$, we obtain the map $R_t : \fh^2 \to \fh^2 $, defined by 
\begin{equation*}
	R_t = \begin{pmatrix}
		\Pi_t & \cc{\Th}_t\\
		\Th_t & \cc{\Pi}_t
	\end{pmatrix} \,,
\end{equation*} 
where we defined
\begin{equation*}
	\Pi_t=\begin{pmatrix}
		\pi_t & 0\\
		0 & \cc{\pi}_t
	\end{pmatrix} \quad \text{ and } \quad \Th_t = \begin{pmatrix}
		\th_t & 0\\
		0 & \cc{\th}_t 
	\end{pmatrix}\,.
\end{equation*} 
As $\cU_t$, the map $R_t$ is an implementable symplectomorphism, for all $t \in \bR$. Hence, we find a family of unitary operators $\cR_t$ on $\mathscr{F} (\fh^2)$, with 
\begin{equation*}
	\mathcal{R}_t^* a_{\xx} \mathcal{R}_t =  a (\Pi_{t,{\xx}}) + a^* (\overline{\Theta_{t,\xx}}) \quad \text{ and } \quad \mathcal{R}_t^* a^*_{\xx} \mathcal{R}_t =  a^* (\Pi_{t,{\xx}}) + a (\overline{\Theta_{t,\xx}}) \,.
\end{equation*}
Let us also introduce the unitary family $\mathcal{T}_t = \mathcal{R}_t \mathcal{T}(\gamma)$, with $\mathcal{T}(\gamma)$ as defined in \eqref{eq:def_T(gamma)}. We observe that $\mathcal{T}_t$ is again a family of Bogoliubov transformations, satisfying 
\begin{equation}\label{eq:action_time_dep_Bog}
	\mathcal{T}^*_t a_{\xx} \mathcal{T}_t = a (U_{t,{\xx}}) + a^* (\overline{V_{t,\xx}}) \quad \text{ and } \quad \mathcal{T}^*_t a^*_{\xx} \mathcal{T}_t = a^* (U_{t,{\xx}}) + a (\overline{V_{t,\xx}})
\end{equation}
with 
\begin{equation}\label{eq:def_Ut_Vt}
	U_t = U_0 \Pi_t + \overline{V}_0 \Theta_t \,, \quad \quad V_t = V_0 \Pi_t + \overline{U}_0 \Theta_t \,, 
\end{equation}	
and
\begin{equation}\label{eq:def_U0_V0}
	U_0 = \begin{pmatrix}
		\sqrt{1+\gamma} & 0 \\ 0 & \overline{\sqrt{1+\gamma}}
	\end{pmatrix}, \quad \quad V_0 = \begin{pmatrix}
		0 & \overline{\sqrt{\gamma}} \\ \sqrt{\gamma} & 0 
	\end{pmatrix} \,.
\end{equation}
In other words, $\mathcal{T}_t$ implements the symplectomorphism 
\begin{equation}\label{eq:symplectoTt}
	T_t = T_0 R_t = \begin{pmatrix}
		U_t & \overline{V}_t \\ V_t & \overline{U}_t
	\end{pmatrix} \,.
\end{equation}	
The following lemma provides us with bounds for the operator norms of $U_t$ and $V_t$.	
\begin{lemma}\label{lem:closeness_dynamics_V}
	Under the assumptions of Proposition~\ref{lem:closeness_dynamics}, there exists a constant $c>0$ such that
	\begin{equation} \label{eq:bound_Ut_Vt}
		\|U_t\| \lesssim \sqrt{T_{\mathrm{c}}(s)} \exp(ct) \quad \text{ and } \quad \|V_t\| \lesssim \sqrt{T_{\mathrm{c}}(s)} \exp(ct) 
	\end{equation} 
	hold. Here $\|\cdot\|$ denotes the norm of a linear operator on $\fh\oplus\fh$.
\end{lemma}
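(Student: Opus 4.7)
The plan is to reduce the bounds on $\|U_t\|$ and $\|V_t\|$ to a bound on the operator norm of the HFB 1-pdm $\gamma_t$, which is in turn controlled by estimates already derived in the proof of Proposition~\ref{lem:closeness_dynamics}.

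Since $T_t$ in \eqref{eq:symplectoTt} is a symplectomorphism on $\fh^2\oplus\fh^2$, the identity $T_t\cS T_t^*=\cS$ (for the analogue of $\cS$ in \eqref{eq:relations_bog} with $n=2$) yields $U_tU_t^*=\mathds{1}+\overline{V_t}\,\overline{V_t}^*$, and hence $\|U_t\|^2\le 1+\|V_t\|^2$. It therefore suffices to prove $\|V_t\|^2\lesssim T_{\mathrm{c}}(s)\exp(ct)$.

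To this end, I will interpret $V_t^*V_t$ as the 1-pdm of the quasi-free state $\mathcal{T}_t\widetilde\Omega$ on $\mathscr{F}(\fh^2)$. Using \eqref{eq:action_time_dep_Bog}, the CCR, and $a(f)\widetilde\Omega=0$, a direct computation shows
\[
\tilde\gamma_t(\xx,\yy)\deq\langle\mathcal{T}_t\widetilde\Omega,\,a_\yy^*a_\xx\,\mathcal{T}_t\widetilde\Omega\rangle=\langle\overline{V_{t,\yy}},\overline{V_{t,\xx}}\rangle=(V_t^*V_t)(\xx,\yy)\,,
\]
so that $\|V_t\|^2=\|\tilde\gamma_t\|$. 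The block structure of $\tilde\gamma_t$ on $\fh\oplus\fh$ can then be read off from \eqref{eq:def_Ut_Vt}--\eqref{eq:def_U0_V0}: a short matrix computation gives $(V_t^*V_t)_{\ell\ell}=\pi_t^*\gamma\pi_t+\theta_t^*(1+\overline{\gamma})\theta_t$, which by the $(1,1)$-block of the diagonalization \eqref{eq:diagallt} is precisely $\gamma_t$, while the $(r,r)$-block is $\overline{\gamma_t}$ by complex conjugation. Since $\tilde\gamma_t\ge 0$, the Cauchy--Schwarz-type bound $\|B\|^2\le\|A\|\,\|C\|$ for the off-diagonal block of a positive $2\times 2$ block operator $M=\left(\begin{smallmatrix}A&B\\B^*&C\end{smallmatrix}\right)$, combined with the standard inequality $\|M\|\le\max(\|A\|,\|C\|)+\|B\|$, yields $\|\tilde\gamma_t\|\le 2\|\gamma_t\|$.

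The final step is to bound $\|\gamma_t\|$. Adding the estimates \eqref{eq:Duhamel_omega_final_operator} and \eqref{eq:Duhamel_alpha_final_bound} established inside the proof of Proposition~\ref{lem:closeness_dynamics} gives
\[
\|\gamma_t-\gamma_t^{\mathrm F}\|+\|\alpha_t\|_{\cL^2}\;\lesssim\; T_{\mathrm{c}}(s)\,t+\int_0^t\!\bigl(\|\gamma_s-\gamma_s^{\mathrm F}\|+\|\alpha_s\|_{\cL^2}\bigr)\de s\,,
\]
and Gr\"onwall's lemma, together with $\|\gamma_t^{\mathrm F}\|=\|\gamma\|\lesssim T_{\mathrm{c}}(s)$ from Assumption~\ref{ass:scales}(\textbf{C}), gives $\|\gamma_t\|\lesssim T_{\mathrm{c}}(s)\exp(ct)$. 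Chaining everything yields $\|V_t\|^2\lesssim T_{\mathrm{c}}(s)\exp(ct)$, and then $\|U_t\|^2\le 1+\|V_t\|^2\lesssim T_{\mathrm{c}}(s)\exp(ct)$, proving \eqref{eq:bound_Ut_Vt}. The main technical checkpoint is the identification $(V_t^*V_t)_{\ell\ell}=\gamma_t$, which requires unpacking the block form of $T_t$ built from $U_0,V_0,\pi_t,\theta_t$ and applying \eqref{eq:diagallt}; once that is in place, the remaining pieces amount to assembling estimates that are already available.
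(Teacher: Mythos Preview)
Your proof is correct, and it takes a genuinely different route from the paper's own argument. The paper proceeds by analyzing the ODE system \eqref{eq:system_pi_th} for $\pi_t,\theta_t$ directly: using the self-adjointness of $h(\gamma_t^{\phi_t})$ one computes $\frac{\de}{\de t}(\|\pi_t\psi\|^2+\|\theta_t\psi\|^2)$, bounds the driving term by the uniform estimate $\|k(\alpha_t^{\phi_t})\|\le c/2$, and applies Gr\"onwall to obtain $\|\pi_t\|,\|\theta_t\|\lesssim\exp(ct)$; the claim then follows from \eqref{eq:def_Ut_Vt}, \eqref{eq:def_U0_V0} and Assumption~\ref{ass:scales}(\textbf{C}).

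Your approach is more structural: you exploit the symplectomorphism identity to reduce to $\|V_t\|$, identify $V_t^*V_t$ with the full 1-pdm $\widetilde\gamma_t$ of $\mathcal{T}_t\widetilde\Omega$ (this is exactly \eqref{eq:scalar_prod_UV} in the paper), read off its diagonal blocks as $\gamma_t$ and $\overline{\gamma_t}$ via \eqref{eq:diagallt}, and then bound $\|\gamma_t\|$ by recycling the operator-norm estimates \eqref{eq:Duhamel_omega_final_operator}--\eqref{eq:Duhamel_alpha_final_bound} already derived inside Proposition~\ref{lem:closeness_dynamics}. The advantage of your route is that it ties the growth of $\|U_t\|,\|V_t\|$ directly to the physically meaningful quantity $\|\gamma_t\|$ and avoids a separate Gr\"onwall argument for $\pi_t,\theta_t$; the price is a dependence on intermediate estimates from another proof rather than a self-contained argument. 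The paper's route, by contrast, yields the intermediate bounds $\|\pi_t\|,\|\theta_t\|\lesssim\exp(ct)$, which are of independent use and do not require any block-matrix analysis. Both arguments are short and of comparable difficulty.
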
 
\begin{proof}
	We first prove a bound on the operators $\pi_t$, $\th_t$ in \eqref{eq:def_symplecto_Ut}, from which the claim will follow. To do this we observe that, as a consequence of \eqref{eq:Cauchy_problem_symplecto_Ut}, $\pi_t$, $\th_t$ satisfy the system of ODEs
	\begin{align}
			\mathrm i \dt \pi_t^* &= h(\gpt)\pi_t^*+k(\apt)\cc{\th_t^*} \,,\nonumber \\
			\mathrm i \dt \th_t^* &= h(\gpt)\th_t^*+k(\apt)\cc{\pi_t^*} \,. \label{eq:system_pi_th}	
	\end{align} 
	Let $\psi$ with $\|\psi\|\le 1$ belong to an appropriate dense subset of $L^2(\RR^3)$. Using \eqref{eq:system_pi_th} and the fact that $h(\gpt)$ is self-adjoint we get
	\[
	\mathrm i\frac{\de}{\de t}\|\pi_t\psi\|^2=\mathrm i\frac{\de}{\de t}\langle\psi,\pi_t\pi_t^*\psi\rangle=\langle\psi,\left(\pi_tk(\apt)\cc{\th_t^*}-\cc{\th_t}k(\apt)^*\pi_t^*\right)\psi\rangle = 2\mathrm i \ \Im \langle\psi,\pi_tk(\apt)\cc{\th_t^*}\psi\rangle\,, 
	\] thus we can estimate
	\begin{equation}\label{eq:der_pi_bound}
		\frac{\de}{\de t}\|\pi_t\psi\|^2 \le 2\|k(\apt)\| \left(\|\pi_t\psi\|^2+\|\th_t\psi\|^2\right)\,.
	\end{equation} Proceeding in the same way, we also find 
	\begin{equation}\label{eq:der_th_bound}
		\frac{\de}{\de t}\|\th_t\psi\|^2 \le 2\|k(\apt)\| \left(\|\pi_t\psi\|^2+\|\th_t\psi\|^2\right)\,.
	\end{equation} The operator norm of $k(\apt)$ is easily bounded by 
	\[
	\|k(\apt)\|\le \|k(\apt)\|_{\cL^2}\le N^{-1}\|v\|_\infty \left( \int\left(|\a_t(x,y)|^2+|\phi_t(x)|^2|\phi_t(y)|^2\right)\de x\de y \right)^{1/2} \le c/2\,,
	\] for some constant $c>0$ independent of $t$, $N$ and $\psi$. To obtain the bound, we used $\Vert \alpha_t \Vert_{\mathcal{L}^2}^2 \leq \Vert \gamma_t \Vert_{\mathcal{L}^1}^2 + \Vert \gamma_t \Vert_{\mathcal{L}^1} \leq N(N+1)$ and $\Vert \phi_t \Vert^2 \leq N$. Inserting this into \eqref{eq:der_pi_bound} and \eqref{eq:der_th_bound} we find
	\[
	\frac{\de}{\de t}\left(\|\pi_t\psi\|^2+\|\th_t\psi\|^2\right) \le c \left(\|\pi_t\psi\|^2+\|\th_t\psi\|^2\right)\,,
	\] which together with the initial condition $\pi_0=\id$, $\th_0=0$ and an application of Gr\"onwall's Lemma lets us conclude
	\begin{equation}
		\|\pi_t\psi\|^2\le \exp(ct)\,,\qquad \|\th_t\psi\|^2 \le \exp(ct)\,.
	\end{equation} Therefore, taking the $\sup$ over $\psi$ we find
	\begin{equation}\label{eq:bound_pit_tht}
		\|\pi_t\|^2\le \exp(ct)\,,\qquad \|\th_t\|^2 \le \exp(ct)\,.
	\end{equation} Using \eqref{eq:bound_pit_tht}, \eqref{eq:def_Ut_Vt}, \eqref{eq:def_U0_V0} and point \textbf{(C)} of Assumption \ref{ass:scales}, we deduce \eqref{eq:bound_Ut_Vt}.
\end{proof}
We claim that $\mathcal{T}_t \Omega \in \mathscr{F} (\fh^2 )$ describes exactly the quasi-free state on $\mathscr{F}(\fh)$ with generalized 1-pdm \eqref{eq:gen1pdmt}. In fact, 
\begin{align}
	\langle \mathcal{U}^* \mathcal{T}_t \O, \left( a^*(g) a(f) \otimes \mathds{1} \right) \mathcal{U}^* \mathcal{T}_t \O \rangle &= \langle \mathcal{T}(\gamma) \O,\left( a_{\ell}^* (\pi_t g) + a_{\ell}(\cc{\th_t}\cc g) \right)\left( a_{\ell}(\pi_t f) +a^*_{\ell} (\cc{\th_t}\cc f) \right) \mathcal{T}(\gamma) \O\rangle \nonumber \\
	&= \langle f, \left[ \pi_t^* \g \pi_t + \th_t^*(\id+\cc{\g})\th_t \right] g\rangle \nonumber \\
	&=\langle f, \gamma_t g \rangle \,, \label{eq:1-pdmtimet}
\end{align}
with $\mathcal{U}$ in \eqref{eq:Utrafo}. To come to the last line we used \eqref{eq:diagallt}. A similar computation shows 
\begin{equation}
	\langle \mathcal{U}^* \mathcal{T}_t \O, \left( a(g) a(f) \otimes \mathds{1} \right) \mathcal{U}^* \mathcal{T}_t \O \rangle = \langle f, \alpha_t \overline{g} \rangle \,.
	\label{eq:pairingfunctiont}
\end{equation}
Next, we add the condensate. To this end, we apply the Weyl operator $\mathcal{W} (\phi_t)$. Proceeding as in \eqref{eq:1-pdmtimet} and \eqref{eq:pairingfunctiont}, and using \eqref{eq:Weyl_shift_double}, we find that the vector $\mathcal{W} (\phi_t) \mathcal{T}_t \Omega \in \mathscr{F} (\fh^2)$ describes the quasi-free state associated with the solution of the HFB equations. With this reference state, we are now ready to define the fluctuation dynamics as 
\begin{equation}
	\mathcal{U}^{\mathrm{fluct}}(t,s) = \mathcal{T}_t^* \mathcal{W}(\phi_t)^* \exp(-\mathrm{i} \mathcal{L}_N (t-s)) \mathcal{W}(\phi_s) \mathcal{T}_s
	\label{eq:fluctdyn}
\end{equation}
with $\mathcal{L}_N$ in \eqref{eq:Liouvillian}.
\subsection{Bound by the expectation of the number of particles}
\label{sec:Boundbyparticlenumber}
For $\xi \in \mathscr{F} (\fh^2)$, we define the fluctuation vector $\xi_t=\mathcal{U}^{\mathrm{fluct}}(t,0) \xi$. By definition, we have 
\begin{equation}
	\mathcal{W} (\phi_t) \cT_t \xi_t = e^{-\mathrm{i} \cL_N t} \cW (\phi) \cT_0 \xi = e^{-\mathrm{i} \cL_N t} \cW (\phi) \cT (\gamma) \xi \,,
\end{equation}
which is exactly the vector in $\mathscr{F}(\fh^2)$ associated with the state $\Gamma_{\xi,t}(\phi,\gamma)$ defined in \eqref{eq:Gammaxit} and considered in Theorem~\ref{thm:main_2}. Hence, the 1-pdm of the state $\Gamma_{\xi,t}(\phi,\gamma)$ can be written as 
\begin{equation*}
	\gamma_{\xi,t}(x,y) = \tr[ a^*_y a_x \Gamma_{\xi,t}(\phi,\gamma) ] = \langle \xi_t, \mathcal{T}^*_t \mathcal{W}^*(\phi_t) a^*_{\ell,y} a_{\ell,x} \mathcal{W}(\phi_t) \mathcal{T}_t \xi_t \rangle \,.
\end{equation*} 
A short computation that uses \eqref{eq:Weyl_shift_compact}, \eqref{eq:Tt_action} and
\begin{equation}\label{eq:ccr_V}
	\g_{t}(x,y) = \sum_{\sigma \in \{ \ell, r \}} \int_{\mathbb{R}^3} \cc {V_{t,\ell}((\sigma,z),x)}V_{t,\ell}((\sigma,z),y) \de z
\end{equation}
allows us to write the kernel of $\gamma_{\xi,t}$ as 
\begin{align}	
	\gamma_{\xi,t}(x,y) &= \phi_t(x) \overline{\phi_t(y)} + \g_t(x,y) + \langle \xi_t, \left( a^* (U_{t,\ell,y}) a(U_{t,\ell,x}) + a^* (\cc {V_{t,\ell,x}}) a(\cc {V_{t,\ell,y}})\right) \xi_t \rangle \nonumber \\ 
	&+\langle \xi_t, \left(  a (\cc {V_{t,\ell,y}}) a(U_{t,\ell,x})+a^* (U_{t,\ell,y}) a^*(\cc {V_{t,\ell,x}})\right) \xi_t \rangle \nonumber \\
	&+ \langle \xi_t, \left\{ \phi_t(x) \left( a^* (U_{t,\ell,y})  + a(\cc {V_{t,\ell,y}}) \right)  + \overline{\phi_t(y)} \left( a (U_{t,\ell,x})  + a^*(\cc {V_{t,\ell,x}}) \right) \right\} \xi_t \rangle \,. \label{eq:fluctdyn6}
\end{align}
With this representation for $\gamma_{\xi,t}$, we obtain the following lemma, which bounds the trace norm of $\gamma_{\xi,t} - |\phi_t \rangle \langle \phi_t| - \gamma_t$ in terms of the expectation of the number of particles operator 
\begin{equation}
	\cN=\cN_{\ell}+\cN_r+5 = \de\G(\id)+5\,.
	\label{eq:numberOp}
\end{equation}

\begin{lemma}\label{lem:bound_particle_number}
	Under the assumptions of Theorem~\ref{thm:main_2} and with $U_t, V_t$ as in \eqref{eq:def_Ut_Vt}, we have the bound
	\begin{align}
		\Vert \gamma_{\xi,t} - |\phi_t \rangle \langle \phi_t|-\gamma_t \Vert_{\mathcal{L}^1} \leq &\;  ( \|U_t\|^2+\|V_t\|^2) \langle \xi_t, \mathcal{N} \xi_t \rangle + 2 N^{1/2} ( \|U_t\|^2+\|V_t\|^2)^{1/2} \langle \xi_t, \mathcal{N} \xi_t \rangle^{1/2} 
		\nonumber \\
		&+ 2 \Vert \phi_t \Vert \cdot ( \|U_t\|^2+\|V_t\|^2)^{1/2} \langle \xi_t, \mathcal{N} \xi_t \rangle^{1/2} \,. \label{eq:bound_particle_number}
	\end{align}
\end{lemma}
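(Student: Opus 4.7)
The plan is to apply the duality $\|A\|_{\cL^1}=\sup_{\|B\|\le 1}|\tr(BA)|$ to the explicit representation \eqref{eq:fluctdyn6} of $\gamma_{\xi,t}(x,y)-\phi_t(x)\cc{\phi_t(y)}-\gamma_t(x,y)$, after splitting this kernel into three groups: (a) a \emph{quadratic} part involving $a^*a$, (b) a \emph{pairing} part involving $aa$ and $a^*a^*$, and (c) a \emph{linear} part with one creation or annihilation operator multiplied by an explicit $\phi_t$-prefactor. Each group will contribute one of the three terms on the right-hand side of \eqref{eq:bound_particle_number} (up to constants, with small cross-terms absorbed via AM--GM).

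Groups (a) and (c) are routine. For (a), a short CCR computation gives, for $B \in \cB(L^2(\RR^3))$ with $\|B\|\le 1$,
\[
\int B(y,x)\,a^*(U_{t,\ell,y})\, a(U_{t,\ell,x})\,\dd x\,\dd y = \dd\G\bigl(U_t\iota B\iota^* U_t^*\bigr),
\]
where $\iota\colon L^2(\RR^3)\hookrightarrow L^2(\{\ell,r\}\times\RR^3)$ is the embedding onto the $\ell$-sector and $\|U_t\iota B\iota^* U_t^*\|\le \|U_t\|^2$; taking the expectation in $\xi_t$ and combining with the analogous $V_t$-estimate produces a contribution bounded by $(\|U_t\|^2+\|V_t\|^2)\langle\xi_t,\cN\xi_t\rangle$. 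For (c), the kernel is a sum of two rank-one operators of the form $\phi_t\otimes\cc h$, whose trace norm equals $\|\phi_t\|\|h\|_{L^2}$. Dualizing against $f\in L^2(\RR^3)$ and using the identity $\int\cc{f(y)}\,a^*(U_{t,\ell,y})\,\dd y = a^*(U_t^\ell\cc f)$ with $U_t^\ell = U_t\iota$, together with $|\langle\xi_t,a^*(g)\xi_t\rangle|\le \|g\|\langle\xi_t,\cN\xi_t\rangle^{1/2}$ and its $V_t$-analogue, yields $\|h\|_{L^2}\lesssim(\|U_t\|^2+\|V_t\|^2)^{1/2}\langle\xi_t,\cN\xi_t\rangle^{1/2}$.

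The main work is group (b). Writing
\[
C_1(x,y)\deq\langle\xi_t, a(\cc{V_{t,\ell,y}})\, a(U_{t,\ell,x})\,\xi_t\rangle = \langle\zeta(y),\eta(x)\rangle_{\fock}
\]
with the Fock-valued functions $\zeta(y)=a^*(\cc{V_{t,\ell,y}})\xi_t$ and $\eta(x)=a(U_{t,\ell,x})\xi_t$, the plan is to introduce the linear maps $\cZ,\cH\colon L^2(\RR^3)\to\fock$ defined by $\cZ f=\int f(y)\zeta(y)\,\dd y$ and $\cH f=\int f(x)\eta(x)\,\dd x$; a direct computation shows that $\cH^*\cZ$ has kernel $\cc{C_1(x,y)}$, so $\|C_1\|_{\cL^1} = \|\cH^*\cZ\|_{\cL^1} \le \|\cH\|_{\cL^2}\|\cZ\|_{\cL^2}$. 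The Hilbert--Schmidt norms are then computed by the CCR:
\[
\|\cH\|_{\cL^2}^2 = \int\|\eta(x)\|^2\,\dd x = \langle\xi_t, \dd\G(U_t P_\ell U_t^*)\xi_t\rangle \le \|U_t\|^2\langle\xi_t,\cN\xi_t\rangle,
\]
with $P_\ell = \iota\iota^*$, and, using $[a(h),a^*(h)]=\|h\|^2$ to normal-order,
\[
\|\cZ\|_{\cL^2}^2 \le \|V_t^\ell\|_{\cL^2}^2 + \|V_t\|^2\langle\xi_t,\cN\xi_t\rangle.
\]
The crucial input is the identity $\|V_t^\ell\|_{\cL^2}^2 = \tr\gamma_t$ (implicit in \eqref{eq:ccr_V}), combined with conservation of the particle number along the HFB flow (Theorem~\ref{thm:well-posedness-HFB}), which yields $\|V_t^\ell\|_{\cL^2}^2 \le \tr\gamma \le N$. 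Combining these bounds via $\sqrt{a+b}\le\sqrt a+\sqrt b$, absorbing the cross term $\|U_t\|\|V_t\|\langle\xi_t,\cN\xi_t\rangle$ into $(\|U_t\|^2+\|V_t\|^2)\langle\xi_t,\cN\xi_t\rangle$ by AM--GM, and adding the hermitian-conjugate contribution $C_1^*$ (which has the same trace norm) produces the $2N^{1/2}(\|U_t\|^2+\|V_t\|^2)^{1/2}\langle\xi_t,\cN\xi_t\rangle^{1/2}$ piece. The main obstacle is exactly this pairing estimate: a naive $\dd\G$-type treatment fails because $U_{t,\ell,x}$ is distributional at fixed $x$ (carrying a $\delta$-component from the identity part of $U_t$), and the estimate must be routed through the Hilbert--Schmidt operator $V_t^\ell$, whose $\cL^2$-norm is the $N^{1/2}$ appearing in the final inequality.
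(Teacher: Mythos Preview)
Your proof is correct and follows essentially the same strategy as the paper: split \eqref{eq:fluctdyn6} into quadratic, pairing, and linear pieces, handle the quadratic piece via $\de\Gamma$, and for the pairing piece exploit the crucial identity $\|V_t^\ell\|_{\cL^2}^2=\tr\gamma_t\le N$ from \eqref{eq:ccr_V} and conservation of particle number. The only cosmetic difference is that for the pairing term the paper tests against a general bounded $J$ and applies Cauchy--Schwarz in the $y$-integral (see \eqref{eq:6}--\eqref{dust}), whereas you factor the kernel directly as $\cH^*\cZ$ with Hilbert--Schmidt factors; both routes yield the same bound (and, as in the paper, the precise numerical constants in the stated inequality are slightly loose after the AM--GM absorption, which is harmless since the lemma is only used with $\lesssim$).
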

\begin{proof} 
	Let $J \in \mathcal{B}(h)$. We take the trace of \eqref{eq:fluctdyn6} against $J$: this will allow us to prove \eqref{eq:bound_particle_number} by duality. Denoting by $J(x,y)$ the integral kernel of $J$, the contribution of the third term on the r.h.s. of \eqref{eq:fluctdyn6} is given by
	\begin{align}
		&\int_{\mathbb{R}^6} J(x,y) \left\langle \xi_t, \left[ a^* (\Uly)a(\Ulx) + a^* (\Vlx)a(\Vly) \right] \xi_t \right\rangle \de x \de y \nonumber \\
		&\hspace{1cm}= \int \left[ \cc{U_{t,\ell}(\zz_2,x)}J(x,y) U_{t,\ell}(\zz_1,y)  + \cc{V_{t,\ell}(\zz_2,x)}J(x,y) V_{t,\ell}(\zz_1,y)  \right] \left\langle \xi_t, a^*_{\zz_1}a_{\zz_2} \xi_t \right\rangle \de \zz_1\de\zz_2 \text{d}x\de y \nonumber \\
		&\hspace{1cm}=  \left\langle \xi_t, \de\G(J_1 + J_2)\xi_t \right\rangle\,, \label{eq:3}
	\end{align} 
	where we used the notation $\int \de \zz = \sum_{\sigma \in \{ \ell, r \}} \int_{\mathbb{R}^3} \de z$. The operators $J_1$ and $J_2$ are defined by
	\begin{align}
		J_1(\zz_1,\zz_2) =& \int_{\mathbb{R}^6} \cc{U_{t,\ell}(\zz_2,x)} J(x,y) \cc{U_{t,\ell}^*(y,\zz_1)} \text{d}x\de y, \quad J_2(\zz_1,\zz_2) = \int_{\mathbb{R}^6} \cc{V_{t,\ell}(\zz_2,x)} J(x,y) \cc{V_{t,\ell}^*(y,\zz_1)} \text{d}x\de y\,.
	\end{align}
	The absolute value of the r.h.s. of \eqref{eq:3} is bounded by
	\begin{align}
		\left| \left\langle \xi_t, \text{d} \Gamma ( J_1 + J_2 ) \xi_t \right\rangle\right| \leq (\Vert J_1 \Vert +\| J_2\| )\left\langle \xi_t, \mathcal{N} \xi_t \right\rangle \,.
		\label{eq:4}
	\end{align}
	With
	\begin{equation}
		\Vert J_1 \Vert \leq \Vert U_t \Vert^2 \Vert J \Vert \quad \text{ and } \quad \Vert J_1 \Vert \leq \Vert V_t \Vert^2 \Vert J \Vert
		\label{eq:5}
	\end{equation}
	we conclude that
	\begin{equation}
		\left| 	\int_{\mathbb{R}^6} J(x,y) \left\langle \xi_t, \left[ a^* (\Uly)a(\Ulx) + a^* (\Vlx)a(\Vly) \right] \xi_t \right\rangle \text{d}x\de y \right|  \leq \Vert J \Vert \left( \Vert U_t \Vert^2+\Vert V_t \Vert^2\right) \left\langle \xi_t, \mathcal{N}\xi_t \right\rangle \,. \label{eq:13}
	\end{equation}
	Next we consider the contribution of the fourth term on the r.h.s. of \eqref{eq:fluctdyn6}.
	
	Defining $J_3,J_4:\fh\to\fh\oplus\fh$ by
	\begin{equation}
		J_3(\zz,y)= \int_{\mathbb{R}^3} U_{t,l}(\zz,x) \cc{J(x,y)} \de x \,, \quad \quad J_4(\zz,x)= \int_{\mathbb R^3} J(x,y) U_{t,l}(\zz,y) \de y\,, \label{eq:13b}
	\end{equation}
	we write
	\begin{align}
		& \int_{\mathbb{R}^6} J(x,y) \langle \xi_t, \left[ a(\Vly) a(\Ulx) + a^*(\Uly) a^*(\Vlx) \right] \xi_t \rangle \de x \de y \nonumber \\
		&\hspace{3cm}= \int_{\mathbb{R}^3}  \langle \xi_t,  a(\Vly) a(J_{3,y}) \xi_t \rangle \de y +  \int_{\mathbb{R}^3}  \langle \xi_t,  a^*(J_{4,x}) a^*(\Vlx) \xi_t \rangle \de x \,. \label{eq:6} 
	\end{align} 
	The absolute value of the first term on the r.h.s. of \eqref{eq:6} is bounded by
	\begin{align} \label{2.17}
		\left( \int_{\mathbb{R}^3} \langle \xi_t, a(\Vly) a^*(\Vly) \xi_t \rangle \de y \right)^{1/2} \left( \int_{\mathbb{R}^3}  \langle \xi_t,  a^*(J_{3,y}) a(J_{3,y}) \xi_t \rangle \de y \right)^{1/2}\,.
	\end{align}
	Using the canonical commutation relations, \eqref{eq:ccr_V}, and $\tr\g_t \leq N$, we see that the first factor is not larger than
	\begin{equation}
		\| V_t \| \ \langle \xi_t ,\mathcal{ N}_r \xi_t \rangle^{1/2}+N^{1/2}\,.
	\end{equation}
	For the second factor we have the bound
	\begin{equation}
		\| J_3 \| \ \langle \xi_t ,\mathcal{ N} \xi_t \rangle^{1/2} \leq \| J \| \ \|U_t\| \ \langle \xi_t ,\mathcal{ N} \xi_t \rangle^{1/2}\,.
	\end{equation}
	The second term on the r.h.s. of \eqref{eq:6} can be bounded similarly by the same expression, and we thus find
	\begin{align}    
		&\left| \int_{\mathbb{R}^6} J(x,y) \langle \xi_t, \left[ a(\Vly) a(\Ulx) + a^*(\Uly) a^*(\Vlx) \right] \xi_t \rangle \de x \de y \right| \nonumber \\
		& \hspace{5cm} \leq 2 \left( \|J\| \ \|U_t\| \ \|V_t\| \ \langle \xi_t , \mathcal{N} \xi_t \rangle +  N^{1/2}\|J\| \ \|U_t\| \ \langle \xi_t , \mathcal{N} \xi_t \rangle^{1/2} \right)\,. \label{dust}
	\end{align} 
	It remains to provide bounds for the terms in the third line of \eqref{eq:fluctdyn6}. 
	
	We define the operator $J_5$ via its integral kernel
	\begin{equation}
		J_5(\zz,x) = \int_{\mathbb{R}^3} \overline{V_{t,l}(\zz,y)} \overline{J(x,y)} \de y 
		\label{dust2}
	\end{equation}
	and write
	\begin{align}
		\left| \int_{\mathbb{R}^6} \phi_t(x) J(x,y) \langle \xi_t, \left( a^*(\Uly)  + a(\Vly) \right) \xi_t \rangle \de(x,y) \right| &= \left| \int_{\mathbb{R}^3} \phi_t(x) \langle \xi_t, \left( a^*( J_{4,x} )  + a( J_{5,x}) \right) \xi_t \rangle \de x \right| \nonumber \\
		&\leq \Vert \phi_t \Vert_2 \ \Vert J \Vert \left( \Vert U_t \Vert^2 + \Vert V_t \Vert ^2\right)^{1/2} \langle \xi_t, \mathcal{N} \xi_t \rangle^{1/2}. \label{dust3}
	\end{align}
	The other term in the second line of \eqref{eq:fluctdyn6} can be estimated in the same way. In combination, \eqref{eq:fluctdyn6}, \eqref{eq:13}, \eqref{dust}, and \eqref{dust3} prove the claim.
\end{proof}
\subsection{Proof of Theorem~\ref{thm:main_2}}
\label{sec:proofoftheorem}
In the next proposition, whose proof is deferred to the next section, we control the growth of the expectation of the number of particles operator. 
\begin{proposition} \label{mainprop}
	Let $v$ satisfy the same assumptions as in Theorem \ref{thm:main_2}, and let the pair $(\phi,\gamma)$ satisfy Assumption~\ref{ass:scales} with $0 < s \leq 3/2$. For every $k \in \mathbb{N}$, there exist constants $c, C > 0 $ independent of $N,t$ such that
	\begin{align} \nonumber
		\langle \xi_t, \mathcal{N} \xi_t \rangle \leq & \exp(c \exp(c  \exp(c t))) \langle \xi, \cal N \xi\rangle  \\
		& + CN^{(22- k)/14}\exp(c\exp(c\exp(ct)))\left( \langle \xi, \mathcal{N}^{4} \xi \rangle^{1/2} + N^{10/7} \right) \big\langle \x, \cN^{k+2}\x \big\rangle^{1/2} \label{eq:Gronwall_bound_main}
	\end{align}
	for every $N\in\NN$, $t>0$.
\end{proposition}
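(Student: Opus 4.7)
The strategy is a Gr\"onwall argument for $f(t) := \langle \xi_t, \mathcal{N} \xi_t \rangle$. The first step, carried out in Section \ref{sec:generator}, is to compute the generator $\mathcal{G}_t$ of the fluctuation dynamics defined by $\mathrm{i}\partial_t \xi_t = \mathcal{G}_t \xi_t$. From \eqref{eq:fluctdyn},
\begin{equation*}
\mathcal{G}_t = (\mathrm{i}\partial_t \mathcal{T}_t^*)\mathcal{T}_t + \mathcal{T}_t^*\bigl[(\mathrm{i}\partial_t \mathcal{W}(\phi_t)^*)\mathcal{W}(\phi_t) + \mathcal{W}(\phi_t)^*\, \mathcal{L}_N\, \mathcal{W}(\phi_t)\bigr] \mathcal{T}_t,
\end{equation*}
and the actions \eqref{eq:Weyl_shift_compact}, \eqref{eq:action_time_dep_Bog} rewrite everything in terms of $a^{\#}(U_{t,\xx})$, $a^{\#}(\overline{V_{t,\xx}})$ and $\phi_t$. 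The motivation for choosing the HFB trajectory as reference state is that the quadratic (in $a^{\#}$) part of $\mathcal{T}_t^*\mathcal{W}(\phi_t)^* \mathcal{L}_N \mathcal{W}(\phi_t)\mathcal{T}_t$ is cancelled by the time-derivative terms, thanks to the HFB equations \eqref{eq:HFB}; the surviving linear pieces vanish (or are of order $N^{-1}$) once $\phi_t$ solves the Hartree equation. What remains in $\mathcal{G}_t$ is, up to a c-number, a small linear correction, a cubic piece $\mathcal{C}_t$ of order $N^{-1/2}$, and a number-preserving quartic piece $\mathcal{Q}_t$ of order $N^{-1}$. Since $[\mathcal{Q}_t, \mathcal{N}] = 0$, only the cubic (and the small linear) commutator contributes to $f'(t)$.

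Next I would estimate the cubic commutator. Its terms have the schematic form
\begin{equation*}
N^{-1}\int v(x-y)\,\phi_t(y)\, a^{\#}(U_{t,\ell,x})\, a^{\#}(\overline{V_{t,\ell,x}})\, a^{\#}(U_{t,\ell,y})\,\mathrm{d}x\,\mathrm{d}y + \mathrm{h.c.}
\end{equation*}
Each such expression is bounded via Cauchy--Schwarz, using the operator-norm estimates $\|U_t\|, \|V_t\|\lesssim \sqrt{T_{\mathrm{c}}(s)}\exp(ct)$ from Lemma \ref{lem:closeness_dynamics_V} to handle two of the three $a^{\#}$'s, together with the diluteness estimate $\sup_{x,y}|\gamma_t(x,y)| \lesssim T_{\mathrm{c}}^{3/2}(s)\exp(c \exp(ct))$ from Proposition \ref{lem:diluteness_HFB} to control the convolutions with $v$ and the diagonal of $\gamma_t$. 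The outcome is an inequality of the schematic form
\begin{equation*}
|\langle \xi_t, [\mathcal{C}_t, \mathcal{N}] \xi_t\rangle|
\lesssim \exp(c \exp(ct))\,\bigl(f(t) + N^{-1/2}\langle \xi_t, \mathcal{N}^{3/2} \xi_t\rangle\bigr).
\end{equation*}
The $\mathcal{N}^{3/2}$ term prevents a direct Gr\"onwall argument for $f$.

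To circumvent this I would introduce, as anticipated in Section \ref{sec4.1}, a truncated dynamics $\xi_t^{(M)}$ generated by $\mathcal{G}_t^{(M)}$, where each cubic contribution is dressed with projections $\mathds{1}(\mathcal{N}\leq M)$. On the range of these projections $\langle \xi_t^{(M)}, \mathcal{N}^{3/2}\xi_t^{(M)}\rangle \leq M^{1/2} \langle \xi_t^{(M)}, \mathcal{N} \xi_t^{(M)}\rangle$, so Gr\"onwall closes and yields
\begin{equation*}
\langle \xi_t^{(M)}, \mathcal{N} \xi_t^{(M)}\rangle \leq \exp\bigl(c \exp(c \exp(c t))\bigr) \langle \xi, \mathcal{N}\xi\rangle,
\end{equation*}
the three nested exponentials encoding the Gr\"onwall integration in $t$, the exponential growth of $\|U_t\|, \|V_t\|$, and the double exponential of the diluteness bound. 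I would then compare $\xi_t$ with $\xi_t^{(M)}$ via Duhamel,
\begin{equation*}
\xi_t - \xi_t^{(M)} = -\mathrm{i}\int_0^t \mathcal{U}^{\mathrm{fluct}}(t,s)\bigl(\mathcal{G}_s - \mathcal{G}_s^{(M)}\bigr) \xi_s^{(M)}\,\mathrm{d}s,
\end{equation*}
and dominate $\mathcal{G}_s - \mathcal{G}_s^{(M)}$ by $\mathds{1}(\mathcal{N}>M)\leq M^{-k/2}\mathcal{N}^{k/2}$ together with weak a-priori high-moment bounds on $\xi_s$ and $\xi_s^{(M)}$ proved in Section \ref{sec:weakbounds} (via a simpler Gr\"onwall that tolerates the crude inequality $\mathcal{N}^{3/2}\leq N^{1/2}\mathcal{N}$ valid for vectors with particle number bounded by $N$). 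Optimising $M$ to balance the cubic cost $M^{1/2}$ against the cutoff gain $M^{-k/2}$, while tracking the factor $T_{\mathrm{c}}^{3/2}(s)\lesssim N^{3/7}$ (for $s\leq 3/2$) produced by the diluteness, gives the exponent $(22-k)/14$ in the second term of \eqref{eq:Gronwall_bound_main}.

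The main obstacle is the term-by-term analysis of $[\mathcal{C}_t, \mathcal{N}]$, which must be arranged so that the thermal cloud enters only through $\|\gamma_t\|\lesssim T_{\mathrm{c}}(s)$ and the kernel supremum $\sup_{x,y}|\gamma_t(x,y)|$: without exploiting the diluteness of Proposition \ref{lem:diluteness_HFB}, a naive estimate would bring in $\tr\gamma_t\sim N$ and destroy the mean-field scaling. A secondary difficulty is that the weak a-priori moment bounds for the untruncated dynamics must be strong enough to control the Duhamel remainder while keeping the $N$-dependence compatible with the rate $(22-k)/14$; this is precisely the reason one is forced to require $\langle \xi, \mathcal{N}^{44}\xi\rangle \lesssim 1$ on the initial data when $s=3/2$.
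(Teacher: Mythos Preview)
Your outline has the right architecture (compute the generator, truncate with a particle-number cutoff, close Gr\"onwall for the truncated dynamics, then compare via Duhamel), but there is a genuine gap in your description of the generator. You assert that after cancellation one is left with ``a number-preserving quartic piece $\mathcal{Q}_t$'' and hence that $[\mathcal{Q}_t,\mathcal{N}]=0$. This is false. Conjugating the quartic interaction $\frac{1}{2N}\int v\, a^*a^*aa$ by the Bogoliubov transformation $\mathcal{T}_t$ replaces each $a_\xx$ by $a(U_{t,\xx})+a^*(\overline{V_{t,\xx}})$, and after normal ordering the quartic contribution splits into a number-preserving part $\mathcal{I}^{(1)}_{N,t}$ \emph{and} a non-number-preserving part $\mathcal{I}^{(2)}_{N,t}$ containing terms such as
\[
\frac{1}{2N}\int v(\xx,\yy)\, a^*(U_{t,\xx})a^*(U_{t,\yy})a^*(\overline{V_{t,\yy}})a^*(\overline{V_{t,\xx}})\,\de\xx\,\de\yy
\]
and $a^*a^*a^*a$-type terms (see Proposition~\ref{prop:Generator}). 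The commutator $[\mathcal{I}^{(2)}_{N,t},\mathcal{N}]$ is the most delicate piece of the whole estimate: the four-creation term produces, after Cauchy--Schwarz and the commutator identity for $a(\overline V)a(\overline V)a^*(\overline V)a^*(\overline V)$, contributions proportional to $N^{-1}\|U_t\|^2\|V_t\|^2\sim T_{\mathrm c}^2/N$ (forcing the cutoff $\eta\le 1/7$) and to $N^{-1/2}T_{\mathrm c}^{3/4}\|U_t\|^2\sim N^{-1/2}T_{\mathrm c}^{7/4}$, which is what fixes the restriction $s\le 3/2$. Your cubic analysis alone would not see either of these constraints; the $(22-k)/14$ exponent ultimately comes from balancing the quartic contributions against the cutoff gain, not from the cubic ones.

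A smaller but related point: the linear terms do not vanish because ``$\phi_t$ solves the Hartree equation''; they vanish because $\phi_t$ solves the first HFB equation $\mathrm{i}\partial_t\phi_t = h(\gamma_t)\phi_t + k(\alpha_t^{\phi_t})\overline{\phi_t}$, which contains the extra thermal corrections $b(\gamma_t)\phi_t$ and $k(\alpha_t)\overline{\phi_t}$. Similarly, the quadratic terms cancel only because the full HFB system fixes $(\gamma_t,\alpha_t)$ --- this is precisely the reason the paper uses HFB rather than Hartree to define the fluctuation dynamics. If you replaced $\phi_t$ by the Hartree solution at this stage, residual linear and quadratic terms of size $O(T_{\mathrm c}^{3/2}/N^{1/2})$ would survive and spoil the Gr\"onwall argument.
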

\begin{remark}\label{rmk:higher_moments_cN}
Similar bounds can be established for higher moments of the number of particles operator on the fluctuation vector $\langle \xi_t, \mathcal{N}^j \xi_t \rangle $, for any $j\ge 2$. The proof is essentially the same as for \eqref{eq:Gronwall_bound_main}. We do not pursue this generalization here because we do not need it for our main result, and we want to keep the notation as simple as possible. 
\end{remark}
Next, we show how Proposition~\ref{mainprop} implies Theorem~\ref{thm:main_2}. 
\begin{proof}[Proof of Theorem~\ref{thm:main_2}]
	We combine \eqref{eq:bound_particle_number} and Lemma \ref{lem:closeness_dynamics_V} to see that
	\begin{equation}
		\Vert \gamma_{\xi,t} - |\phi_t \rangle \langle \phi_t | - \gamma_t \Vert_{\mathcal{L}^1} \lesssim \exp(ct) \left( T_{\mathrm{c}}(s) \langle \xi_t, \mathcal{N} \xi_t \rangle + \sqrt{N T_{\mathrm{c}}(s) } \langle \xi_t, \mathcal{N} \xi_t \rangle^{1/2} \right) \,.
		\label{eq:final1}
	\end{equation}
	An application of Proposition \ref{mainprop} with $k=42$ proves
	\begin{equation} \label{eq:excitation_number_bound}
		\langle \xi_t, \mathcal{N} \xi_t \rangle \lesssim \exp(c \exp(c \exp(ct)))\,,
	\end{equation} under the assumptions of Theorem \ref{thm:main_2}. Plugging \eqref{eq:excitation_number_bound} into \eqref{eq:final1} and using the bound 
	\begin{equation} \label{eq:Difference_HFB_H_final}
	\left\| \ |\phi_t\rangle\langle\phi_t| + \g_t - \left(|\phi_t^{\mathrm H}\rangle\langle\phi_t^{\mathrm H}| + e^{\mathrm{i}\D t}\g e^{-\mathrm{i}\D t} \right) \right\|_{\cL^1} \lesssim N^{1/2}T_c^{3/4}(s)\exp(ct)\,,
	\end{equation} which follows from \eqref{eq:closeness_dynamics}, we get \eqref{eq:thmbound1pdm}, concluding the proof of the theorem. 
\end{proof}
\begin{remark}
	Remark \ref{rmk:restriction_s}.3 is proven by using the bound \eqref{eq:Hartree_1pdm_bound} instead of \eqref{eq:Difference_HFB_H_final} at the end of the proof of Theorem \ref{thm:main_2}.
\end{remark}
\section{Study of the fluctuation dynamics}\label{sec:fluctuationdynamics}
\label{sec:studyfluctdyn}
The aim of this section is to prove Proposition~\ref{mainprop}. We will apply a Gr\"onwall argument, and therefore start our analysis with the computation of the generator of the fluctuation dynamics $\mathcal{U}^{\mathrm{fluct}}(t,s)$ in \eqref{eq:fluctdyn}. 
\subsection{Generator of the fluctuation dynamics}
\label{sec:generator}
The propagator of the fluctuations satisfies the equation
\begin{equation}
	\mathrm{i} \partial_t \mathcal{U}^{\mathrm{fluct}}(t,s) = \mathcal{G}_{N,t} \mathcal{U}^{\mathrm{fluct}}(t,s)
	\label{eq:genfluctdyn2}
\end{equation} 
with the time-dependent generator
\begin{equation}
	\mathcal{G}_{N,t} = \left( \mathrm{i} \partial_t \mathcal{T}_t^* \right) \mathcal{T}_t + \mathcal{T}_t^*\left( \mathrm{i} \partial_t \cW_t^* \right)\cW_t \mathcal{T}_t+\mathcal{T}_t^* \mathcal{W}_t^* \mathcal{L}_N \mathcal{W}_t \mathcal{T}_t \,,
	\label{eq:genfluctdyn3}
\end{equation}
where we used the shorthand $\cW_t=\cW(\phi_t)$.  In the following proposition we compute $\mathcal{G}_{N,t}$. To keep formulas at a reasonable length, we introduce the notation 
\begin{equation}
	v(\xx, \yy) = \begin{cases}
		v(x-y) & \text{ if } \sigma = \sigma' = \ell \,, \\ -v(x-y) & \text{ if } \sigma = \sigma' = r \,, \\ 0 & \text{ instead}\,, 
	\end{cases}
	\label{eq:v(xx)}
\end{equation}
where $\xx = (\sigma,x)$ and $\yy = (\sigma', y)$. We also recall the notation $\int f(\xx)\de\xx=\sum_{\s=l,r}\int f(\s,x)\de x$ for \\$f:\{\ell,r\}\times\RR^3\to\CC$, which was introduced in Section \ref{sec:fluctdyn} and will be used throught the present Section.
\begin{proposition}\label{prop:Generator}
	Let $v$ be an interaction potential satisfying the same assumptions as in Theorem \ref{thm:main_2}, and let $\phi\in H^3(\RR^3)$, $\gamma\in\cH^{3,1}$. The generator of the fluctuation dynamics $\mathcal{U}^{\mathrm{fluct}}(t,s)$ can be written as
	\begin{equation}
		\cG_{N,t} = \cI_{N,t}^{(1)}+ \cI_{N,t}^{(2)}+ \cI_{N,t}^{(3)} + \cI_{N,t}^{(4)}\,, 
		\label{eq:generatorformula}
	\end{equation}
	where 
	\begin{align}
		\cI_{N,t}^{(1)}
		= &\;\frac{1}{2N} \int v(\xx, \yy)\Big(a^*(U_{t,\xx})a^*(U_{t,\yy})a(U_{t,\yy})a(U_{t,\xx}) \nonumber \\
		&\;+ a^*(U_{t,\xx})a^*(\cc{V_{t,\yy}})a(\cc{V_{t,\yy}})a(U_{t,\xx})+ a^*(U_{t,\yy})a^*(\cc{V_{t,\xx}})a(\cc{V_{t,\xx}})a(U_{t,\yy}) \nonumber \\
		&\;+ a^*(\cc{V_{t,\xx}})a^*(\cc{V_{t,\yy}})a(\cc{V_{t,\yy}})a(\cc{V_{t,\xx}})+2a^*(U_{t,\xx})a^*(\cc{V_{t,\yy}})a(\cc{V_{t,\yy}})a(U_{t,\xx})\Big)\de\xx\de\yy \label{eq:I1}
	\end{align}
	collects the terms that are quartic in $a$ and $a^*$ and commute with $\cN$,
	\begin{align}
		\cI_{N,t}^{(2)} = \;& \frac{1}{2N} \int v(\xx, \yy)\Big(a^*(U_{t,\xx})a^*(U_{t,\yy})a^*(\cc{V_{t,\yy}})a^*(\cc{V_{t,\xx}})+a^*(U_{t,\xx})a^*(U_{t,\yy})a^*(\cc{V_{t,\yy}})a(U_{t,\xx}) \nonumber \\
		&\;+ a^*(U_{t,\xx})a^*(\cc{V_{t,\yy}})a^*(\cc{V_{t,\xx}})a(\cc{V_{t,\yy}}) + a^*(U_{t,\xx})a^*(U_{t,\yy})a^*(\cc{V_{t,\xx}})a(U_{t,\yy}) \nonumber \\ 
		&\; + a^*(U_{t,\yy})a^*(\cc{V_{t,\yy}})a^*(\cc{V_{t,\xx}})a(\cc{V_{t,\xx}}) +\hc\Big)\de\xx\de\yy \label{eq:I2}
	\end{align} 
	contains the quartic terms that do not commute with $\cN$,
	\begin{align}
		\cI_{N,t}^{(3)} = \;&\frac1{N}\int v(\xx, \yy)\Big\{\pphi(\xx)\Big(a^*(U_{t,\xx})a^*(U_{t,\yy})a^*(\cc{V_{t,\yy}}) + a^*(U_{t,\xx})a^*(U_{t,\yy})a(U_{t,\yy}) \nonumber \\
		&\; + a^*(U_{t,\xx})a^*(\cc{V_{t,\yy}})a(\cc{V_{t,\yy}})  + a^*(U_{t,\yy})a^*(\cc{V_{t,\yy}})a(\cc{V_{t,\xx}}) + a^*(\cc{V_{t,\yy}})a(\cc{V_{t,\xx}})a(\cc{V_{t,\yy}}) \nonumber \\
		&\; +  a^*(U_{t,\xx})a(\cc{V_{t,\yy}})a(U_{t,\yy})  +  a^*(U_{t,\yy})a(\cc{V_{t,\xx}})a(U_{t,\yy}) +  a(\cc{V_{t,\xx}})a(\cc{V_{t,\yy}})a(U_{t,\yy}) \Big) + \hc\Big\}\de\xx\de\yy \label{eq:I3}
	\end{align} the cubic terms, and 
	\[
	\cI_{N,t}^{(4)}=  \tr(V_t L V_t^*) + \frac1{2N}\int v(\xx, \yy)|\pphi_t(\xx)|^2|\pphi_t(\yy)|^2\de\xx\de\yy\,,
	\]
	with 
	\begin{equation}
		L = \begin{pmatrix}
			-\Delta & 0 \\ 0 & \Delta 
		\end{pmatrix} \,,
		\label{eq:L}
	\end{equation}
	is a time-dependent constant. 
\end{proposition}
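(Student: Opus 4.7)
The plan is to compute the three contributions to $\cG_{N,t}$ in \eqref{eq:genfluctdyn3} separately and then track the cancellations systematically. First I would rewrite the Liouvillian in the compact doubled-Fock notation as
\[
\cL_N = \de\G(L) + \frac{1}{2N}\int v(\xx,\yy)\, a^*_\xx a^*_\yy a_\yy a_\xx \de\xx\de\yy,
\]
with $L$ as in \eqref{eq:L} and $v(\xx,\yy)$ as in \eqref{eq:v(xx)}. The signs in \eqref{eq:v(xx)} and in $L$ encode the difference $\cH_{N,\ell}-\cH_{N,r}$.

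Next I would apply the Weyl conjugation using $\cW_t^* a_\xx \cW_t = a_\xx + \pphi_t(\xx)$ from \eqref{eq:Weyl_shift_compact}, expanding $\cW_t^*\cL_N\cW_t$ into contributions of order $0,1,2,3,4$ in creation and annihilation operators. The quartic part is unchanged, the cubic parts pick up one factor of $\pphi_t$, the quadratic parts two, etc. Then I would conjugate the result by $\cT_t$, substituting $\cT_t^* a_\xx \cT_t = a(U_{t,\xx})+a^*(\cc{V_{t,\xx}})$ from \eqref{eq:action_time_dep_Bog}, which keeps the same particle-number grading up to Wick reordering. The four-operator part produces directly the sum $\cI_{N,t}^{(1)}+\cI_{N,t}^{(2)}$ of \eqref{eq:I1}--\eqref{eq:I2} and the three-operator part produces $\cI_{N,t}^{(3)}$ in \eqref{eq:I3}. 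Reordering creation/annihilation operators via the CCR also generates lower-order contributions; in particular the contraction of $a^*(\cc{V_{t,\xx}})a(\cc{V_{t,\xx}})$-type pairs produces, through $\de\G(L)$, the scalar term $\tr(V_t L V_t^*)$ in $\cI_{N,t}^{(4)}$, while the Weyl-shift of the interaction gives the scalar $\frac{1}{2N}\int v(\xx,\yy)|\pphi_t(\xx)|^2|\pphi_t(\yy)|^2\de\xx\de\yy$.

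It remains to check that the remaining linear and quadratic terms are absorbed by the two derivative contributions $(\mathrm{i}\dpr_t\cT_t^*)\cT_t$ and $\cT_t^*(\mathrm{i}\dpr_t\cW_t^*)\cW_t\cT_t$. For the linear terms I would use the first HFB equation in \eqref{eq:HFB}, $\mathrm{i}\dpr_t \phi_t = h(\gamma_t)\phi_t + k(\a_t^{\phi_t})\cc{\phi_t}$; the derivative $\mathrm{i}\dpr_t\cW_t^*$ produces a linear term in $a,a^*$ whose kernel is exactly $-\mathrm{i}\dpr_t\phi_t$, while the Weyl expansion of $\cW_t^*\cL_N\cW_t$ produces a linear contribution whose kernel is precisely $h(\gamma_t^{\phi_t})\phi_t+k(\a_t^{\phi_t})\cc{\phi_t}$ (direct, exchange, and pair terms combine to give the mean field potential applied to $\phi_t$), after using that $\cT_t^*$ conjugation then mixes $a$ and $a^*$ uniformly. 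The difference is exactly the linear term that vanishes by the HFB equation for $\phi_t$. For the quadratic terms, I would invoke the Cauchy problem \eqref{eq:Cauchy_problem_symplecto_Ut} for the symplectomorphism $\cU_t$: by construction $\mathrm{i}\dpr_t\cU_t^* = \cS\Lambda(\G_t^{(1)})\cU_t^*$ with $\Lambda(\G_t^{(1)})$ built from $h(\gamma_t^{\phi_t})$ and $k(\a_t^{\phi_t})$, which produces from $(\mathrm{i}\dpr_t\cT_t^*)\cT_t$ a quadratic expression in $a,a^*$ that precisely matches, with opposite sign, the quadratic part of $\cT_t^*\cW_t^*\cL_N\cW_t\cT_t$ (including both the number-preserving and number-non-preserving pieces). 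This is the very reason the HFB equations are chosen: they diagonalize the quadratic part of the Liouvillian in the fluctuation frame.

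The main technical obstacle is purely combinatorial: one has to match, on a term-by-term basis, the many pieces produced by Weyl and Bogoliubov expansion of $\cL_N$ (there are $2^4$ sign patterns from $(a+a^*)^4$ and several from $\de\G(L)$) against the explicit structure of $\cS\Lambda(\G_t^{(1)})$ and against $h(\gamma_t)\phi_t+k(\a_t^{\phi_t})\cc{\phi_t}$. A useful bookkeeping device is to group terms by their action on $\cN$: those commuting with $\cN$ give $\cI_{N,t}^{(1)}$, those changing $\cN$ by $\pm 2$ or $\pm 4$ give $\cI_{N,t}^{(2)}$, and the cubic contributions involving one factor of $\pphi_t$ give $\cI_{N,t}^{(3)}$. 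The regularity assumptions $\phi\in H^3$, $\gamma\in\cH^{3,1}$ and the well-posedness of the HFB flow from Theorem \ref{thm:well-posedness-HFB} ensure that all derivatives and traces make sense as written.
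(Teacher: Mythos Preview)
Your proposal is correct and follows essentially the same approach as the paper: expand $\cL_N$ via Weyl then Bogoliubov conjugation, normal order, and use the HFB equations (the $\phi_t$-equation for the linear part, the Cauchy problem \eqref{eq:Cauchy_problem_symplecto_Ut} for the quadratic part) to cancel the subquartic terms. One small slip: the linear kernel that must match $\mathrm{i}\dpr_t\phi_t$ is $h(\gamma_t)\phi_t+k(\a_t^{\phi_t})\cc{\phi_t}$ as in \eqref{eq:HFB}, not $h(\gamma_t^{\phi_t})\phi_t+k(\a_t^{\phi_t})\cc{\phi_t}$; the $\gamma_t$- and $\alpha_t$-dependent pieces of that kernel arise only after the Bogoliubov step, from the contractions $(\Vy,\Vy)=\widetilde\gamma_t(\yy,\yy)$ and $(\Vx,\Uy)=\widetilde\alpha_t(\xx,\yy)$ produced by normal ordering the cubic terms.
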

\begin{remark}
	Comparing with the literature (especially \cite[Proposition~3.1]{Fermions}), it may seem surprising that the kinetic energy $\de\G(\D)$ of the excitations does not appear in \eqref{eq:generatorformula}. This is a consequence of a different definition of the symplectomorphism $T_t$ in \eqref{eq:symplectoTt}, which of course leads to a different Bogoliubov transformation $\cT_t$. In particular, from \eqref{eq:def_Ut_Vt}, we observe that the HFB propagator $\cU_t$ only acts from the right on the initial data \eqref{eq:def_U0_V0}. Effectively, this amounts to considering the fluctuation dynamics in the interaction picture.
\end{remark}
\begin{proof}
	We start our proof with the computation of $(\mathrm{i} \dpr_t \mathcal{T}_t^*) \mathcal{T}_t$. Our analysis for this term follows that in the proof of \cite[Proposition 3.1]{Fermions}. We observe that 
	\[
	\mathcal{T}_t^*A(f,g) \mathcal{T}_t=A(T_t(f,g)) \qquad \]
	holds for $f,g\in\fh\oplus\fh$, with the notation $A(f,g)=a(f)+a^*(\cc g)$. If we differentiate both sides w.r.t. time we find
	\[
	(\mathrm{i} \dpr_t \mathcal{T}_t^*) \mathcal{T}_t A(T_t(f,g))+A(T_t(f,g))\mathcal{T}_t^*(\mathrm{i} \dpr_t \mathcal{T}_t) = -A((\mathrm{i}\dpr_t T_t)(f,g)) \,.
	\] 
	We use $0 = \mathrm{i} \dpr_t( \mathcal{T}_t^* \mathcal{T}_t)=(\mathrm{i} \dpr_t \mathcal{T}_t^*) \mathcal{T}_t+\mathcal{T}_t^*(\mathrm{i}\dpr_t \mathcal{T}_t)$ to arrive at
	\begin{equation}\label{eq:idtT*T_commutator}
		\left[(\mathrm{i}\dpr_t\mathcal{T}_t^*)\mathcal{T}_t, A(T_t(f,g))\right]= 	-A((\mathrm{i}\dpr_tT_t)(f,g)) \,.
	\end{equation} Since $T_t$ is a symplectomorphism and \eqref{eq:idtT*T_commutator} holds for $f,g \in \fh \oplus \fh$, we conclude that $(\mathrm{i} \dpr_t\mathcal{T}_t^*)\mathcal{T}_t$ is quadratic in creation and annihilation operators. That is, we can write it as
	\begin{equation}\label{eq:idtTt*Tt}
		(\mathrm{i}\dpr_t\mathcal{T}_t^*)\mathcal{T}_t=\int C_t(\xx,\yy)a^*_\xx a_\yy\de \xx\de\yy+\frac{1}{2}\left(\int D_t(\xx,\yy)a_\xx^*a_\yy^*\de\xx\de\yy+\hc\right) \,,
	\end{equation} 
	where $C_t(\xx,\yy)$ and $D_t(\xx,\yy)$ denote the kernels of two operators $C_t$ and $D_t$ on $\fh\oplus\fh$. The operator $C_t$ is self-adjoint, i.e. 
	\[
	C_t(\yy,\xx) = \cc{C_t(\xx,\yy)}\,,
	\] and we assume 
	\[
	D(\xx,\yy) = D(\yy,\xx)\,,
	\] which will be justified a-posteriori. We use \eqref{eq:idtT*T_commutator} to determine the kernels. A straightforward computation shows that
	\[
	\begin{split}
		\left[\int C_t(\xx,\yy)a^*_\xx a_\yy\de \xx\de\yy, A(T_t(f,0))\right] =&\; -a(C_tU_tf)+a^*(C_t\cc{V_t} \ \cc{f})\,,\\
		\left[\frac12\int D_t(\xx,\yy)a^*_\xx a^*_\yy\de \xx\de\yy, A(T_t(f,0))\right] =&\; -a^*(D_t\cc{U_t} \ \cc{f})\,,\\
		\left[\frac12\int \cc{D_t(\xx,\yy)}a_\xx a_\yy\de \xx\de\yy, A(T_t(f,0))\right] =&\; a(D_tV_tf)\,.	
	\end{split}	
	\] 
	We insert this into \eqref{eq:idtT*T_commutator} and obtain the following system of equations for $C_t, D_t$:
	\[
	\begin{pmatrix}
		C_t & D_t
	\end{pmatrix} \begin{pmatrix}
		- U_t &  \cc{V_t} \\
		V_t & - \cc{U_t}
	\end{pmatrix} = \begin{pmatrix}
		- \mathrm{i}\dpr_tU_t & \mathrm{i}\dpr_t\cc{V_t}
	\end{pmatrix}.
	\] Using \eqref{eq:relations_bog} and \eqref{eq:Cauchy_problem_symplecto_Ut}, we see that
	\begin{equation*} \begin{split}
			\begin{pmatrix} C_t & D_t \end{pmatrix}	=& \begin{pmatrix} i\dpr_tU_t & -i\dpr_t\cc{V_t}	\end{pmatrix}  \begin{pmatrix} U_t^* & V_t^* \\ \cc{V^*_t} & \cc{U^*_t}	\end{pmatrix} \\
			=& - \begin{pmatrix} U_t & \cc{V_t} \end{pmatrix} \begin{pmatrix}	H_t & K_t \\	\cc {K_t} & \cc{H_t} \end{pmatrix}  \begin{pmatrix} U_t^* & V_t^* \\ \cc{V^*_t} & \cc{U^*_t}	\end{pmatrix}
		\end{split} 
	\end{equation*} 
	holds, and hence
	\begin{align}
		C_t =&\; - \left(U_tH_tU_t^*  +U_tK_t\cc{V^*_t} + \cc{V_t} \ \cc{K_t} \ U_t^*+\cc{V_t} \ \cc{H_t} \ \cc{V_t^*}\right) \,, \nonumber \\
		D_t = &\; - \left(U_tH_tV^*_t+U_tK_t\cc{U_t^*}+\cc{V_t} \ \cc{K_t}V_t^*+\cc{V_t} \ \cc{H_t} \ \cc{U_t^*}\right) \,. \label{eq:CtDt}
	\end{align}
	Observe that $D_t$ is indeed symmetric, that is, $\overline{D_t^*} = D_t$. 
	
	Next, we compute $\mathcal{T}_t^*(\mathrm{i} \dpr_t\cW_t^*)\cW_t\mathcal{T}_t$. 
	A short computation shows
	\[
	(\mathrm{i}\dpr_t\cW_t^*)\cW_t = -\left(a^*(\mathrm{i} \dpr_t\pphi_t)+a(\mathrm{i} \dpr_t\pphi_t)\right) \,,
	\] 
	which implies
	\begin{equation}
		\mathcal{T}_t^*(\mathrm{i}\dpr_t\cW_t^*)\cW_t\mathcal{T}_t = - \left(a^*(U_t(\mathrm{i}\dpr_t\pphi_t))+a^*(\cc{V_t}(-\mathrm{i}\dpr_t\cc{\pphi_t}))+\hc\right)\,.
	\end{equation} 
		We write $\cL_N = \cL_0+\cV_N$ with
		\[
		\begin{split}
			\cL_0 = &\; \de\G_\ell(-\D) - \de\G_r(-\D) = \de\G(L)\,,\\
			\cV_N = &\; \frac{1}{2N}\int v(x-y)a^*_{\ell,x}a^*_{\ell,y}a_{\ell,y}a_{\ell,x}\de x\de y - \frac1{2N}\int v(x-y)a^*_{r,x}a^*_{r,y}a_{r,y}a_{r,x}\de x\de y \\
			= & \; \frac1{2N}\int v(\xx, \yy)a^*_\xx a^*_\yy a_\yy a_\xx \de \xx\de \yy\,,
		\end{split}
		\] 
		and $L$ as defined in \eqref{eq:L}. A straightforward calculation shows that
		\[
		\cW_t^*\cL_0\cW_t = \cL_0 + a(L \pphi_t) + a^*(L \pphi_t)\,.
		\] 
		Using \eqref{eq:Tt_action} we can compute 
		\[
		\begin{split}
			\mathcal{T}_t^*\cL_0\mathcal{T}_t=&\; \de\G(U_t L U_t^*+\cc{V_t} L \cc{V_t^*}) + \tr(V_t L V_t^*)	\\
			&\;+\int a_\xx^*a_\yy^*(U_t L V_t^*)(\xx,\yy) \de \xx \de \yy + \hc\,,
		\end{split}
		\] 
		as well as 
		\[
		\mathcal{T}_t^*[a(L \pphi_t) + a^*(L \pphi_t) ]\mathcal{T}_t= a(U_t L \pphi_t+\cc{V_t} L \cc{\pphi_t}) + a^*(U_t L \pphi_t+\cc{V_t} L \cc{\pphi_t})\,.
		\] 
		We conclude that 
		\begin{align}
			\mathcal{T}_t^*\cW_t^* \cL_0\cW_t\mathcal{T}_t = &\; a(U_t L \pphi_t+\cc{V_t} L \cc{\pphi_t}) + a^*(U_t L \pphi_t+\cc{V_t} L \cc{\pphi_t}) + \tr(V_t L V_t^*) \nonumber \\
			&\; + \de\G(U_tLU_t^*+\cc{V_t}L\cc{V_t^*}) +\left(\int a_\xx^*a_\yy^*(U_tLV_t^*)(\xx,\yy) \de \xx \de \yy + \hc\right) \,. \label{eq:kineticterm}
		\end{align}
		
		Let us continue with the computation of $\mathcal{T}_t^*\cW_t^*\cV_N\cW_t\mathcal{T}_t$. Using \eqref{eq:Weyl_shift_double}, we obtain
		\[
		\begin{split}
			\cW_t^*a_\xx^*a_\yy^*a_\yy a_\xx\cW_t= & \; a_\xx^*a_\yy^*a_\yy a_\xx +\left(\pphi_t(\xx) a_\xx^*a_\yy^*a_\yy + \cc{\pphi_t(\xx)} a_\yy^*a_\yy a_\xx +\xy\right) \\
			& \; +\left(|\pphi_t(\yy)|^2a_\xx^*a_\xx+\pphi_t(\xx)\cc{\pphi_t(\yy)}a_\xx^*a_\yy +\xy\right) + \left(\pphi_t(\xx)\pphi_t(\yy)a_\xx^*a^*_\yy +\hc\right)\\
			&\; + \left(a_\xx^*\pphi_t(\xx)|\pphi_t(\yy)|^2 + a_\xx \cc{\pphi_t(\xx)}|\pphi_t(\yy)|^2+\xy\right) + |\pphi_t(\xx)|^2|\pphi_t(\yy)|^2 \,,
		\end{split}
		\] where $\xy$ is a shorthand notation for the expression preceding it with  $\xx$ and $\yy$ exchanged. We conjugate both sides with $\mathcal{T}_t$, apply \eqref{eq:Tt_action}, bring all terms to normal order and integrate the result against the potential $\frac1{2N}v(\xx, \yy)$. We omit the details of this lenghty but straightforward computation, which leads us to
		\[
		\mathcal{T}_t^*\cW_t^*\cV_N\cW_t\mathcal{T}_t= \sum_{i=1}^3\cI_{N,t}^{(i)} + \cQ + \mathcal{L} + C\,,
		\] where the operators $\cI_{N,t}^{(i)}$ are defined below \eqref{eq:generatorformula}. Moreover,
		\[ 
		\begin{split}
			\cQ = &\; \frac1{2N}\int v(\xx, \yy)\Bigg\{\pphi_t(\xx)\pphi_t(\yy)\Big[a^*(\Ux)a^*(\Uy)+a^*(\Ux)a(\Vy) \\
			&\qquad\qquad \qquad\qquad \qquad  + a^*(\Uy)a(\Vx)+a(\Vx)a(\Vy)\Big] +\hc \Bigg\}\de\xx\de\yy\\
			&\; + \frac1{N}\int v(\xx, \yy)|\pphi_t(\yy)|^2\Big[a^*(\Ux)a(\Ux)+a^*(\Ux)a^*(\Vx) + a(\Vx)a(\Ux)+a^*(\Vx)a(\Vx)\Big]\de\xx\de\yy\\
			&\; + \frac1{N}\int v(\xx, \yy) \pphi_t(\xx)\cc{\pphi_t(\yy)}\Big[a^*(\Ux)a(\Uy)+a^*(\Ux)a^*(\Vy) + a(\Vx)a(\Uy)+a^*(\Vy)a(\Vx)\Big]\de\xx\de\yy\\
			&\; + \frac1{2N}\int v(\xx, \yy)\Bigg\{\left(\Uy,\Vx\right)\Big[a^*(\Uy)a(\Vx) + a^*(\Ux)a^*(\Uy)\\
			&\qquad\qquad \qquad\qquad \qquad  + a(\Vx)a(\Vy)+ a^*(\Ux)a(\Vy)\Big]+\hc\Bigg\}\de\xx\de\yy\\
			&\; + \frac1{N}\int v(\xx, \yy)\left(\Vy,\Vy\right)\Big[a^*(\Ux)a(\Ux)+a^*(\Vx)a(\Vx) + a^*(\Ux)a^*(\Vx) + a(\Vx)a(\Ux)\Big]\de\xx\de\yy\\
			&\; + \frac1{N}\int v(\xx, \yy)\left(\Vy,\Vx\right)\Big[a^*(\Ux)a^*(\Vy)+ a(\Vx)a(\Uy) + a^*(\Vy)a(\Vx) + a^*(\Ux)a(\Uy)\Big]\de\xx\de\yy
		\end{split}
		\] 
		contains the quadratic terms, 
		\begin{align}
			\cL =  \frac1N\int &v(\xx, \yy)\Bigg\{\pphi_t(\xx)\Big[|\pphi_t(\yy)|^2+\left(\Vy,\Vy\right)\Big]\Big(a^*(\Ux)+a(\Vx)\Big) \nonumber \\
			&\; + \left(\Vx,\Uy\right)\Big(a(\Uy)+a^*(\Vy)\Big) + \left(\Vx,\Vy\right)\Big(a^*(\Uy)+a(\Vy)\Big)\Big] +\hc\Bigg\}\de\xx\de\yy 
			\label{eq:LL}
		\end{align}
		the linear terms, and
		\[
		C = \frac1{2N}\int v(\xx, \yy)|\pphi_t(\xx)|^2|\pphi_t(\yy)|^2\de\xx\de\yy
		\] is a time-dependent constant. Here we used the notation $(\cdot,\cdot)$ for the inner product in $(L^2(\RRR^3)\oplus L^2(\RRR^3),\de\zz)$, that is, 
		\begin{equation} \label{eq:innerproduct}
			(f,g)=\int \cc{f(\zz)}g(\zz)\de\zz\,.
		\end{equation}
		Eq.~\eqref{eq:diagallt} implies
		\begin{equation}\label{eq:scalar_prod_UV}
			(\Vy,\Vx)=V_t^*V_t(\xx,\yy)\eqqcolon\widetilde{\g}_t(\xx,\yy) \quad \text{ and } \quad (\Uy,\Vx)\eqqcolon\widetilde{\a}_t(\xx,\yy)\,,
		\end{equation} 
		where, $\widetilde{\g}_t(\xx,\yy)$ and $\widetilde{\alpha}_t(\xx,\yy)$ denote the kernels of the full 1-pdm $\widetilde{\g}_t \in \mathcal{B}(\fh \oplus \fh)$ and of the full pairing function $\widetilde{\alpha}_t \in \mathcal{B}(\fh \oplus \fh)$ of the vector state $\Psi_t = \mathcal{T}_t\O$ on $\mathscr{F}(\fh \oplus \fh)$, respectively. It should be highlighted that $\widetilde{\g}_t((x,\ell),(y,\ell)) = \gamma_t(x,y)$ and $\widetilde{\alpha}_t((x,\ell),(y,\ell)) = \alpha_t(x,y)$. We use $\cc{\widetilde{\g}_t(\xx,\yy)}=\widetilde{\g}_t(\yy,\xx)$ and $\widetilde{\a}_t(\xx,\yy)=\widetilde{\a}_t(\yy,\xx)$ to write $\mathcal{L}$ in \eqref{eq:LL} as
		\[
		\begin{split}
			\cL = & \; a^*\left(U_t\left(v\ast(|\pphi_t|^2+\r_{\g_t})\pphi_t\right)\right)+a\left(\cc{V_t\left(v\ast(|\pphi_t|^2+\r_{\g_t})\right)\pphi_t}\right) \\
			&\; +a^*\left(U_t\left((v\sharp\a_t)\cc{\pphi_t}\right)\right)+a\left(\cc{V_t}\left(\cc{(v\sharp\a_t)}\pphi_t\right)\right) \\
			&\; + a^*\left(U_t\left((v\sharp\g_t)\pphi_t\right)\right) + a\left(\cc{V_t\left((v\sharp\g_t)\pphi_t\right)}\right) + \hc \\
			= &\; a^*\Bigg(U_t\left( \left( v\ast|\pphi|^2+v\ast \r_{\g_t} + v\sharp\g_t \right)\pphi_t + (v\sharp\a_t)\cc{\pphi_t} \right) \\
			&\; + \cc{V_t}\left( \cc{ \left( v\ast|\pphi|^2+v\ast \r_{\g_t} + v\sharp\g_t \right)\pphi_t + (v\sharp\a_t)\cc{\pphi_t}} \right)\Bigg) +\hc\,.
		\end{split}
		\]
		
		In the final step we use the HFB equations to cancel all terms that are linear or quadratic in $a$ and $a^*$. We start by collecting all linear terms. Using the HFB equations in \eqref{eq:HFB}, we find
		\begin{align*}
			&\cL+ a(U_tL\pphi_t+\cc{V_t}L\cc{\pphi_t}) + a^*(U_tL\pphi_t+\cc{V_t}L\cc{\pphi_t})+ 	\mathcal{T}_t^*(\mathrm{i}\dpr_t\cW_t^*)\cW_t\mathcal{T}_t \\
			& \hspace{3cm} =\Bigg\{ a^*\Bigg(U_t\left( \left( L + v\ast|\pphi|^2+v\ast \r_{\g_t} + v\sharp\g_t \right)\pphi_t + (v\sharp\a_t)\cc{\pphi_t} \right) \\
			&\hspace{3.5cm} + \cc{V_t}\left( \cc{  \left(L + v\ast|\pphi|^2+v\ast \r_{\g_t} + v\sharp\g_t \right)\pphi_t + (v\sharp\a_t)\cc{\pphi_t}} \right)\Bigg) \\
			&\hspace{3.5cm} -a^*\left(U_t(\mathrm{i}\dt\pphi_t)+\cc{V_t}(\cc{\mathrm{i}\dt\pphi_t})\right) \Bigg\} +\hc = 0\,.
		\end{align*} 
		Similarly, we write 
		\begin{align*}
			\cQ = & \; \frac{1}{2}\left(\int a^*_\xx a^*_\yy \left( U_t (v\sharp(|\pphi_t\rangle\langle\cc{\pphi_t}|+\a_t))\cc{U_t^*} + \cc{V_t} (\cc{v\sharp(|\pphi_t\rangle\langle\cc{\pphi_t}|+\a_t})V_t^* \right)(\xx,\yy)\de\xx\de\yy +\hc \right) \\
			& \;+ \int a^*_\xx a_\yy \left( U_t (v\sharp(|\pphi_t\rangle\langle\cc{\pphi_t}|+\a_t)\cc{V_t^*} + \cc{V_t} (\cc{v\sharp(|\pphi_t\rangle\langle\cc{\pphi_t}|+\a_t})U_t^* \right)(\xx,\yy) \de\xx\de\yy\\
			&\; + \left(\int a^*_\xx a^*_\yy \left( U_t (v\ast(|\pphi_t|^2+\r_{\g_t})) V_t^*  \right)(\xx,\yy) \de\xx\de\yy +\hc\right)\\
			& \;+ \int a^*_\xx a_\yy \left( U_t (v\ast(|\pphi_t|^2+\r_{\g_t})) U_t^* + \cc{V_t} (v\ast(|\pphi_t|^2+\r_{\g_t})) \cc{V_t^*}  \right)(\xx,\yy) \de\xx\de\yy \\
			&\; + \left(\int a^*_\xx a^*_\yy \left( U_t (v\sharp(|\pphi_t\rangle\langle\pphi_t|+\g_t))V_t^* \right)(\xx,\yy)\de\xx\de\yy +\hc\right)  \\
			& \;+ \int a^*_\xx a_\yy \left( U_t (v\sharp(|\pphi_t\rangle\langle\pphi_t|+\g_t))U_t^* + \cc{V_t} (\cc{v\sharp(|\pphi_t\rangle\langle\pphi_t|+\g_t}))\cc{V_t^*} \right)(\xx,\yy) \de\xx\de\yy \,. \\
		\end{align*}
		When we combine these terms with the quadratic terms coming from $\cW_t^*\mathcal{T}_t^*\cL_0\mathcal{T}_t\cW_t$ we get
		\begin{align*}
			&\de\G(U_tLU_t^*+\cc{V_t}L\cc{V_t^*}) +\left(\int a_\xx^*a_\yy^*(U_tLV_t^*)(\xx,\yy)\de\xx\de\yy + \hc\right) +\cQ \\
			&\hspace{3cm} = \int a_\xx^*a_\yy \left(U_t H_t U_t^* + U_tK_t\cc{V_t^*} + \cc{V_t} \ \cc{K_t}U_t^* + \cc{V_t} \ \cc{H_t} \ \cc{V_t^*} \right) (\xx,\yy)\de\xx\de\yy\\
			& \hspace{3.5cm} +\frac{1}{2}\left(\int a_\xx^*a_\yy^* \left(2U_tH_tV_t^* + U_tK_t\cc{U_t^*} + \cc{V_t} \ \cc{K_t}V_t^*\right)(\xx,\yy)\de\xx\de\yy+\hc\right) \\
			& \hspace{3cm} = -(\mathrm{i}\dt \mathcal{T}_t^*)\mathcal{T}_t\,.
		\end{align*}
		The last equality follows from \eqref{eq:idtTt*Tt}, \eqref{eq:CtDt}, and 
		\begin{equation}
			\int a_\xx^*a_\yy^* \left(\cc{V_tH_tU_t^*}\right)(\xx,\yy)\de\xx\de\yy = \int a_\xx^*a_\yy^* \left(U_tH_tV_t^*\right)(\yy,\xx)\de\xx\de\yy=\int a_\xx^*a_\yy^* \left(U_tH_tV_t^*\right)(\xx,\yy)\de\xx\de\yy\,.
			\label{eq:AAA1}
		\end{equation}
		Eq.~\eqref{eq:AAA1} is a consequence of the canonical commutation relations and $H_t^*=H_t$. Thus all linear and quadratic terms in the generator cancel. This concludes the proof of Proposition~\ref{prop:Generator}. 
	\end{proof}
	\subsection{The truncated fluctuation dynamics} \label{sec4.1}
	As explained in Section~\ref{sec:proofstrategy}, we will apply a Gr\"onwall argument to control the expected number of particles in the fluctuation vector $\xi_t$. The main problem to overcome in this approach is that the operator norms of $U_t$ and $V_t$ in the generator $\cG_{N,t}$ grow with $N$. To solve this problem, we introduce an auxiliary fluctuation dynamics $\cU_\eta^{\textrm{fluct}}$, with a time-dependent generator $\cG_{N,t}^{(\eta)}$  defined similarly as $\cG_{N,t}$, but with a cutoff on the number of particles. We will then prove a bound on the growth of the expectation and of the moments of the number of particles operator with respect to $\cU_\eta^{\textrm{fluct}}$; later we will compare with the true fluctuation dynamics $\cU^{\textrm{fluct}}$. A similar strategy was used in \cite{RodSchlein2009} (at $T=0$) and in \cite{Fermions} (for fermionic systems).
	
	We choose $\eta \in (0,1)$, recall the definition of $\mathcal{N}$ in \eqref{eq:numberOp}, and define 
	\[
	\cal G^{(\eta)}_{N,t}\deq \sum_{i=1}^{4}\cI_{N,t}^{(i,\eta)},
	\] where
	\[
	\begin{split}
		\cI_{N,t}^{(1,\eta)}
		= &\;\frac{1}{2N} \int v(\xx, \yy)\Big(a^*(U_{t,\xx})a^*(U_{t,\yy})a(U_{t,\yy})a(U_{t,\xx})\\
		&\;+ a^*(U_{t,\xx})a^*(\cc{V_{t,\yy}})a(\cc{V_{t,\yy}})a(U_{t,\xx})+ a^*(U_{t,\yy})a^*(\cc{V_{t,\xx}})a(\cc{V_{t,\xx}})a(U_{t,\yy}) \\
		&\;+ a^*(\cc{V_{t,\xx}})a^*(\cc{V_{t,\yy}})a(\cc{V_{t,\yy}})a(\cc{V_{t,\xx}})+2a^*(U_{t,\xx})a^*(\cc{V_{t,\yy}})a(\cc{V_{t,\yy}})a(U_{t,\xx})\Big)\mathds{1}(\cN \le N^{\eta})\de\xx\de\yy \,,\\
		\cI_{N,t}^{(2,\eta)} = \;& \frac{1}{2N} \int v(\xx, \yy)\Bigg\{\Big(a^*(U_{t,\xx})a^*(U_{t,\yy})a^*(\cc{V_{t,\yy}})a^*(\cc{V_{t,\xx}})+a^*(U_{t,\xx})a^*(U_{t,\yy})a^*(\cc{V_{t,\yy}})a(U_{t,\xx}) \\
		&\;+ a^*(U_{t,\xx})a^*(\cc{V_{t,\yy}})a^*(\cc{V_{t,\xx}})a(\cc{V_{t,\yy}}) + a^*(U_{t,\xx})a^*(U_{t,\yy})a^*(\cc{V_{t,\xx}})a(U_{t,\yy}) \\ 
		&\; + a^*(U_{t,\yy})a^*(\cc{V_{t,\yy}})a^*(\cc{V_{t,\xx}})a(\cc{V_{t,\xx}})\Big)\mathds{1}(\cN \le N^{\eta})+\hc\Bigg\}\de\xx\de\yy\,,\\
		\cI_{N,t}^{(3,\eta)} = \;&\frac1{N}\int v(\xx, \yy)\Bigg\{\pphi(\xx)\Big(a^*(U_{t,\xx})a^*(U_{t,\yy})a^*(\cc{V_{t,\yy}}) + a^*(U_{t,\xx})a^*(U_{t,\yy})a(U_{t,\yy}) \\
		&\; + a^*(U_{t,\xx})a^*(\cc{V_{t,\yy}})a(\cc{V_{t,\yy}})  + a^*(U_{t,\yy})a^*(\cc{V_{t,\yy}})a(\cc{V_{t,\xx}}) + a^*(\cc{V_{t,\yy}})a(\cc{V_{t,\xx}})a(\cc{V_{t,\yy}})  \\
		&\; +  a^*(U_{t,\xx})a(\cc{V_{t,\yy}})a(U_{t,\yy})  +  a^*(U_{t,\yy})a(\cc{V_{t,\xx}})a(U_{t,\yy}) +  a(\cc{V_{t,\xx}})a(\cc{V_{t,\yy}})a(U_{t,\yy}) \Big)\mathds{1}(\cN \le N^{\eta}) + \hc\Bigg\}\de\xx\de\yy\,,
	\end{split}
	\] and $\cI_{N,t}^{(4,\eta)}=\cI_{N,t}^{(4)}$.
	By $\mathds{1}(\cN \le N^{\eta})$ we denote the spectral projection of the number of particles operator onto the linear subspace of $\mathscr{F}(\fh \oplus \fh)$, where $\mathcal{N} \le N^{\eta}$ holds. We also introduce the fluctuation dynamics $\mathcal U_{\eta}^{\mathrm{fluct}}(t;s)$ as the unique solution to the equation
	\begin{equation}\label{eq:equation_u_alpha}
		\mathrm{i} \dt \mathcal U_{\eta}^{\mathrm{fluct}}(t;s) = \mathcal{G}^{(\eta)}_{N,t} \mathcal U_{\eta}^{\mathrm{fluct}}(t;s) \quad \text{ with } \quad \cU_{\eta}^{\mathrm{fluct}}(s;s) = \mathds{1}\,,
	\end{equation}
	which is well-defined because for every $t>0$ the Hamiltonian $\mathcal{G}_{N,t}^{(\eta)}$ is a bounded self-adjoint operator on $\mathscr{F}(\fh \oplus \fh)$. The fluctuation vector related to the above dynamics will be denoted by $\xi_{\eta,t}= \mathcal U_{\eta}^{\mathrm{fluct}}(t;0) \xi$. The goal of this section is to prove the following proposition.
	\begin{proposition}\label{lem:gronwall_N}
		Let the interaction potential $v$ satisfy the same assumptions as in Theorem \ref{thm:main_2}, and let the initial datum $(\phi,\gamma)$ satisfy Assumption \ref{ass:scales} with $0 < s \leq 3/2$. Moreover, let $\eta\in(0,1/7]$. For every $j\in\NN$, there exists a constant $c > 0$ independent of $N \in \mathbb{N}$ and $t>0$ such that
		\begin{equation}\label{eq:Gronwall_final_bound}
			\langle \xi_{\eta,t}, \cN^j \xi_{\eta,t} \rangle \le \exp(c\exp(c  \exp(ct))) \langle\x,\cN^j\x\rangle
		\end{equation}
		for every $N\in\NN$, $t>0$. 
	\end{proposition}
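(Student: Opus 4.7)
The strategy is a Gr\"onwall argument for $u_j(t) := \langle \xi_{\eta,t}, \cN^j \xi_{\eta,t}\rangle$. Differentiating with the truncated evolution equation \eqref{eq:equation_u_alpha} and using the self-adjointness of $\cG^{(\eta)}_{N,t}$ gives
\begin{equation*}
\frac{\de}{\de t}u_j(t) = \mathrm{i}\,\langle \xi_{\eta,t}, [\cG^{(\eta)}_{N,t}, \cN^j]\xi_{\eta,t}\rangle.
\end{equation*}
Decomposing $\cG^{(\eta)}_{N,t} = \sum_{i=1}^{4} \cI^{(i,\eta)}_{N,t}$ as in Proposition~\ref{prop:Generator}, the quartic number-conserving part $\cI^{(1,\eta)}_{N,t}$ (two creators, two annihilators) and the scalar $\cI^{(4,\eta)}_{N,t}$ commute with $\cN^j$ and drop out. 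Only the off-diagonal quartic piece $\cI^{(2,\eta)}_{N,t}$, whose summands shift $\cN$ by $\pm 2$ or $\pm 4$, and the cubic piece $\cI^{(3,\eta)}_{N,t}$, whose summands shift $\cN$ by $\pm 1$ or $\pm 3$, feed the commutator.

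For each summand $A$ that shifts particle number by $m\in\ZZZ$, the identity $\cN A = A(\cN+m)$ yields $[A,\cN^j] = A\bigl(\cN^j-(\cN+m)^j\bigr) = A\,P_{j-1}(\cN)$, a polynomial of degree $j-1$ in $\cN$. The problem thus reduces to estimating quantities of the form
\begin{equation*}
\bigl|\langle \xi_{\eta,t},\,A\,P_{j-1}(\cN)\,\mathds{1}(\cN\le N^\eta)\,\xi_{\eta,t}\rangle\bigr|
\end{equation*}
for each such summand. These matrix elements will be controlled via Cauchy--Schwarz combined with the standard bounds $\|a(f)\psi\| \le \|f\|\|\cN^{1/2}\psi\|$ and $\|a^*(f)\psi\| \le \|f\|\|(\cN+1)^{1/2}\psi\|$, the operator norms $\|U_t\|,\|V_t\| \lesssim T_{\mathrm c}^{1/2}(s)\,e^{ct}$ from Lemma~\ref{lem:closeness_dynamics_V}, the mass conservation $\|\phi_t\| \le N^{1/2}$, the bound $\sup_x|\phi_t(x)|\le \|\widehat{\phi}_t\|_1 \lesssim e^{ct}$ from Lemma~\ref{lem:diluteness_hartree}, and, for the contributions in which $v$ is paired with thermal densities, the diluteness estimate $\sup_{x,y}|\gamma_t(x,y)|\lesssim T_{\mathrm c}^{3/2}(s)\exp(c\exp(ct))$ from Proposition~\ref{lem:diluteness_HFB}. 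The cut-off $\mathds{1}(\cN\le N^\eta)$ converts each surplus factor $(\cN+1)^{1/2}$ produced by the creators/annihilators into $\lesssim N^{\eta/2}$, allowing the remaining $P_{j-1}(\cN)$ to be absorbed into $\cN^j$ and yielding a closed bound in $u_j$ alone.

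The heaviest contribution is the four-creator summand of $\cI^{(2,\eta)}_{N,t}$, which generates a prefactor of order
\begin{equation*}
\frac{\|v\|_1}{N}\,\|U_t\|^2\,\|V_t\|^2\,N^{2\eta}\;\lesssim\; N^{-1+2/\alpha+2\eta}\,e^{c t}, \qquad \alpha = \frac{6+3s}{2s}.
\end{equation*}
Under the hypothesis $s\le 3/2$ one has $2/\alpha \le 2/7$, so the assumption $\eta\le 1/7$ gives $-1 + 2/\alpha + 2\eta \le -3/7 \le 0$ and the prefactor stays uniformly bounded in $N$. The cubic summands of $\cI^{(3,\eta)}_{N,t}$ are handled analogously, trading one factor of $\|U_t\|$ or $\|V_t\|$ for $\|\pphi_t\|_\infty$ or $\|\pphi_t\|$, and the same power count closes with room to spare. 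Assembling all contributions yields a differential inequality $\bigl|\tfrac{\de}{\de t}u_j(t)\bigr| \le C(t)\,u_j(t)$ with $C(t) \lesssim \exp(c\exp(ct))$; the standard Gr\"onwall lemma together with $\int_0^t \exp(c\exp(cs))\,\de s \lesssim \exp(c\exp(ct))$ then produces the triple-exponential bound $u_j(t) \le u_j(0)\exp(c\exp(c\exp(ct)))$ claimed in the statement.

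\textbf{Main obstacle.} The principal difficulty is the careful bookkeeping in the commutator estimates so that the differential inequality closes in the single quantity $u_j$ (without opening higher moments of $\cN$): one must track how many $(\cN+1)^{1/2}$-factors each creation/annihilation operator produces, absorb the correct amount into the cut-off $\mathds{1}(\cN\le N^\eta)$, and combine the residual $N^\eta$-powers with $P_{j-1}(\cN)$ into $\cN^j$. The conditions $s \le 3/2$ and $\eta \le 1/7$ are tight precisely for the four-creator contribution above, which is the true bottleneck of the argument.
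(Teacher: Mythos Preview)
Your plan is essentially the paper's proof: Gr\"onwall on $u_j$, only $\cI^{(2,\eta)}_{N,t}$ and $\cI^{(3,\eta)}_{N,t}$ survive the commutator, Cauchy--Schwarz together with the cutoff, the norm bound on $U_t,V_t$, and the diluteness estimate close the differential inequality with a double-exponential rate.  However, your power counting is off in several places, and as a result you misidentify where the hypotheses $s\le 3/2$ and $\eta\le 1/7$ actually bite.

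First, at $s=3/2$ one has $1/\alpha=2/7$, hence $2/\alpha=4/7$ (not $2/7$).  Second, after the Cauchy--Schwarz splitting each factor carries a square root, so the normal-ordered four-creator piece produces $N^{-1+\eta}\|U_t\|^2\|V_t\|^2$, not $N^{-1+2\eta}\|U_t\|^2\|V_t\|^2$; either way this term is \emph{not} tight.  The genuine bottleneck for $s\le 3/2$ comes from the commutator terms generated when you normal-order $a(\overline{V_{t,\xx}})a(\overline{V_{t,\yy}})a^*(\overline{V_{t,\yy}})a^*(\overline{V_{t,\xx}})$: these produce $\widetilde\gamma_t(\xx,\xx)\widetilde\gamma_t(\yy,\yy)$, and after pairing one factor with $\int|v|$ and the other with $\sup_x\gamma_t(x,x)$ via Proposition~\ref{lem:diluteness_HFB}, the resulting contribution is of order $N^{-1/2}T_{\mathrm c}^{3/4}\|U_t\|^2 \sim N^{-1/2}T_{\mathrm c}^{7/4}$, which saturates exactly at $s=3/2$.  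The constraint $\eta\le 1/7$ is not tight at the quartic term either; it is forced by the three-creator summand of $\cI^{(3,\eta)}_{N,t}$, whose contribution after Cauchy--Schwarz is $\sim N^{-1/2}\|U_t\|^2\|V_t\|N^{\eta/2}\sim T_{\mathrm c}^{3/2}N^{(\eta-1)/2}$, equal to $1$ precisely when $\eta=1/7$ and $s=3/2$.  Finally, Lemma~\ref{lem:diluteness_hartree} controls $\|\widehat{\phi_t^{\mathrm H}}\|_1$ for the Hartree flow, not the HFB condensate $\phi_t$; the paper's argument uses only $\|\phi_t\|^2\le N$ here, so you should drop the $\|\pphi_t\|_\infty$ input.
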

	\begin{proof} We start by noting that
		\begin{equation} \label{eq:N_time_derivative}
			\mathrm{i} \frac{\de }{\de t} \langle \xi_{\eta,t}, \cN^j \xi_{\eta,t} \rangle  = -\sum_{k=0}^{j-1} \langle \xi_{\eta,t}, \cN^k  [\cal G^{(\eta)}_{N,t},\cal N] \cN^{j-k-1} \xi_{\eta,t} \rangle\,.
		\end{equation}  A simple computation shows
		\begin{equation} \label{4.41}
			[\cal G^{(\eta)}_{N,t},\cal N]= \widetilde{\cal I}_{N.t}^{(2,\eta)}+\widetilde{\cal I}_{N,t}^{(3,\eta)} \,,
		\end{equation}
		where
		\begin{align}
			\widetilde{\mathcal{I}}^{(2,\eta)}_{N,t} = \;& - \frac{1}{N} \int v(\xx, \yy)\Bigg\{\Big(2a^*(U_{t,\xx})a^*(U_{t,\yy})a^*(\cc{V_{t,\yy}})a^*(\cc{V_{t,\xx}})+a^*(U_{t,\xx})a^*(U_{t,\yy})a^*(\cc{V_{t,\yy}})a(U_{t,\xx}) \nonumber \\
			&\;+ a^*(U_{t,\xx})a^*(\cc{V_{t,\yy}})a^*(\cc{V_{t,\xx}})a(\cc{V_{t,\yy}}) + a^*(U_{t,\xx})a^*(U_{t,\yy})a^*(\cc{V_{t,\xx}})a(U_{t,\yy}) \nonumber \\ 
			&\; + a^*(U_{t,\yy})a^*(\cc{V_{t,\yy}})a^*(\cc{V_{t,\xx}})a(\cc{V_{t,\xx}})\Big)\mathds{1}(\cN \le N^{\eta})-\hc\Bigg\}\de\xx\de\yy \label{eq:I2tilda} \,,
		\end{align} 
		and 
		\begin{align}
			\widetilde{\cal I}_{N,t}^{(3,\eta)}=\;& - \frac1{N}\int v(\xx, \yy)\Bigg\{\pphi(\xx)\Big(3a^*(U_{t,\xx})a^*(U_{t,\yy})a^*(\cc{V_{t,\yy}}) + a^*(U_{t,\xx})a^*(U_{t,\yy})a(U_{t,\yy}) \nonumber \\
			&\; + a^*(U_{t,\xx})a^*(\cc{V_{t,\yy}})a(\cc{V_{t,\yy}})  + a^*(U_{t,\yy})a^*(\cc{V_{t,\yy}})a(\cc{V_{t,\xx}}) - a^*(\cc{V_{t,\yy}})a(\cc{V_{t,\xx}})a(\cc{V_{t,\yy}}) \label{eq:I3tilda} \\
			&\; -  a^*(U_{t,\xx})a(\cc{V_{t,\yy}})a(U_{t,\yy})  -  a^*(U_{t,\yy})a(\cc{V_{t,\xx}})a(U_{t,\yy})  -3 a(\cc{V_{t,\xx}})a(\cc{V_{t,\yy}})a(U_{t,\yy}) \Big)\mathds{1}(\cN \le N^{\eta}) - \hc\Bigg\}\de\xx\de\yy\,. \nonumber
		\end{align}
		We insert the first term on the r.h.s. of \eqref{eq:I2tilda} into \eqref{eq:N_time_derivative} and estimate it by
		\begin{align}  
			&\sum_{k=0}^{j-1} \frac{2}{N}\int \Big|v(\xx, \yy)\big\langle\xi_{\eta,t},
			\cal N^{k}	a^*(\Ux)a^*(\Uy)a^*(\Vy)a^*(\Vx)\mathds{1}(\cN \le N^{\eta})\cN^{j-k-1}\xi_{\eta,t}\big \rangle\Big| \de\xx\de\yy \nonumber \\
			&\hspace{3cm}=\sum_{k=0}^{j-1}\frac{2}{N}\int \Big|v(\xx, \yy)\big\langle \xi_{\eta,t},
			\cN^{k}(\cal N-4)^{j/2-k-1}	a^*(\Ux)a^*(\Uy) \nonumber \\
			&\hspace{5cm}\times a^*(\Vy)a^*(\Vx) \mathds{1}(\cN \le N^{\eta})\cN^{j/2}\xi_{\eta,t} \big \rangle\Big| \de\xx\de\yy \nonumber \\
			&\hspace{3cm} \le 2j \ \bigg(	\frac{1}{N}\int  |v(\xx, \yy)| \ \big\|a(\Ux)a(\Uy)(\cN-4)^{j/2-k-1}\cN^{k}\xi_{\eta,t}\big\|^2 \de\xx\de\yy\bigg)^{1/2} \nonumber \\
			&\hspace{5cm}\times	\bigg( \frac{1}{N}\int |v(\xx, \yy)| \ \big\|a^*(\Vy)a^*(\Vx)  \mathds{1}(\cN \le N^{\eta})\cN^{j/2}\xi_{\eta,t}\big\|^2 \de\xx\de\yy \bigg)^{1/2}\,. \label{5.45} 
		\end{align}
		The square of the first factor on the r.h.s. of \eqref{5.45} can be estimated by
		\begin{align}  
			&\frac{1}{N}\int |v(\xx, \yy)| \big\langle\xi_{\eta,t},\cN^k(\cN-4)^{j/2-k-1} a^*(\Uy)a^*(\Ux)a(\Ux)a(\Uy)(\cN-4)^{j/2-k-1}\cN^{k}\xi_{\eta,t}\big\rangle \de\xx\de\yy\ \nonumber \\
			&\hspace{3cm}\le \frac{\|v\|_\infty}{N}\int\big\langle\xi_{\eta,t}\cN^k(\cN-4)^{j/2-k-1} a^*(\Uy)\de\G(U_tU_t^*)a(\Uy)(\cN-4)^{j/2-k-1}\cN^{k}\xi_{\eta,t}\big\rangle \de\yy \nonumber \\
			&\hspace{3cm}\le \frac{\|v\|_\infty\|U_t\|^2}{N}\int\big\langle\xi_{\eta,t}\cN^k(\cN-4)^{j/2-k-1}a^*(\Uy)(\cN-3) a(\Uy)(\cN-4)^{j/2-k-1}\cN^{k}\xi_{\eta,t}\big\rangle \de\yy \nonumber \\
			&\hspace{3cm}= \frac{\|v\|_\infty\|U_t\|^2}{N}\int\big\langle\xi_{\eta,t}\cN^k(\cN-4)^{j/2-k-1/2}a^*(\Uy) a(\Uy)(\cN-4)^{j/2-k-1/2}\cN^k\xi_{\eta,t}\big\rangle \de\yy \nonumber \\
			&\hspace{3cm}\le \frac{\|v\|_\infty\|U_t\|^4}{N}\big\langle\xi_{\eta,t}\cN^{j}\xi_{\eta,t}\big\rangle\,. \label{5.46}
		\end{align}
		To treat the second term on r.h.s. of \eqref{5.45} we first note that
		\begin{align} 
			a(&\Vx)a(\Vy)a^*(\Vy)a^*(\Vx) \nonumber \\ 
			= & \; a^*(\Vx)a^*(\Vy)a(\Vy)a(\Vx) + a^*(\Vx)a(\Vx) \widetilde{\g}_t(\yy,\yy) + a^*(\Vy)a(\Vy) \widetilde{\g}_t(\xx,\xx)  \nonumber \\
			& \; + a^*(\Vx)a(\Vy) \widetilde{\g}_t(\yy,\xx) + a^*(\Vy)a(\Vx) \widetilde{\g}_t(\xx,\yy) + \widetilde{\g}_t(\xx,\xx) \widetilde{\g}_t(\yy,\yy) + \widetilde{\g}_t(\xx,\yy)\widetilde{\g}_t(\yy,\xx)\,, \label{eq:commutatorsVVVV}
		\end{align}
		with $\widetilde{\g}_t$ as defined in \eqref{eq:scalar_prod_UV}. We insert the first term on the r.h.s. of \eqref{eq:commutatorsVVVV} back into the second factor on the r.h.s. of \eqref{5.45} and find
		\begin{multline*}
			\frac{1}{N}\int |v(\xx, \yy)| \ \big\|a(\Vy)a(\Vx) \mathds{1}(\cN \le N^{\eta})\cN^{j/2}\xi_{\eta,t}\big\|^2 \de\xx\de\yy \\
			\lesssim \frac{\|v\|_{\infty}\|V_t\|^4}{N}\big\| \mathcal{N}^{j/2+1}\mathds{1}(\cN \le N^{\eta})\xi_{\eta,t}\big\|^2\leq \frac{\| v \|_{\infty}\|V_t\|^4}{N^{1-2\eta}}\big\langle \xi_{\eta,t}, \mathcal{N}^j \xi_{\eta,t} \big\rangle\,.
		\end{multline*}
		To estimate the contributions from the last two terms on the r.h.s.\,of \eqref{eq:commutatorsVVVV}, we note that, by Cauchy-Schwarz, $| \widetilde{\g}_t(\xx,\yy) |^2 \leq \widetilde{\g}_t(\xx,\xx) \widetilde{\g}_t(\yy,\yy)$. Hence we can bound
		\begin{align}
			&\bigg|\frac{1}{N}\int v(\xx, \yy) \left[ \widetilde{\g}_t(\xx,\xx)\widetilde{\g}_t(\yy,\yy)+\widetilde{\g}_t(\xx,\yy)\widetilde{\g}_t(\yy,\xx) \right] \de \xx \de \yy \bigg| \nonumber  \\
			&\hspace{3cm}\leq 4 \Vert v \Vert_1N^{-1} \sup_{x \in \mathbb{R}^3} |\g_t(x,x)| \int  \g_t(x,x) \de x \lesssim \Vert v \Vert_1 \exp(c \exp(c t)) T_{\mathrm{c}}^{3/2}(s)\,, \label{5.47c}
		\end{align} where we also used the bound 
		\begin{equation*}
			\sup_{\xx \in \{l,r\} \times \RR^3 } \int v(\xx,\yy)\de\yy \le \| v \|_1 \,,
		\end{equation*} which follows immediately from the definition \eqref{eq:v(xx)} of $v(\xx,\yy)$. To get to the second line we used the fact that for $\xx=(\s,x)$ we have 
		\[ \widetilde{\g}_t(\xx,\xx) = \begin{cases}
			\g_t(x,x), & \text{if }\sigma=\ell\,,\\
			\cc{\g_t(x,x)}, & \text{if }\sigma=r \,,
		\end{cases}\] which lets us control the diagonal of $\widetilde\g_t$ with that of $\g_t$. 
		The second estimate follows from Lemma~\ref{lem:diluteness_HFB} and $\tr \gamma_t \leq N$. Similar considerations show that the contributions from all remaining terms on the r.h.s. of \eqref{eq:commutatorsVVVV} can be bounded by a constant times
		\begin{equation*}
			N^{-1+\eta} \exp(c t \exp(c t))T_{\mathrm{c}}^{3/2}(s) \ \Vert v \Vert_1 \Vert V_t \Vert^2 \langle \xi_{\eta,t}, \mathcal{N}^j \xi_{\eta,t} \rangle\,.
			\label{5.47d}
		\end{equation*} 
		The above considerations show
		\begin{align} 
			&\frac{1}{N}\int |v(\xx, \yy)| \ \big\|a^*(\Vy)a^*(\Vx)\cal \mathds{1}(\cN \le N^{\eta})\cN^{j/2}\xi_{\eta,t}\big\|^2 \de\xx\de\yy \nonumber \\
			&\hspace{3cm}\lesssim  \left[\frac{\Vert v \Vert_{\infty} \Vert V_t \Vert^4 }{N^{1-2\eta}} + \exp(c t \exp(ct)) T_{\mathrm{c}}^{3/2}(s) \ \Vert v \Vert_1 \left( 1 + N^{-1+\eta} \Vert V_t \Vert^2 \right) \right]\langle \xi_{\eta,t}, \mathcal{N}^j \xi_{\eta,t} \rangle\,, \label{5.47e}
		\end{align}
		which combined with \eqref{5.46} implies
		\begin{align} 
			\eqref{5.45} \lesssim & \left( \Vert v \Vert_1 + \Vert v \Vert_{\infty} \right) \langle \xi_{\eta,t}, \mathcal{N}^j \xi_{\eta,t} \rangle \ \Bigg[ \frac{\|U_t\|^2 \ \|V_t\|^2}{N^{1-\eta}}  
			+\frac{\exp(c \exp(c t)) T_{\mathrm{c}}^{3/4}(s) \ \|U_t\|^2}{N^{1/2}} \left( 1 + N^{-1/2 + \eta/2} \Vert V_t \Vert \right) \Bigg] \nonumber \\
			\lesssim &   \exp(c \exp(c t)) \langle \xi_{\eta,t}, \mathcal{N}^j \xi_{\eta,t} \rangle  \,. \label{guoan}
		\end{align}
		In the second step, we used the definition \eqref{eq:idealgascrittemp} of $T_\mathrm{c}(s)$ and Lemma \ref{lem:closeness_dynamics_V} to see that 
		\[ N^{-1/2}T_\mathrm{c}^{3/4}(s)\| U_t \|^2 \lesssim \exp(ct) N^{-1/2}T_{\mathrm c}^{7/4}(s)\lesssim \exp(ct) N^{-1/2} N^{\frac 7 4 \frac{2s}{6+3s}} \lesssim \exp(ct)
		\] for $0<s\le 3/2$ (this determines the restriction on the range of $s$ that we can treat), and the condition $\eta\le 1/7$, again with Lemma \ref{lem:closeness_dynamics_V}, to show 
		\[
		N^{-1+\eta} \| U_t \|^2 \| V_t \|^2\,, N^{-1/2+\eta/2} \| V_t \| \lesssim \exp(ct) \,.
		\] Next, we consider the second term on the r.h.s. of \eqref{eq:I2tilda}. An application of Cauchy-Schwarz yields
		\begin{multline}   \label{biubiubiu}
			\sum_{k=0}^{j-1}	\frac{1}{N}\int  \Big|v(\xx, \yy)\ \big\langle\xi_{\eta,t},
			\cal N^{k}	a^*(\Ux)a^*(\Uy)a^*(\Vy)a(\Ux)\mathds{1}(\cN \le N^{\eta})\cN^{j-k-1}\xi_{\eta,t}\big \rangle\Big| \de\xx\de\yy \\
			\leq j \ \Bigg(	\frac{1}{N}\int  |v(\xx, \yy)| \ \big\|a(\Ux)a(\Uy)(\cN-4)^{j/2-k-1}\cN^k\xi_{\eta,t}\big\|^2\de\xx\de\yy\Bigg) \\
			\times	\Bigg(\frac{1}{N}\int |v(\xx, \yy)| \ \big\|a^*(\Vy)a(\Ux) \mathds{1}(\cN \le N^{\eta})\cN^{j/2}\xi_{\eta,t}\big\|^2 \de\xx\de\yy \Bigg)^{1/2}\,.
		\end{multline}
		The first factor on the r.h.s.\,can be estimated similarly as the first term on the r.h.s.\,of \eqref{5.45}. The second factor is estimated by
		\begin{align*}
			&\frac{1}{N}\int  |v(\xx, \yy) | \ \big\langle\xi_{\eta,t},\cN^{j/2}\mathds{1}(\cN \le N^{\eta})a^*(\Ux)a(\Vy)a^*(\Vy)a(\Ux)\mathds{1}(\cN \le N^{\eta})\cN^{j/2}\xi_{\eta,t}\big\rangle \de\xx\de\yy \\
			&\hspace{1cm}= \frac{1}{N}\int |v(\xx, \yy)| \ \big\langle\xi_{\eta,t},\cN^{j/2}\mathds{1}(\cN \le N^{\eta})a^*(\Ux)a^*(\Vy)a(\Vy)a(\Ux)\mathds{1}(\cN \le N^{\eta})\cN^{j/2}\xi_{\eta,t}\big\rangle \de\xx\de\yy \\
			&\hspace{1.5cm}+ \frac{1}{N}\int  |v(\xx, \yy)| \ \widetilde{\g}_t(\yy,\yy) \ \big\langle\xi_{\eta,t},\cN^{j/2}\mathds{1}(\cN \le N^{\eta})a^*(\Ux)a(\Ux)\mathds{1}(\cN \le N^{\eta})\cN^{j/2}\xi_{\eta,t}\big\rangle \de\xx\de\yy \\
			&\hspace{1cm}\lesssim \left(\frac{\Vert v \Vert_{\infty} \Vert U_t \Vert^2 \Vert V_t \Vert^2 }{ N^{1-2\eta}}+ \Vert v \Vert_1 \exp(c\exp(ct)) T_{\mathrm{c}}^{3/2}(s) \ \frac{ \Vert U_t \Vert^2}{N^{1-\eta}  } \right)\langle \xi_{\eta,t}, \mathcal{N}^j \xi_{\eta,t}\rangle\,. 
		\end{align*}
		Using Lemma \ref{lem:closeness_dynamics_V} and the assumptions on $s$ and $\eta$ as above we conclude
		\begin{equation} 
			\eqref{biubiubiu} 
			\lesssim \exp(c \exp(c t)) \langle \xi_{\eta,t}, \cal N^j \xi_{\eta,t}\rangle \,. \label{5.47g}  
		\end{equation} The contribution of all remaining terms on the r.h.s. of \eqref{eq:I2tilda} can be estimated in the same way.
		
		Let us now consider the terms on the r.h.s. of \eqref{eq:I3tilda}. The contribution arising from the first of these terms can be bounded, with Cauchy-Schwarz, by
		\begin{align}
			\sum_{k=0}^{j-1} \frac{2}{N} \int &\Big|\pphi(\xx)v(\xx, \yy) \big\langle\xi_{\eta,t},
			\cal N^{k}	a^*(\Ux)a^*(\Uy)a^*(\Vy)\mathds{1}(\cN \le N^{\eta})\cN^{j-k-1}\xi_{\eta,t}\big \rangle\Big| \de\xx\de\yy \nonumber \\
			&\leq j \ \bigg( \frac{1}{N}\int |v(\xx, \yy)| \ \big\|a(\Ux)a(\Uy)(\cN-4)^{j/2-k-1}\cN^k\xi_{\eta,t}\big\|^2 \de\xx\de\yy \bigg ) \nonumber \\
			&\hspace{3cm}\times	\bigg( \frac{1}{N}\int |v(\xx, \yy)| \ |\pphi(\xx)|^2 \big\|a^*(\Vy)\mathds{1}(\cN \le N^{\eta})\cN^{j/2}\xi_{\eta,t}\big\|^2 \de\xx\de\yy \bigg)^{1/2}\,. \label{eq:term_phi_3*}
		\end{align}
		The first factor can again be estimated as the first factor on the r.h.s. of \eqref{5.45}. We estimate the other one by 
		\begin{align*}
			&\frac{1}{N}\int |v(\xx, \yy)| \ |\pphi(\xx)|^2 \big\langle\xi_{\eta,t},\cN^{j/2}\mathds{1}(\cN \le N^{\eta})a(\Vy)a^*(\Vy)\mathds{1}(\cN \le N^{\eta})\cN^{j/2}\xi_{\eta,t}\big\rangle \de\xx\de\yy \\
			&\hspace{3cm}=\frac{1}{N}\int |v(\xx, \yy)| \ |\pphi(\xx)|^2 \big\langle\xi_{\eta,t},\cN^{j/2}\mathds{1}(\cN \le N^{\eta})a^*(\Vy)a(\Vy)\mathds{1}(\cN \le N^{\eta})\cN^{j/2}\xi_{\eta,t}\big\rangle \de\xx\de\yy \\
			&\hspace{3.5cm}+ \frac{1}{N}\int |v(\xx, \yy)| \ |\pphi(\xx)|^2 \ \widetilde{\g}_t(\yy,\yy) \ \big\langle\xi_{\eta,t},\cN^{j}\mathds{1}(\cN \le N^{\eta})\xi_{\eta,t}\big\rangle \de\xx\de\yy \\
			&\hspace{3cm}\lesssim \left[ \|v\|_\infty N^{\eta} \|V_t\|^2 + \|v\|_1 \exp(c\exp(ct))T_{\mathrm{c}}^{3/2}(s) \right] \langle \xi_{\eta,t},\cN^j\xi_{\eta,t}\rangle\,,
		\end{align*} 
		where we used Lemma~\ref{lem:diluteness_HFB} and $\|\phi_t\|_2^2 \leq N$. We conclude with the help of Lemma \ref{lem:closeness_dynamics_V} and $\eta\le 1/7$ that 
		\begin{equation*}
			\eqref{eq:term_phi_3*} \lesssim \exp(c \exp(c t))  \langle \xi_{\eta,t}, \mathcal{N}^j \xi_{\eta,t} \rangle \,.
		\end{equation*} 
		The contribution of the other terms on the r.h.s. of \eqref{eq:I3tilda} can be bounded similarly. 
		
		Collecting all bounds, we find
		\begin{align*}
			\left|\dt \langle\xi_{\eta,t},\cN^j\xi_{\eta,t}\rangle \right| \lesssim \exp(c\exp(ct)) \langle\xi_{\eta,t},\cN^j\xi_{\eta,t}\rangle \,,
		\end{align*}
		and we conclude by an application of Gr\"onwall's lemma.
	\end{proof}
	\subsection{Weak bounds on the original dynamics}
	\label{sec:weakbounds}
	In order to prove that $\cU_\eta^{\textrm{fluct}}$ remains close to the full fluctuation dynamics, we need some rough a-priori bounds on the growth of the number of particles operator when conjugated with  $\cU^{\textrm{fluct}}$. 
	
	\begin{lemma} \label{lem5.1}
		Let $v$ be an interaction potential satisfying the same assumptions as in Theorem \ref{thm:main_2}, and let $\phi\in H^3(\RR^3)$, $\gamma\in\cH^{3,1}$ with $n(\phi,\gamma)=N$. For $j \in \mathbb N$ and $t > s > 0$, we have
		\begin{align} 
				\cal U^{\mathrm{fluct}}(t,s)^* \cal N^{2j}\cal U^{\mathrm{fluct}}(t,s)  \lesssim & \big(\|U_t\|^2+\|V_t\|^2\big)^{2j}\big(\|U_s\|^2+\|V_s\|^2\big)^{2j} \ \cal N^{2j} \nonumber \\
				&+ \left(1+ \big(\|U_t\|^2+\|V_t\|^2\big)^{2j} \right)N^{2j}\,.  \label{eq:weak_bounds_3}
		\end{align}
	\end{lemma}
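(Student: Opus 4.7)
The strategy is to avoid a Gr\"onwall argument on the generator $\mathcal{G}_{N,t}$ of Proposition~\ref{prop:Generator} (whose quartic pieces carry coefficients involving $\|U_t\|$ and $\|V_t\|$ that blow up with $N$), and to instead exploit the factorisation
\[ \mathcal{U}^{\mathrm{fluct}}(t,s) = \mathcal{T}_t^*\,\mathcal{W}^*(\phi_t)\, e^{-\mathrm{i}\mathcal{L}_N(t-s)}\, \mathcal{W}(\phi_s)\, \mathcal{T}_s, \]
together with the crucial observation that $\mathcal{L}_N = \mathcal{H}_{N,\ell} - \mathcal{H}_{N,r}$ conserves the left and right particle numbers \emph{separately}, so that $[e^{\mathrm{i}\mathcal{L}_N\tau},\mathcal{N}]=0$ for every $\tau\in\bR$. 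Consequently, all the growth of $\mathcal{N}^{2j}$ under the fluctuation propagator originates from the endpoint Weyl and Bogoliubov conjugations and is independent of $t-s$; the bound \eqref{eq:weak_bounds_3} should then follow by successively undressing the transformations at the two ends.

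The plan is to establish two elementary building blocks. The first is the Weyl bound $\mathcal{W}^*(\phi)\mathcal{N}^{2j}\mathcal{W}(\phi)\lesssim \mathcal{N}^{2j} + \|\phi\|^{4j}$, a direct consequence of applying the shift \eqref{eq:Weyl_shift_compact} inside $\mathcal{N}=\int a^*_\xx a_\xx\,\de\xx$, followed by the elementary inequality $(A+B)^{2j}\lesssim A^{2j}+B^{2j}$ and the canonical commutation relations. The second is a Bogoliubov bound of the form
\[ \mathcal{T}^* \mathcal{N}^{2j} \mathcal{T},\ \mathcal{T}\,\mathcal{N}^{2j}\, \mathcal{T}^* \ \lesssim\ (\|U\|^2+\|V\|^2)^{2j}\bigl(\mathcal{N}^{2j} + \|V\|_{\mathcal{L}^2}^{4j}\bigr), \]
valid for any Bogoliubov transformation $\mathcal{T}$ implementing a symplectomorphism with blocks $(U,V)$ and $V\in\mathcal{L}^2$. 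For $j=1$ this drops out of \eqref{eq:Tt_action}: after normal ordering, the constant term equals $\int |V(\xx,\yy)|^2\,\de\xx\de\yy = \|V\|_{\mathcal{L}^2}^2$, the diagonal terms $a^*(U_\cdot)a(U_\cdot)$ and $a^*(\overline{V_\cdot})a(\overline{V_\cdot})$ are controlled by $\|U\|^2\mathcal{N}$ and $\|V\|^2\mathcal{N}$ respectively, and the cross terms are handled by Cauchy--Schwarz. The general case $j\ge 1$ follows by induction on $j$, the induction step requiring one to absorb the commutators $[a_\xx,\mathcal{N}^k]$ and $[a^*_\xx,\mathcal{N}^k]$ into the same structure.

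For the specific transformations at hand, the purification identity \eqref{eq:scalar_prod_UV} together with conservation of $n(\phi_t,\gamma_t)=N$ along the HFB flow (Theorem~\ref{thm:well-posedness-HFB}) yields $\|V_t\|_{\mathcal{L}^2}^2 \lesssim N$ and $\|\phi_t\|^2\le N$. Plugging the two building blocks into
\[ \mathcal{U}^{\mathrm{fluct}}(t,s)^*\mathcal{N}^{2j}\mathcal{U}^{\mathrm{fluct}}(t,s) = \mathcal{T}_s^*\mathcal{W}^*(\phi_s)\, e^{\mathrm{i}\mathcal{L}_N(t-s)}\bigl[\mathcal{W}(\phi_t)\mathcal{T}_t\,\mathcal{N}^{2j}\,\mathcal{T}_t^*\mathcal{W}^*(\phi_t)\bigr] e^{-\mathrm{i}\mathcal{L}_N(t-s)}\mathcal{W}(\phi_s)\mathcal{T}_s \]
from the inside outwards---handling $\mathcal{T}_t$, then $\mathcal{W}^*(\phi_t)$, then passing $e^{\pm \mathrm{i}\mathcal{L}_N(t-s)}$ freely through the resulting polynomial expression in $\mathcal{N}$ and $N$, then $\mathcal{W}^*(\phi_s)$, and finally $\mathcal{T}_s^*$---produces \eqref{eq:weak_bounds_3}; the slight asymmetry between the coefficient of $\mathcal{N}^{2j}$ and the coefficient of $N^{2j}$ in the statement is obtained by using $\|U_s\|^2+\|V_s\|^2\ge 1$ to absorb one factor when grouping terms at the end. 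The only step that actually requires computation is the inductive proof of the Bogoliubov bound; the remainder of the argument is an orderly substitution exploiting the number-preserving character of $\mathcal{L}_N$.
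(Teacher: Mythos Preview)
Your proposal follows essentially the same approach as the paper: factorise $\mathcal{U}^{\mathrm{fluct}}$, use $[\mathcal{L}_N,\mathcal{N}]=0$ to pass the free Liouvillian evolution through $\mathcal{N}^{2j}$, and prove separate Weyl and Bogoliubov bounds (the latter by induction on $j$) which are then chained together from the inside outwards. The only minor discrepancy is in the constant term of your Bogoliubov bound: as stated, $(\|U\|^2+\|V\|^2)^{2j}\bigl(\mathcal{N}^{2j}+\|V\|_{\mathcal{L}^2}^{4j}\bigr)$ carries an unnecessary factor $(\|U\|^2+\|V\|^2)^{2j}$ in front of the constant, which after chaining would yield $C_t^{2j}C_s^{2j}N^{2j}$ rather than the claimed $(1+C_t^{2j})N^{2j}$; the paper instead proves the sharper form $\mathcal{T}_t^*\mathcal{N}^{2j}\mathcal{T}_t\lesssim C_t^{2j}\mathcal{N}^{2j}+N^{2j}$ by using $\|V_t\|^2\le\|V_t\|_{\mathcal{L}^2}^2\le 2N$ together with the symplectic relation $\|U_t\|^2=1+\|V_t\|^2$ to absorb stray factors of $\|U_t\|^2,\|V_t\|^2$ into $N$, and it is this sharpening---not the inequality $C_s\ge1$ you invoke---that is responsible for the asymmetry in \eqref{eq:weak_bounds_3}.
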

	
	\begin{proof}
		We claim that the bounds
		\begin{align}
			\mathcal{W}_t^* \cal N^{2j} \mathcal{W}_t \lesssim & \cal N^{2j} + N^{2j}\,,  &	\mathcal{W}_t \cal N^{2j} \mathcal{W}_t^* \lesssim & \cal N^{2j} + N^{2j} \label{eq:weak_bounds_Weyl_2} \\
			\mathcal{T}_t^* \cal N^{2j} \mathcal{T}_t \lesssim & \big(\|U_t\|^2+\|V_t\|^2\big)^{2j} \ \cal N^{2j} + N^{2j} \,, & \mathcal{T}_t \cal N^{2j} \mathcal{T}_t^* \lesssim & \big(\|U_t\|^2+\|V_t\|^2\big)^{2j} \ \cal N^{2j} + N^{2j} \label{eq:weak_bounds_2} 
		\end{align} hold for every $j\in\NN$ and $t>0$. Assuming these bounds for the moment, \eqref{eq:weak_bounds_3} is easily proved by a repeated application of \eqref{eq:weak_bounds_Weyl_2}, \eqref{eq:weak_bounds_2}. Indeed, by the definition of the fluctuation dynamics  we have 
		\[
		\begin{split}
			\cal U^{\mathrm{fluct}}(t,s)^* \cal N^{2j}\cal U^{\mathrm{fluct}}(t,s) = & \cT_s^*\cW_s^* e^{it\cL_N(t-s)}\cW_t\cT_t \cN^{2j} \cT_t^*\cW_t^* e^{-it\cL_N(t-s)}\cW_s\cT_s \\
			\lesssim & \big(\|U_t\|^2+\|V_t\|^2\big)^{2j} \ \cT_s^*\cW_s^* e^{it\cL_N(t-s)}\cW_t\cal N^{2j}\cW_t^* e^{-it\cL_N(t-s)}\cW_s\cT_s + N^{2j}\\
			\lesssim &  \big(\|U_t\|^2+\|V_t\|^2\big)^{2j} \ \cT_s^*\cal N^{2j}\cT_s + \big(1+\|U_t\|^2+\|V_t\|^2\big)^{2j}N^{2j}\\
			\lesssim &  \big(\|U_t\|^2+\|V_t\|^2\big)^{2j}\big(\|U_s\|^2+\|V_s\|^2\big)^{2j} \cal N^{2j} + (1+ \big(\|U_t\|^2+\|V_t\|^2\big)^{2j} )N^{2j}\,,
		\end{split}
		\] where in the second estimate we also used the fact that the Liouvillian $\cL_N$ defined in \eqref{eq:Liouvillian} commutes with the number of particles operator. Thus we just need to prove \eqref{eq:weak_bounds_Weyl_2}, \eqref{eq:weak_bounds_2}. The bounds in \eqref{eq:weak_bounds_Weyl_2} follow by a straightforward adaptation of the arguments in \cite[Lemma 3.6]{RodSchlein2009}. 
		
		The proof of \eqref{eq:weak_bounds_2} is also inspired by the proof of the same lemma, but since the adaptation is less straightforward in this case, we provide the details. We start with the first bound in \eqref{eq:weak_bounds_2}, and we proceed by induction on $j \in \mathbb{N}$. For the base case, we use \eqref{eq:action_time_dep_Bog} to compute $\mathcal{T}_t^* \mathcal{N} \mathcal{T}_t$, bring all terms to normal order, and find
		\begin{equation}
			\mathcal{T}_t^* \cal N \mathcal{T}_t = \int \left[ a^*(U_{t,\xx}) a(U_{t,\xx}) + a^*(\Vx) a(\Vx) + a^*(\Ux) a^*(\Vx) + a(\Vx) a(\Ux) + ( \Vx, \Vx ) \right] \de \xx + 5 \,,
			\label{eq:weakbounds1}
		\end{equation}
		where $(\cdot,\cdot)$ denotes the inner product defined in \eqref{eq:innerproduct}. To obtain a bound for $\cT_t^*\cN^2\cT_t$, we square \eqref{eq:weakbounds1} and apply Cauchy-Schwartz. This yields 
		\begin{align}
			\cT_t^*\cN^2\cT_t\lesssim&1+\int \Big[ a^*(U_{t,\xx}) a(U_{t,\xx}) a^*(U_{t,\yy}) a(U_{t,\yy}) +  a^*(\Vx) a(\Vx) a^*(\Vy) a(\Vy) + ( \Vx, \Vx ) ( \Vy, \Vy ) \nonumber \\
			&\hspace{2cm}+ a^*(U_{t,\xx}) a^*(\Vx) a(\Vy) a(U_{t,\yy}) + a(\Vy) a(U_{t,\yy}) a^*(U_{t,\xx}) a^*(\Vx) \Big] \de \xx \de \yy \,. \label{eq:weakbounds2}
		\end{align}
		The first two terms in the integral are bounded by $(\Vert U_t \Vert^4 + \Vert V_t \Vert^4 ) \mathcal{N}^2$, and we use
		\begin{equation}
			\int (\Vx, \Vx) \de \xx = \int_{\mathbb{R}^3} \left[ \widetilde{\gamma}_t((x,\ell),(x,\ell)) + \widetilde{\gamma}_t((x,r),(x,r)) \right] \de x = 2 \int_{\mathbb{R}^3} \gamma_t(x,x) \de x \leq 2 N
			\label{eq:weakbounds3}
		\end{equation} 
		to see that the third term in the integral is bounded by $4 N^2$. To bound the first term in the second line of \eqref{eq:weakbounds2} we write it as
		\begin{equation*}
			\int a^*(U_{t,\xx}) a(\Vy) a^*(\Vx) a(U_{t,\yy}) \de \xx \de \yy + \int a^*(U_{t,\xx}) a(U_{t,\yy}) (\Vy,\Vx) \de \xx \de \yy \,.
		\end{equation*}
		Using Cauchy-Schwartz and \eqref{eq:weakbounds3}, we check that the first term is bounded by $\Vert U_t \Vert^2 \Vert V_t \Vert^2 \mathcal{N}^2 + \Vert U_t \Vert^2 N$. The second term equals the second quantization of the operator $U_t V_t^* V_t U_t^*$. Accordingly, it is bounded by $\Vert U_t \Vert^2 \Vert V_t \Vert^2 \mathcal{N}$. To obtain a bound for the last term on the r.h.s. of \eqref{eq:weakbounds2}, we write it as
		\begin{equation}
			\int \Big[ a(\Vx) a^*(U_{t,\yy}) a(U_{t,\yy}) a^*(\Vx) + a^*(\Vx) a(\Vy) (U_{t,\yy},U_{t,\xx}) + (\Vy,\Vx) \ (U_{t,\yy},U_{t,\xx}) \Big] \de \xx \de \yy \,.
			\label{eq:weakbounds4}
		\end{equation}
		An application of Cauchy-Schwartz shows that the first term is bounded by $2 \Vert U_t \Vert^2 \Vert V_t \Vert^2 \mathcal{N}^2 + \Vert V_t \Vert^2 N$. The second operator is the second quantization of $V_t \cc{U_t^*U}_t \overline{V_t^*}$, and is therefore also bounded by $\Vert U_t \Vert^2 \Vert V_t \Vert^2 \mathcal{N}$. A computation that uses \eqref{eq:relations_bog} and \eqref{eq:scalar_prod_UV} shows that the last term in \eqref{eq:weakbounds3} can be written as 
		\begin{equation*}
			\tr[ V_t^* V_t U_t^* U_t ] = \tr[ \widetilde{\gamma}_t (1+\widetilde{\gamma}_t) ] \leq 2 N + \left( \tr \widetilde{\g}_t \right)^2 \leq 2 N + 4 N^2. 
		\end{equation*}
		To obtain the second bound we used $\Vert \cdot \Vert_{\mathcal{L}^2} \leq \Vert \cdot \Vert_{\mathcal{L}^1}$. Collecting the above estimates we find
		\begin{equation}
			\mathcal{T}_t^* \mathcal{N}^2 \mathcal{T}_t \lesssim \left(\Vert U_t \Vert^2 + \Vert V_t \Vert^2 \right)^2 \mathcal{N}^2 + N^2\,.
			\label{eq:weakbounds5} 
		\end{equation}
		It remains to consider the induction step. Let us denote $\mathrm{ad}_A(B) = [B,A]$ and $X_t = \mathcal{T}_t^* \mathcal{N} \mathcal{T}_t$. From \eqref{eq:weakbounds1} we have  
		\begin{equation*}
			\mathrm{ad}^n_{\mathcal{N}}(X_t) = (-2)^n \int a^*(U_{t,\xx}) a^*(\Vx) \de \xx + 2^n \int a(\Vx) a(U_{t,\xx}) \de \xx \,.
		\end{equation*} Therefore, for any $n\in\NN$ we can bound
		\begin{equation}
			| \mathrm{ad}^n_{\mathcal{N}}(X_t) |^2 \lesssim \left(\Vert U_t \Vert^4 + \Vert U_t \Vert^4 \right) \mathcal{N}^2 + N^2\,.
			\label{eq:weakbounds6}
		\end{equation} 		
		From the induction assumption, we find 
		\begin{align}
			\mathcal{T}_t^* \mathcal{N}^{2j+2} \mathcal{T}_t = X_t \mathcal{T}_t^* \mathcal{N}^{2j} \mathcal{T}_t X_t \lesssim& X_t \left( C_t^{2j} \ \cal N^{2j} + N^{2j} \right) X_t \lesssim C_t^{2j} X_t \mathcal{N}^{2j} X_t + C_t^2 \mathcal{N}^2 N^{2j} + N^{2j+2} \nonumber \\
			&\lesssim C_t^{2j} X_t \mathcal{N}^{2j} X_t + C_t^{2j+2} \mathcal{N}^{2j+2} + N^{2j+2} \,,
			\label{eq:weakbounds7}
		\end{align} where we defined $C_t \deq\|U_t\|^2+\|V_t\|^2$.
		With Cauchy-Schwartz, \eqref{eq:weakbounds5}, \eqref{eq:weakbounds6} and the commutator expansion
		\begin{equation*}
			[A^n , B ] = \sum_{j=0}^{n-1} \binom{n}{m} A^m \mathrm{ad}_A^{n-m}(B)\,,
		\end{equation*}
		we obtain
		\begin{align}
			X_t \mathcal{N}^{2j} X_t &\leq 2 \mathcal{N}^j X_t^2 \mathcal{N}^j + 2 \left| [\mathcal{N}^j,X_t] \right|^2
			\leq 2 \mathcal{N}^j X_t^2 \mathcal{N}^j + 4^{j+1} \sum_{n=0}^{j-1} \mathcal{N}^n \left| \mathrm{ad}^{j-n}_{\mathcal{N}}(X_t) \right|^2 \mathcal{N}^n \nonumber \\
			&\lesssim C_t^2 \mathcal{N}^{2j+2} + N^2 \mathcal{N}^{2j}  \,. \label{eq:weakbounds8}
		\end{align}
		Inserting \eqref{eq:weakbounds8} into \eqref{eq:weakbounds7} we get 
		\[ \cT_t^* \cN^{2j+2} \cT_t \leq C_t^{2j+2} \cN^{2j+2} + \cN^{2j+2}\,, \]
		which proves the induction step.
		
		Finally, the second bound in \eqref{eq:weak_bounds_2} follows from the first bound and the observation that $\cT_t^*$ is also a Bogoliubov tranformation, implementing the symplectomorphism 
		\[
		T_t^{-1} = \cS T_t^*\cS = \begin{pmatrix}
			U_t^* & -V_t^* \\
			-\cc{V_t^*} & \cc{U_t^*}
		\end{pmatrix}\,.
		\]
		This concludes the proof of the lemma. 
	\end{proof}
	\subsection{Comparison of the two fluctuation dynamics}
	\label{sec:comparisonfluctuationdynamics}
	In this section we compare the fluctuation dynamics $\cal U^{\mathrm{fluct}}$ and $\cal U^{\mathrm{fluct}}_{\eta}$. The goal is to prove the following lemma. In the whole section we use the shorthand notations $C_t \deq\|U_t\|^2+\|V_t\|^2$ and $T_{\mathrm c}\deq T_{\mathrm c}(s)$. 
	\begin{lemma} \label{lem5.3}
	Let the assumptions of Lemma \ref{lem5.1} be satisfied, and let $k\in\NN$, $\eta\in(0,1/7]$. There exists a constant $c>0$ independent of $t, N$ such that
	\begin{align} \nonumber
		\big|	\big\langle \xt, & \cal N\big(\xt-\xat\big) \big\rangle \big| + \big|	\big\langle \xat, \cal N\big(\xt-\xat\big) \big\rangle \big| \lesssim N^{-1/2 - \eta k/2} \exp(c \exp(c\exp(ct))) \\ \label{5.69}
		&\qquad\times \left( T_{\mathrm{c}}^{7} \langle \xi, \mathcal{N}^{4} \xi \rangle^{1/2} + T_{\mathrm{c}}^{5} N^{2} \right) \left( T_{\mathrm{c}} N^{-1/2} + T_{\mathrm{c}}^{1/2} N^{-\eta/2} + T_{\mathrm{c}}^{3/4} N^{-\eta} \right)  \big\langle \x, \cN^{k+2}\x \big\rangle^{1/2}
	\end{align}
	where we recall that $\xi_t = \cU^{\mathrm{fluct}} (t;0) \xi$ and $\xi_{\eta,t} = \cU^{\mathrm{fluct}} _{\eta} (t;0) \xi$.
\end{lemma}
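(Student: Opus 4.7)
The plan is to control the difference via a Duhamel identity and then estimate each piece of the generator difference by Cauchy--Schwarz, extracting smallness from the particle-number cut-off.

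\textbf{Step 1 (Duhamel).} Since both $\cU^{\mathrm{fluct}}(t;s)$ and $\cU^{\mathrm{fluct}}_{\eta}(t;s)$ solve linear Schr\"odinger-type equations on $\mathscr F(\fh\oplus\fh)$ with generators $\cG_{N,t}$ and $\cG^{(\eta)}_{N,t}$, we have
\[ \xi_t-\xi_{\eta,t}=-\mathrm i\int_0^t \cU^{\mathrm{fluct}}(t;s)\bigl(\cG_{N,s}-\cG^{(\eta)}_{N,s}\bigr)\xi_{\eta,s}\,\de s \, .\]
By the definition of $\cG^{(\eta)}_{N,s}$, the difference $\cG_{N,s}-\cG^{(\eta)}_{N,s}$ is the sum of the quartic terms in $\cI^{(1)}_{N,s},\cI^{(2)}_{N,s}$ and of the cubic terms in $\cI^{(3)}_{N,s}$, each multiplied (on the appropriate side) by the spectral projection $\mathds 1(\cN>N^{\eta})$. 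Taking the inner product with $\cN\xi_t$ and using the unitarity of $\cU^{\mathrm{fluct}}$, I write
\[ \langle\xi_t,\cN(\xi_t-\xi_{\eta,t})\rangle=-\mathrm i\int_0^t \bigl\langle\cU^{\mathrm{fluct}}(s;t)\cN\xi_t,\ \bigl(\cG_{N,s}-\cG^{(\eta)}_{N,s}\bigr)\xi_{\eta,s}\bigr\rangle\,\de s\, ,\]
and similarly for the second matrix element, with $\xi_t$ replaced by $\xi_{\eta,t}$ inside the propagated vector.

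\textbf{Step 2 (Matrix element bounds).} For each quartic or cubic contribution to $\cG_{N,s}-\cG^{(\eta)}_{N,s}$ (e.g., a term like $\frac1{2N}\int v(\xx,\yy)\,a^*(U_{s,\xx})a^*(U_{s,\yy})a(U_{s,\yy})a(U_{s,\xx})\mathds 1(\cN>N^{\eta})\,\de\xx\de\yy$), I apply Cauchy--Schwarz splitting creation operators onto the left vector and annihilation operators onto the right one. This produces products of the form
\[ \tfrac1N\Bigl(\int|v(\xx,\yy)|\,\|a(A_{s,\xx})a(B_{s,\yy})\eta_s\|^2\,\de\xx\de\yy\Bigr)^{\!1/2}\,\Bigl(\int|v(\xx,\yy)|\,\|a(C_{s,\xx})a(D_{s,\yy})\mathds 1(\cN>N^{\eta})\xi_{\eta,s}\|^2\,\de\xx\de\yy\Bigr)^{\!1/2}\, ,\]
with $A,B,C,D\in\{U_s,\overline{V_s}\}$ (and analogous expressions with three or one factors for the cubic pieces, with $\pphi_s$ producing the factor $\sqrt N$). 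Each of these is bounded, using the commutation relations exactly as in the proofs of Propositions \ref{mainprop} and \ref{lem:gronwall_N} (Section \ref{sec4.1}), by a product of powers of $\|U_s\|,\|V_s\|\lesssim\sqrt{T_{\mathrm c}}\,e^{cs}$ (Lemma \ref{lem:closeness_dynamics_V}), of $\sup_x|\g_s(x,x)|\lesssim T_{\mathrm c}^{3/2}e^{c\exp(cs)}$ (Proposition \ref{lem:diluteness_HFB}), of $\|v\|_1$ and $\|v\|_\infty$, times suitable moments of $\cN$ on $\eta_s$ and on $\xi_{\eta,s}$.

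\textbf{Step 3 (Extracting smallness from the cutoff).} On the cut-off factor we use the elementary inequality
\[ \|\mathds 1(\cN>N^{\eta})\psi\|\le N^{-\eta k/2}\|\cN^{k/2}\psi\|,\qquad k\in\NN\,, \]
so that, after commuting $\mathds 1(\cN>N^{\eta})$ through a bounded number of $a^\#$'s (which costs at most constants on $\cN$), one factor on the right-hand side of each estimate of Step 2 can be bounded by $N^{-\eta k/2}\langle\xi_{\eta,s},\cN^{k+2}\xi_{\eta,s}\rangle^{1/2}$ (the $+2$ absorbs the two annihilation operators brought to the right). Proposition \ref{lem:gronwall_N} then gives
\[ \langle\xi_{\eta,s},\cN^{k+2}\xi_{\eta,s}\rangle\lesssim\exp(c\exp(c\exp(cs)))\,\langle\xi,\cN^{k+2}\xi\rangle\, .\]

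\textbf{Step 4 (Rough propagation of $\cN$).} For the other factor, involving $\eta_s=\cU^{\mathrm{fluct}}(s;t)\cN\xi_t$ (or the analogue for the second bound), I bound the relevant moment $\langle\eta_s,\cN^j\eta_s\rangle$ by Lemma \ref{lem5.1}: since $\eta_s=\cU^{\mathrm{fluct}}(s;t)\cN\cU^{\mathrm{fluct}}(t;0)\xi$, two applications of \eqref{eq:weak_bounds_3} give a bound of the schematic form $C_t^{2j}C_s^{2j}\langle\xi,\cN^{2j+2}\xi\rangle+C_t^{2j}N^{2j+2}\langle\xi,\cN^2\xi\rangle$, which with Lemma \ref{lem:closeness_dynamics_V} produces the factors $T_{\mathrm c}^{7}\langle\xi,\cN^4\xi\rangle^{1/2}+T_{\mathrm c}^{5}N^{2}$ appearing in the statement (the exponents match the largest number of $U_s$'s or $V_s$'s produced by any of the terms in Step~2).

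\textbf{Step 5 (Assembly).} Collecting the contributions: the quartic terms produce the factor $N^{-1}\|U_s\|^2\|V_s\|^2\lesssim T_{\mathrm c}N^{-1}\exp(cs)$ (giving the $T_{\mathrm c} N^{-1/2}$ piece after the Cauchy--Schwarz square-rooting against $\sqrt N$ from other factors); the cubic terms carry a factor $N^{-1}\|\pphi_s\|\|U_s\|\sim T_{\mathrm c}^{1/2}N^{-1/2}\exp(cs)$, combined with $N^{-\eta/2}$ from Chebyshev this yields the $T_{\mathrm c}^{1/2}N^{-\eta/2}$ term; finally, mixed terms in which $\mathds 1(\cN>N^{\eta})$ is combined with two cut-off extractions contribute $T_{\mathrm c}^{3/4}N^{-\eta}$. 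Integrating the resulting bound from $0$ to $t$ and absorbing the $s$-dependence into a triple exponential completes the proof.

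\textbf{Main obstacle.} The bookkeeping is the hard part: each of the ten or so monomials hidden in $\cI^{(1)},\cI^{(2)},\cI^{(3)}$ requires its own Cauchy--Schwarz split, and at each step one must decide whether to place the cut-off $\mathds 1(\cN>N^{\eta})$ on the $U$-factor or on the $V$-factor, and how many creation operators to pair with $\eta_s$ versus $\xi_{\eta,s}$, in order to obtain exactly the three $N$-powers $N^{-1/2},\,N^{-\eta/2},\,N^{-\eta}$ stated. The delicate balance comes from the fact that $\|V_s\|^2\sim T_{\mathrm c}$ is of the same order as $\sup_x\g_s(x,x)$ divided by $\sqrt N$, so the naive estimates have to be refined using the diluteness bound of Proposition \ref{lem:diluteness_HFB} whenever two $V$-factors appear contracted through $\tilde\g_s$.
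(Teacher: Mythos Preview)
Your proof sketch is correct and follows essentially the same approach as the paper: Duhamel expansion of the difference of the two dynamics, Cauchy--Schwarz on each monomial of the generator difference, the Chebyshev bound $\mathds 1(\cN>N^\eta)\le(\cN/N^\eta)^k$ on the truncated side combined with Proposition~\ref{lem:gronwall_N}, and the weak a-priori bounds of Lemma~\ref{lem5.1} (applied twice) together with the diluteness estimate of Proposition~\ref{lem:diluteness_HFB} on the full-dynamics side. Your identification of the bookkeeping as the main obstacle is apt; the only minor imprecision is in Step~5, where your attribution of the three $N$-powers to quartic versus cubic terms is slightly scrambled (in the paper the leading $N^{-1/2-\eta k/2}$ prefactor is set by the cubic $\cI^{(3)}$ terms, which carry $N^{-1}\|\pphi\|\sim N^{-1/2}$, while the extra factors $T_{\mathrm c}^{1/2}N^{-\eta/2}$ and $T_{\mathrm c}^{3/4}N^{-\eta}$ arise from commuting one or two $a^*(\overline{V})$ through the cutoff and picking up $\widetilde\gamma_s(\yy,\yy)$ via the diluteness bound), but this does not affect the validity of the argument.
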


\begin{proof}
	We only prove a bound for the first term on the l.h.s. of \eqref{5.69}. A bound for the second term can be obtained with the same arguments. We start by noting that
	\begin{align}
		\big\langle \xt, \cal N \big(\xt-\xat\big) \big\rangle &=	\big\langle \xt, \cal N \big(\cU^{\mathrm{fluct}}(t,0)-\cU^{\mathrm{fluct}}_{\eta}(t,0)\big)\x  \big\rangle \nonumber \\
		&=-\ii \int_0^t \dd s \big\langle \xt, \cal N\cal U^{\mathrm{fluct}}(t,s)\left( \cal G_{N,s}-\cal G_{N,s}^{(\eta)} \right)\cal U^{\mathrm{fluct}}_{\eta}(s,0)\x \big\rangle \nonumber \\
		&=-\ii \int_0^t \dd s \big\langle \xt, \cal N \cal U^{\mathrm{fluct}}(t,s)\left(\sum_{i=1}^3\Iplus{i}\right)\cal U^{\mathrm{fluct}}_{\eta}(s,0)\x \big\rangle\,, \label{eq:comparison_1}
	\end{align} where $\Iplus{i}$ is defined exactly as $\cI_{N,t}^{(i,\eta)}$, with $\id(\cN\le N^\eta )$ replaced by $\id(\cN> N^\eta )$. The contribution from the term with four $U$ operators appearing in $\Iplus{1}$ can be estimated by
	\begin{align} 
		&\frac{2}{N}\int_0^t \int \Big|v(\xx, \yy)\big\langle\xt,\cN\cU^{\mathrm{fluct}}(t,s) a^*(\Uxs)a^*(\Uys)a(\Uys)a(\Uxs)\mathds{1}(\mathcal{N} > N^{\eta})\xas\big \rangle\Big| \de\xx\de\yy \de s \nonumber \\
		&\hspace{1cm}\le \frac{2}{N}\int_0^t \Bigg(\int  |v(\xx, \yy)| \ \big\langle\xt,\cN\cU^{\mathrm{fluct}}(t,s) a^*(\Uxs)a^*(\Uys)a(\Uys)a(\Uxs)\cU^{\mathrm{fluct}}(t,s)^*\cN\xt\big\rangle \de\xx\de\yy\Bigg)^{1/2} \nonumber \\
		&\hspace{1.5cm}\times \Bigg(\int  |v(\xx, \yy)| \ \big\langle\xas,\mathds{1}(\mathcal{N} > N^{\eta})a^*(\Uxs)a^*(\Uys)a(\Uys)a(\Uxs)\mathds{1}(\mathcal{N} > N^{\eta})\xas\big \rangle \de\xx\de\yy \Bigg)^{1/2} \de s\,. \label{eq:comparison_1a}
	\end{align} 
	An application of Lemma~\ref{lem5.1} shows that
	\begin{align} 
		&\int  |v(\xx, \yy)| \ \big\langle\xt,\cN\cU^{\mathrm{fluct}}(t,s) a^*(\Uxs)a^*(\Uys)a(\Uys)a(\Uxs)\cU^{\mathrm{fluct}}(t,s)^*\cN\xt\big\rangle \de\xx\de\yy \nonumber \\
		&\hspace{2cm}\le \|v\|_\infty\|U_s\|^4\big\langle\xt,\cN\cU^{\mathrm{fluct}}(t,s) \cN^2 \cU^{\mathrm{fluct}}(t,s)^*\cN\xt\big\rangle \nonumber \\
		&\hspace{2cm}\lesssim \|v\|_\infty C_s^2 \left[ C_s^2 C_t^2 \langle \xt,\cN^{4} \xt \rangle  +  C_t^2N^2 \langle \xt, \mathcal{N}^{2} \xt \rangle \right] \nonumber \\
		&\hspace{2cm}\lesssim \|v\|_\infty C_s^2 \left[ \left( C_s^2 C_t^{6} C_0^{4} + C_t^{6} C_0^{4} \right) \langle \xi,\cN^{4} \xi \rangle + \left( C_s^2 C_t^{6} + C_t^{4} \right) N^{4} \right] \,.
		\label{eq:comparison 1b}
	\end{align} 	
	To bound the second factor, we use
	\begin{equation} \label{eq:trick_inequality}
		\mathds{1}(\mathcal{N} > N^{\eta}) \leq (\cal N/N^{\eta})^k 
	\end{equation}
	with $k \in \NN\setminus \{0\}$. Using \eqref{eq:trick_inequality} and Lemma~\ref{lem:gronwall_N}, we obtain
	\begin{align} 
		&\int |v(\xx, \yy)| \ \big\langle\xas,\mathds{1}(\mathcal{N} > N^{\eta})a^*(\Uxs)a^*(\Uys)a(\Uys)a(\Uxs)\mathds{1}(\mathcal{N} > N^{\eta})\xas\big \rangle \de\xx\de\yy \nonumber \\
		&\hspace{1.5cm}\le \|v\|_\infty \|U_s\|^4 N^{-\eta k}\big\langle\xas,\cN^{k+2}\xas\big\rangle \le \|v\|_\infty \|U_s\|^4 N^{-\eta k} \exp(c\exp(c\exp(cs)))\langle\x,\cN^{k+2}\x \rangle\,. \label{eq:comparison1c}	
	\end{align} 
	We note that Lemma~\ref{lem:closeness_dynamics_V} implies $C_t \lesssim T_{\mathrm{c}} \exp(ct)$. When combined with \eqref{eq:comparison 1b} and \eqref{eq:comparison1c}, this gives
	\begin{equation}
		\eqref{eq:comparison_1a} \lesssim \frac{ \|v\|_\infty T_{\mathrm{c}}}{N^{1+\eta k /2}} \exp(c\exp(c\exp(ct))) \langle \xi, \mathcal{N}^{k+2} \xi \rangle^{1/2} \left[ T_{\mathrm{c}}^{7} \langle \xi, \mathcal{N}^{4} \xi \rangle^{1/2} + T_{\mathrm{c}}^{5} N^{2} \right] \,. \label{eq:comparison1aa}
	\end{equation}
	When inserted into \eqref{eq:comparison_1}, the other terms in $\Iplus{1}$ can be estimated in the same way by the r.h.s. of \eqref{eq:comparison1aa}.
	
	Next, we consider the terms arising from $\Iplus{2}$. We start by considering the term with four creation operators. We bound its contribution to \eqref{eq:comparison_1} by 
	\begin{align} 
		&\frac{2}{N}\int_0^t \int  \Big|v(\xx, \yy)\big\langle\xt,\cN\cU^{\mathrm{fluct}}(t,s) a^*(\Uxs)a^*(\Uys)a^*(\Vys)a^*(\Vxs)\mathds{1}(\mathcal{N} > N^{\eta})\xas\big \rangle\Big| \de\xx\de\yy \de s \nonumber \\
		&\hspace{0.5cm}\le \frac{2}{N}\int_0^t \Bigg(\int  |v(\xx, \yy)| \ \big\langle\xt,\cN\cU^{\mathrm{fluct}}(t,s) a^*(\Uxs)a^*(\Uys)a(\Uys)a(\Uxs)\cU^{\mathrm{fluct}}(t,s)^*\cN\xt\big\rangle \de\xx\de\yy\Bigg)^{1/2} \nonumber \\
		&\hspace{1cm}\times \Bigg(\int |v(\xx, \yy)| \ \big\langle\xas,\mathds{1}(\mathcal{N} > N^{\eta})a(\Vxs)a(\Vys)a^*(\Vys)a^*(\Vxs)\mathds{1}(\mathcal{N} > N^{\eta})\xas\big \rangle \de\xx\de\yy \Bigg)^{1/2} \de s\,. \label{eq:comparison_2}
	\end{align} 
	A bound for the first factor has been obtained in \eqref{eq:comparison 1b}. To estimate the other factor, we use the identity \eqref{eq:commutatorsVVVV}. The normal-ordered term has already been considered in \eqref{eq:comparison1c}. We use Proposition~\ref{lem:diluteness_HFB} and estimates similar as the ones above to bound the remaining contributions by
	\begin{equation*}
		\Vert v \Vert_1 \exp(c \exp(c\exp(cs))) \left[ T_{\mathrm{c}}^{5/2} N^{-\eta (k+1)} +  T_{\mathrm{c}}^{3/2} N^{1-\eta (k+2)} \right] \langle \xi, \mathcal{N}^{k+2} \xi \rangle\,.
	\end{equation*}
	Hence, we have 
	\begin{align*}
		\eqref{eq:comparison_2}\lesssim& \frac{\Vert v \Vert_1 + \Vert v \Vert_{\infty}}{N^{1+\eta k /2}} \exp(c \exp(ct\exp(ct))) \left[ T_{\mathrm{c}}^{7} \langle \xi, \mathcal{N}^{4} \xi \rangle^{1/2} + T_{\mathrm{c}}^{5} N^{2} \right] \\
		&\times \Big[  T_{\mathrm{c}} + T_{\mathrm{c}}^{5/4} N^{-\eta/2} +  N^{-\eta} T_{\mathrm{c}}^{3/4} N^{1/2} \Big] \langle \xi, \mathcal{N}^{k+2} \xi \rangle^{1/2} \,.
	\end{align*} 
	The remaining terms in $\Iplus{2}$ can be estimated analogously.
	
	We are left to consider the contributions from $\Iplus{3}$. We insert the term with three creation operators into \eqref{eq:comparison_2} and find
	\begin{align*}
		&\Bigg|	\frac{3}{N} \int_0^t \int  v(\xx, \yy)\pphi_s(\xx) \ \big\langle\xt,\cN\cU^{\mathrm{fluct}}(t,s) a^*(\Uxs)a^*(\Uys)a^*(\Vys)\mathds{1}(\mathcal{N} > N^{\eta})\xas\big \rangle \de\xx\de\yy \de s \Bigg|\\
		&\hspace{3cm} \le \frac{3}{N}\int_0^t \Bigg(\int |v(\xx, \yy)| \ \big\| a(\Uys)a(\Uxs)\cU^{\mathrm{fluct}}(t,s)^*\cN\xt  \big\|^2 \de\xx\de\yy \Bigg)^{1/2} \\
		&\hspace{4cm} \times\Bigg(\int |v(\xx, \yy)| \ |\pphi_s(\xx)|^2 \big\| a^*(\Vys)\id(\cN> N^\eta )\xas \big\|^2 \de\xx\de\yy \Bigg)^{1/2} \de s \\
		&\hspace{0.5cm}\lesssim \frac{(\Vert v \Vert_1 + \Vert v \Vert_{\infty}) }{N^{1/2+\eta k/2}} \exp(c \exp(c\exp(ct))) \left[ T_{\mathrm{c}}^{7} \langle \xi, \mathcal{N}^{7} \xi \rangle^{1/2} + T_{\mathrm{c}}^{5} N^{2} \right] \\
		&\hspace{1cm}\times \left[ T_{\mathrm{c}}^{1/2} N^{-\eta/2}  + T_{\mathrm{c}}^{3/4}  N^{-\eta} \right] \langle \xi, \mathcal{N}^{k+2} \xi \rangle^{1/2} \,.
	\end{align*}
	The remaining terms in $\Iplus{3}$ can be estimated in a similar way by the r.h.s. of the above equation.
	
	Collecting the above bounds, and using $N^{-1/2}T_{\mathrm c}^{3/4}=N^{-1/2} T_{\mathrm{c}}^{3/4}(s) \lesssim 1$ (which follows from the condition $s \leq 3/2$), we obtain
	\begin{align*}
		\big|	\big\langle \xt, \cal N \big(\xt-\xat\big) \big\rangle \big| \lesssim& N^{-1/2 - \eta k/2} \exp(c \exp(c\exp(ct))) \left( T_{\mathrm{c}}^{7} \langle \xi, \mathcal{N}^{4} \xi \rangle^{1/2} + T_{\mathrm{c}}^{5} N^{2} \right) \\
		&\times \left( T_{\mathrm{c}} N^{-1/2} + T_{\mathrm{c}}^{1/2} N^{-\eta/2} + T_{\mathrm{c}}^{3/4} N^{-\eta} \right)  \big\langle \x, \cN^{k+2}\x \big\rangle^{1/2}\,.
	\end{align*}
\end{proof}
	\subsection{Proof of Proposition \ref{mainprop}}
	\label{sec:proofmainprop}
	We are now ready to prove our final bound for the growth of the number of excitations in the fluctuation vector. For $\eta\in (0,1/7]$ we have, by Proposition~\ref{lem:gronwall_N} and Lemma~\ref{lem5.3},
	\begin{align*}
		\big\langle\xt,\cN\xt\big\rangle \le & \big\langle\xat,\cN\xat\big\rangle + \Big|\big\langle\xat,\cN(\xt-\xat)\big\rangle\Big| +	\Big|\big\langle(\xt-\xat),\cN\xt\big\rangle\Big| \\
		\lesssim &  \exp(c \exp(c\exp(ct))) \big\langle\x,\cN\x\big\rangle + N^{-1/2 - \eta k/2} \exp(c \exp(c\exp(ct))) \\ \label{5.69}
		&\qquad\times \left( T_{\mathrm{c}}^{7} \langle \xi, \mathcal{N}^{4} \xi \rangle^{1/2} + T_{\mathrm{c}}^{5} N^{2} \right) \left( T_{\mathrm{c}} N^{-1/2} + T_{\mathrm{c}}^{1/2} N^{-\eta/2} + T_{\mathrm{c}}^{3/4} N^{-\eta} \right)  \big\langle \x, \cN^{k+2}\x \big\rangle^{1/2} \,.
	\end{align*} Choosing $\eta = 1/7$ and using $T_{\mathrm c}= T_{\mathrm c}(s)\le T_{\mathrm c}(3/2)\lesssim N^{2/7}$ we arrive at \eqref{eq:Gronwall_bound_main}. 
	
	\textbf{Acknowledgments.}
	It is a pleasure to thank Vedran Sohinger for discussions related to Remark~\ref{remark:HartreeH2}. A.~D. would also like to thank Robert Seiringer and Jakob Yngvason for inspiring discussions. A. D. gratefully acknowledges funding from the European Union’s Horizon 2020 research and innovation programme under the Marie Sklodowska-Curie grant agreement No 836146 and from the Swiss National Science Foundation through the Ambizione grant PZ00P2 185851. B. S. gratefully acknowledges partial support from the NCCR SwissMAP and from the Swiss National Science Foundation through the Grant ``Dynamical and energetic properties of Bose-Einstein condensates’’. B. S. and M. C. gratefully acknowledge partial support from the European Research Council through the ERC-AdG CLaQS.

\vspace{0.5cm}
(Marco Caporaletti) Institute of Mathematics, University of Zurich \\ 
Winterthurerstrasse 190, 8057 Zurich, Switzerland \\ 
E-mail address: \texttt{marco.caporaletti@math.uzh.ch}\\
\\
(Andreas Deuchert) Institute of Mathematics, University of Zurich \\ 
Winterthurerstrasse 190, 8057 Zurich, Switzerland \\ 
E-mail address: \texttt{andreas.deuchert@math.uzh.ch}\\
\\
(Benjamin Schlein) Institute of Mathematics, University of Zurich \\ 
Winterthurerstrasse 190, 8057 Zurich, Switzerland \\ 
E-mail address: \texttt{benjamin.schlein@math.uzh.ch}


\begin{thebibliography}{49}
\addcontentsline{toc}{section}{References}


\bibitem{AmmaFalPawi2016} Z. Ammari, M. Falconi, B. Pawilowski, \textit{On the rate of convergence for the mean-field approximation of many-body quantum dynamics}, Comm. Math. Sci. \textbf{14}, 1417 (2016)

\bibitem{AmmaNier2009} Z. Ammari, F. Nier, \textit{Mean-field limit for bosons and propagation of Wigner measures}, J. Math. Phys. \textbf{50}, 042107 (2009)



\bibitem{ArakiWoods1963} H. Araki, E. J. Woods, \textit{Representations of the canonical commutation relations describing a non-relativistic infinite free Bose gas}, J. Math. Phys. \textbf{4}, 637 (1963)


\bibitem{BachBretaux} V. Bach, S. Breteaux, T. Chen, J. Fröhlich, I. M. Sigal, \textit{The time-dependent Hartree-Fock-Bogoliubov equations for Bosons}, preprint arXiv:1602.05171 (2018)

\bibitem{BaGolMau2000} C. Bardos, F. Golse, N. J. Mauser, \textit{Weak coupling limit of the $N$-particle Schr\"odinger equation}, Methods Appl. Anal. \textbf{2}, 275 (2000)


\bibitem{BenKirkSchl2013} G. Ben Arous, K. Kirkpatrick, B. Schlein, \textit{A central limit theorem in many-body quantum dynamics}, Comm. Math. Phys. \textbf{321}, 371 (2013)

\bibitem{Fermions} N. Benedikter, V. Jakšić, M. Porta, C. Saffirio, B. Schlein, \textit{Mean-field Evolution of Fermionic Mixed States}, Comm. Pure Appl. Math. \textbf{69}, no. 12, 2250 (2016).

\bibitem{BenOlivSchl2015} N. Benedikter, G. de Oliveira, B. Schlein, \textit{Quantitative Derivation of the Gross-Pitaevskii Equation}, Comm. Pure Appl. Math. \textbf{68}, 1399 (2015)

\bibitem{BenPorSchl2015} N. Benedikter, M. Porta, B. Schlein, \textit{Effective Evolution Equations from Quantum Dynamics}, Springer, Berlin (2016) 	

\bibitem{BenedikterSolovej_Bog_de_Gennes} N. Benedikter, J. Sok, J.P. Solovej, \textit{The Dirac–Frenkel Principle for Reduced Density Matrices, and the Bogoliubov–de Gennes Equations}, Ann. Henri Poincaré \textbf{19}, 1167 (2018)





\bibitem{BocCenaSchl2017} C. Boccato, S. Cenatiempo, B. Schlein, \textit{Quantum many-body fluctuations around nonlinear Schr\"odinger dynamics}, Ann. Henri Poincar\'e \textbf{18} (1), 113 (2017)

\bibitem{BossPavPicklSoff2020} L. Bossmann, N. Pavlovi\'c, P. Pickl, A. Soffer, \textit{Higher Order Corrections to the Mean-Field Description of the Dynamics of Interacting Bosons}, J. Stat. Phys. \textbf{178}, 1362 (2020)


\bibitem{BraLie1976} H. J. Brascamp, E. H. Lieb, \textit{On Extensions of the Brunn-Minkowski and Pr\'ekopa-Leindler Theorems, Including Inequalities for Log Concave Functions, and with an Application to the Diffusion Equation}, J. Funct. Anal. \textbf{22}, 366 (1976)


\bibitem{BreNamNapSchl2018} C. Brennecke, P. T. Nam, M. Napi\'orkowski, B. Schlein, \textit{Fluctuations of N-particle quantum dynamics around the nonlinear Schr\"odinger equation}, Ann. I. H. Poincar\'e-A.n. \textbf{36} (5), 1201 (2019)

\bibitem{BreSchl2019} C. Brennecke, B. Schlein, \textit{Gross–Pitaevskii dynamics for Bose–Einstein condensates}, Anal. PDE \textbf{12}, No. 6, 1513 (2019)




\bibitem{BuSafSchl2014} S. Buchholz, C. Saffirio, B. Schlein, \textit{Multivariate central limit theorem in quantum dynamics}, J. Stat. Phys. \textbf{154}, 113 (2014)


\bibitem{CheHaiPavlSei2014} T. Chen, C. Hainzl, N. Pavlović, R. Seiringer, \textit{On the Well-Posedness and Scattering for the Gross–Pitaevskii Hierarchy via Quantum de Finetti}, LMP \textbf{104}, 871 (2014)

\bibitem{CheHaiPavlSei2015} T. Chen, C. Hainzl, N. Pavlović, R. Seiringer, \textit{Unconditional Uniqueness for the Cubic Gross-Pitaevskii Hierarchy via Quantum de Finetti}, Comm. Pure Appl. Math. \textbf{68}, Issue 10, 1845 (2015)	


\bibitem{CheHolm2016a} X. Chen, J. Holmer, \textit{On the Klainerman–Machedon conjecture for the quantum BBGKY hierarchy with self-interaction}, J. Eur. Math. Soc. \textbf{18} (6), 1161 (2016)

\bibitem{CheHolm2016b} X. Chen, J. Holmer, \textit{Correlation structures}, many-body scattering processes and the derivation of the Gross--Pitaevskii hierarchy, Int. Math. Res. Notices no. \textbf{10}, 3051 (2016)


\bibitem{ChenLeeSchl2011} L. Chen, J. Oon Lee, B. Schlein, \textit{Rate of convergence towards Hartree dynamics}, J. Stat. Phys. \textbf{144}, no. 4, 872 (2011)


\bibitem{ChonZhao2020} J. J. W. Chong, Z. Zhao, \textit{Dynamical Hartree–Fock–Bogoliuobv approximation of interacting bosons}, Ann. Henri Poincar\'e \textbf{23}, 615 (2022)


\bibitem{DerGer2013} J. Derezi\'nski, C. G\'erard, \textit{Mathematics of quantization and quantum fields}, Cambridge University Press, Cambridge (2013)



\bibitem{DeuSei2021} A. Deuchert, R. Seiringer, \textit{Semiclassical approximation and critical temperature shift for weakly interacting trapped bosons}, J. Funct. Anal. \textbf{281}, Issue 6 (2021)



\bibitem{ElgSchlein2007} A. Elgart, B. Schlein, \textit{Mean-field dynamics for boson stars}, Comm. Pure Applied Math. \textbf{60}, 500 (2007)

\bibitem{ErdSchlYau2006} L. Erd\"os, B. Schlein, H.-T. Yau, \textit{Derivation of the cubic nonlinear Schr\"odinger equation from quantum dynamics of many-body systems}, Inv. Math. \textbf{167}, 515 (2006)

\bibitem{ErdSchlYau2007} L. Erd\"os, B. Schlein, H.-T. Yau, \textit{Rigorous derivation of the Gross-Pitaevskii equation}, Phys. Rev. Lett. \textbf{98}, no. 4, 040404 (2007)

\bibitem{ErdSchlYau2009} L. Erd\"os, B. Schlein, H.-T. Yau, \textit{Rigorous derivation of the Gross-Pitaevskii equation with a large interaction potential}, J. Amer. Math. Soc. \textbf{22}, 1099 (2009)

\bibitem{ErdSchlYau2010} L. Erd\"os, B. Schlein, H.-T. Yau, \textit{Derivation of the Gross-Pitaevskii equation for the dynamics of Bose-Einstein condensate}, Ann. of Math. \textbf{172}, 291 (2010)

\bibitem{ErdYau2001} L. Erd\"os, H.-T. Yau, \textit{Derivation of the nonlinear Schrödinger equation from a many-body Coulomb system}, Adv. Theor. Math. Phys. \textbf{5}, 1169 (2001)




\bibitem{FroKnoPizzo2001} J. Fr\"ohlich, A. Knowles, A. Pizzo, \textit{Atomism and quantization}, J. Phys. A: Math. Theor. \textbf{40}, 3033 (2007)

\bibitem{FroKnoSchleinSoh2017} J. Fröhlich, A. Knowles, B. Schlein, V. Sohinger, \textit{Gibbs Measures of Nonlinear Schrödinger Equations as Limits of Many-Body Quantum States in Dimensions $d \le 3$}, Comm. Math. Phys. \textbf{356}, 883 (2017)

\bibitem{FroKnoSchleinSoh2019} J. Fröhlich, A. Knowles, B. Schlein, V. Sohinger, \textit{A microscopic derivation of time-dependent correlation functions of the 1D cubic nonlinear Schrödinger equation}, Adv. Math. \textbf{353}, 67 (2019)

\bibitem{FroKnoSchleinSoh2021} J. Fröhlich, A. Knowles, B. Schlein, V. Sohinger, \textit{The mean-field limit of quantum Bose gases at positive temperature}, J. Amer. Math. Soc. (2021)

\bibitem{FroKnoSchleinSoh2022} J. Fr\"ohlich, A. Knowles, B. Schlein, V. Sohinger, \textit{The Euclidean $\phi_2^4$ theory as a limit of an interacting Bose gas}, Preprint arXiv: 2201.07632 (2022)

\bibitem{FroKnoSchwarz2009} J. Fröhlich, A. Knowles, S. Schwarz, \textit{On the mean-field limit of bosons with Coulomb two-body interaction}, Comm. Math. Phys. \textbf{288}, 1023 (2009)

\bibitem{GinVel1979a} J. Ginibre, G. Velo, \textit{The classical field limit of scattering theory for nonrelativistic many-boson systems. I}, Comm. Math. Phys. \textbf{66}, 37 (1979)

\bibitem{GinVel1979b} J. Ginibre, G. Velo, \textit{The classical field limit of scattering theory for nonrelativistic many-boson systems. II}, Comm. Math. Phys. \textbf{68}, 45 (1979)


\bibitem{GrillMacheMarg2010} M. Grillakis, M. Machedon, D. Margetis, \textit{Second-order corrections to mean-field evolution of weakly interacting bosons. I}, Comm. Math. Phys. \textbf{294}, no. 1, 273 (2010)

\bibitem{GrillMacheMarg2011} M. Grillakis, M. Machedon, D. Margetis, \textit{Second-order corrections to mean-field evolution of weakly interacting bosons. II}, Adv. Math. \textbf{228}, no. 3, 1788 (2011)

\bibitem{GriMach2013} M. Grillakis and M. Machedon, \textit{Pair excitations and the mean field approximation of interacting Bosons, I}, Comm. Math. Phys. \textbf{324}, 601 (2013)

\bibitem{GriMach2017} M. Grillakis, M. Machedon, \textit{Pair excitations and the mean field approximation of interacting Bosons, II}, Comm. PDE \textbf{42}, 24 (2017)

\bibitem{GriMach2019} M. Grillakis, M.Machedon, \textit{Uniform in N estimates for a Bosonic system of Hartree–Fock–Bogoliubov type}, Comm. Partial Differ. \textbf{44} (12), 1431 (2019)




\bibitem{Hepp1974}  K. Hepp, \textit{The classical limit for quantum mechanical correlation functions}, Comm. Math. Phys. \textbf{35}, 265 (1974).

\bibitem{JeblLeopPickl2019} M. Jeblick, N. Leopold, P. Pickl, \textit{Derivation of the Time Dependent Gross-Pitaevskii Equation in Two Dimensions}, Comm. Math. Phys. \textbf{372}, 1 (2019)

\bibitem{Levy} D. Khoshnevisan, R. L. Schilling, \textit{From Lévy-Type Processes to Parabolic SPDEs}, Advanced Courses
in Mathematics CRM Barcelona, Birkh\"auser, Cham (2017)

\bibitem{KirpSchlStaff2011} K. Kirkpatrick, B. Schlein, G. Staffilani, \textit{Derivation of the two-dimensional nonlinear Schr\"odinger equation from many body quantum dynamics}, Amer. J. Math. \textbf{133} (1), 91 (2011)

\bibitem{KlaiMach2008} S. Klainerman, M. Machedon, \textit{On the uniqueness of solutions to the Gross-Pitaevskii hierarchy}, Comm. Math. Phys. \textbf{279}, 169 (2008)

\bibitem{KnowlesPickl2010} A. Knowles, P. Pickl, \textit{Mean-field dynamics: singular potentials and rate of convergence}, Comm. Math. Phys. \textbf{298}, 101 (2010)

\bibitem{Kuz2017} E. Kuz, \textit{Exact evolution versus mean field with second-order correction for bosons interacting via short-range two-body potential}, Diff. Int. Eq. \textbf{30} (7-8), 587 (2017)

\bibitem{LewNamRou2014} M. Lewin, P. T. Nam, N. Rougerie, \textit{Derivation of Hartree’s theory for generic mean-field Bose systems}, Adv. Math. \textbf{254}, 570 (2014)

\bibitem{LewNamRou2015} M. Lewin, P.T. Nam, N. Rougerie, \textit{Derivation of nonlinear Gibbs measures from many-body quantum mechanics}, J. \'Ec. polytech. Math. \textbf{2}, 65 (2015)


\bibitem{LewNamRou2021} M. Lewin, P.T. Nam, N. Rougerie, \textit{Classical field theory limit of many-body quantum Gibbs states in 2D and 3D}, Inv. Math. \textbf{224}, 315 (2021)

\bibitem{LewNamSchlein2015} M. Lewin, P. T. Nam, B. Schlein, \textit{Fluctuations around Hartree states in the mean-field regime}, Amer. J. Math. \textbf{137}, 1613 (2015)



\bibitem{LiebSei2002} E. H. Lieb, R. Seiringer, \textit{Proof of Bose-Einstein Condensation for Dilute Trapped Gases}, Phys. Rev. Lett. \textbf{88}, 170409 (2002)



\bibitem{LiebSeiYng2000} E. H. Lieb, R. Seiringer, J. Yngvason, \textit{Bosons in a trap: A rigorous derivation of the Gross-Pitaevskii energy functional}, Phys. Rev. A \textbf{61}, 043602 (2000)





\bibitem{Nam2020} P. T. Nam, \textit{Mathematical quantum mechanics II}, Lecture Notes 2020, Available at \href{https://www.math.lmu.de/~nam/LectureNotesMQM2020.pdf}{https://www.math.lmu.de/~nam/LectureNotesMQM2020.pdf}

\bibitem{NamNap2017a} P. T. Nam, M. Napi\'orkowski, \textit{Bogoliubov correction to the mean-field dynamics of interacting bosons}, Adv. Theor. Math. Phys. \textbf{21} (3), 683 (2017)

\bibitem{NamNap2017b} P. T. Nam, M. Napi\'orkowski, \textit{A note on the validity of Bogoliubov correction to mean-field dynamics}, J. Math. Pures Appl. \textbf{108} (5), 662 (2017)





\bibitem{Nap2021} M. Napi\'{o}rkowski, \textit{Dynamics of interacting bosons: a compact review}, preprint arXiv:2101.04594 [math-ph] (2021)

\bibitem{Pickl2015}  P. Pickl, \textit{Derivation of the time dependent Gross–Pitaevskii equation with external fields}, Reviews in Mathematical Physics \textbf{27}, no. 1, 1550003 (2015)

\bibitem{RS1} M. Reed, B. Simon, \textit{Methods of modern mathematical physics I, Functional Analysis}, Academic Press, San Diego (1980)

\bibitem{RS2} M. Reed, B. Simon, \textit{Methods of modern mathematical physics II, Fourier Analysis, Self-Adjointness}, Academic Press, Dan Diego (1975)

\bibitem{RS4} M. Reed, B. Simon, \textit{Methods of modern mathematical physics IV, Analysis of Operators}, Academic Press, San Diego (1980)

\bibitem{RodSchlein2009} I. Rodnianski, B. Schlein, \textit{Quantum fluctuations and rate of convergence towards mean field dynamics}, Comm. Math. Phys. \textbf{291}, no. 1, 31 (2009)

\bibitem{Rou2015} N. Rougerie, \textit{De Finetti theorems, mean-field limits and Bose-Einstein condensation}, Lecture notes, arXiv:1506.05263 (2015)





\bibitem{Soh2011} V. Sohinger, \textit{Bounds on the growth of high Sobolev norms of solutions to nonlinear Schr\"odinger equations}, \href{https://warwick.ac.uk/fac/sci/maths/people/staff/vedran_sohinger/thesis.pdf} {Ph.D. Thesis}, MIT (2011)

\bibitem{Solovej} J.P. Solovej, \textit{Many Body Quantum Mechanics}, Lecture Notes, Summer 2007. Available at \href{https://www.mathematik.uni-muenchen.de/~sorensen/Lehre/SoSe2013/MQM2/skript.pdf}{https://www.mathematik.uni-muenchen.de/~sorensen/Lehre/SoSe2013/MQM2/skript.pdf} 

\bibitem{Spohn80} H. Spohn, \textit{Kinetic equations from Hamiltonian dynamics: Markovian limits}, Rev. Modern Phys. \textbf{52}, 569 (1980)

\bibitem{GronwallQuad} J.R.L. Webb, \textit{Extensions of Gronwall's inequality with quadratic growth terms and applications}, Electron. J. Qual. Theory Differ. Equ. No. \textbf{61}, 1 (2018)



\end{thebibliography}
\end{document}